\keywords{neighborhood frames, coalgebra, J\'{o}nsson-Tarski duality, Thomason duality}
  \theoremstyle{theorem}
    \newtheorem{claim}[thm]{Claim}
\setlist[enumerate]{topsep=10pt}
  \DeclareFontFamily{U}{mathc}{}
  \DeclareFontShape{U}{mathc}{m}{it}%
  {<->s*[1.03] mathc10}{}
  \DeclareMathAlphabet{\mathfun}{U}{mathc}{m}{it}
\newsavebox{\foobox}
\newcommand{\slantbox}[2][.5]{\mbox{%
        \sbox{\foobox}{#2}%
        \hskip\wd\foobox
        \pdfsave
        \pdfsetmatrix{1 0 #1 1}%
        \llap{\usebox{\foobox}}%
        \pdfrestore
}}
\newcommand\rmathcal[2][0mu]{\ThisStyle{\reflectbox{\slantbox[-.55]{%
  $\SavedStyle\mathfun{\mkern#1#2\mkern-#1}$}}}}
\newcommand{\mc}[1]{\mathcal{#1}}
\newcommand{\mb}[1]{\mathbb{#1}}
\newcommand{\mf}[1]{\mathfrak{#1}}
\renewcommand{\sf}[1]{\mathsf{#1}}
\newcommand{\topo}[1]{\mathbb{#1}}  
\newcommand{\cat}[1]{\mathsf{#1}}   
\newcommand{\fun}[1]{\mathfun{#1}}  
\renewcommand{\iff}{\quad\text{iff}\quad}
\newcommand{\ov}[1]{\overline{#1}}
\renewcommand{\hat}[1]{\widehat{#1}}
\renewcommand{\phi}{\varphi}
\renewcommand{\epsilon}{\varepsilon}
\renewcommand{\theta}{\vartheta}
\renewcommand{\emptyset}{\varnothing}
\newcommand{\vartri}{{\vartriangle}}
\newcommand{\contra}{\rmathcal[3.5mu]{P}}
\newcommand{\tbigvee}{{\textstyle\bigvee}}
\DeclareMathOperator{\op}{op}
\DeclareMathOperator{\Card}{\mathsf{Card}}
\DeclareMathOperator{\odd}{odd}
\DeclareMathOperator{\Ax}{Ax}
\newcommand{\llb}{\llbracket}
\newcommand{\rrb}{\rrbracket}
\newcommand{\goodbox}{\hspace{.2ex}\text{%
  \tikz[baseline=-.6ex, rounded corners=.01ex, line width=.1ex]
    {\draw (-.6ex,-.6ex) rectangle (.6ex,.6ex);}}\kern.2ex}
\newcommand{\gooddiamond}{\hspace{.2ex}\text{%
  \tikz[baseline=-.6ex, rounded corners=.01ex, rotate=45, line width=.1ex]
    {\draw (-.5ex,-.5ex) rectangle (.5ex,.5ex);}}\kern.2ex}
\renewcommand{\Box}{\goodbox}
\newcommand{\dbox}{\hspace{.2ex}\text{%
  \tikz[baseline=-.6ex, rounded corners=.01ex, line width=.1ex]
    {\draw (-.6ex,-.6ex) rectangle (.6ex,.6ex);
     \draw[fill=black] (0,0) circle(.12ex);}}\kern.2ex}
\newcommand{\ddiamond}{\hspace{.2ex}\text{%
  \tikz[baseline=-.6ex, rounded corners=.01ex, rotate=45, line width=.1ex]
    {\draw (-.5ex,-.5ex) rectangle (.5ex,.5ex);
     \draw[fill=black] (0,0) circle(.12ex);}}\kern.2ex}
\newcommand{\lbox}{\hspace{.2ex}\text{%
  \tikz[baseline=-.6ex, rounded corners=.01ex, line width=.1ex]
    {\draw (-.6ex,-.6ex) rectangle (.6ex,.6ex);
     \draw (-.6ex,0) -- (.6ex,0);}}\kern.2ex}
\newcommand{\ldiamond}{\hspace{.2ex}\text{%
  \tikz[baseline=-.6ex, rounded corners=.01ex, rotate=45, line width=.1ex]
    {\draw (-.5ex,-.5ex) rectangle (.5ex,.5ex);
     \draw (-.5ex,.5ex) -- (.5ex,-.5ex);}}\kern.2ex}
\newcommand{\axitem}[2]{\item[\hypertarget{#1}{$\mathsf{(#2)}$}]}
\newcommand{\axCkap}{\text{\hyperlink{ax:Ckap}{$\mathsf{(C_{\kappa})}$}}}
\newcommand{\axC}{\text{\hyperlink{ax:C}{$\mathsf{(C)}$}}}
\newcommand{\axM}{\text{\hyperlink{ax:M}{$\mathsf{(M)}$}}}
\newcommand{\axN}{\text{\hyperlink{ax:N}{$\mathsf{(N)}$}}}
\newcommand{\axConv}{\text{\hyperlink{ax:conv}{$\mathsf{(Conv)}$}}}
\newcommand{\axCoConv}{\text{\hyperlink{ax:coconv}{$\mathsf{(CoConv)}$}}}
\newcommand{\axCont}{\text{\hyperlink{ax:cont}{$\mathsf{(Cont)}$}}}
\newcommand{\axCent}{\text{\hyperlink{ax:cent}{$\mathsf{(Cent)}$}}}
\newcommand{\axT}{\text{\hyperlink{ax:T}{$\mathsf{(T)}$}}}
\newcommand{\axiv}{\text{\hyperlink{ax:iv}{$\mathsf{(iv)}$}}}
\newcommand{\axfour}{\text{\hyperlink{ax:four}{$\mathsf{(4)}$}}}
\begin{document}

\title[A Coalgebraic Approach to Dualities for Neighborhood Frames]{A Coalgebraic Approach to Dualities for Neighborhood Frames}

\author[G.~Bezhanishvili]{Guram Bezhanishvili}[a]
\address{
         New Mexico State University, Las Cruces, USA}
\email{guram@nmsu.edu}

\author[N.~Bezhanishvili]{Nick Bezhanishvili}[b]
\address{
         University of Amsterdam, Amsterdam, The Netherlands}
\email{n.bezhanishvili@uva.nl}

\author[J.~de Groot]{Jim de Groot\lmcsorcid{0000-0003-1375-6758}}[c]
\address{
          The Australian National University, Canberra, Australia}
\email{jim.degroot@anu.edu.au}

\begin{abstract}
We develop a uniform coalgebraic approach to J\'{o}nsson-Tarski and Thomason type dualities for various classes of neighborhood frames and neighborhood algebras.
In the first part of the paper we construct an endofunctor on the category of complete and atomic Boolean algebras that is dual to the double powerset functor on $\cat{Set}$. This allows us to show that Thomason duality for neighborhood frames can be viewed as an algebra-coalgebra duality. We generalize this approach to any class of algebras for an endofunctor presented by one-step axioms in the language of infinitary modal logic. As a consequence, we obtain a uniform approach to dualities for various classes of neighborhood frames, including monotone neighborhood frames, pretopological spaces, and topological spaces. 

In the second part of the paper we develop a coalgebraic approach to J\'{o}nsson-Tarski duality for neighborhood algebras and descriptive neighborhood frames. We introduce an analogue of the Vietoris endofunctor on the category of Stone spaces and show that descriptive neighborhood frames are isomorphic to coalgebras for this endofunctor. This allows us to obtain a coalgebraic proof of the duality between descriptive neighborhood frames and neighborhood algebras. Using one-step axioms in the language of finitary modal logic, we restrict this duality to other classes of neighborhood algebras  studied in the literature, including monotone modal algebras and contingency algebras.

We conclude the paper by connecting the two types of dualities via canonical extensions, and discuss when these extensions are functorial. 
\end{abstract}

\maketitle

\tableofcontents

\section{Introduction}

Categorical dualities linking algebra and topology (such as Stone's seminal duality for Boolean algebras \cite{Sto36} and distributive lattices \cite{Sto38}) have been of fundamental importance in the development of mathematics, logic, and theoretical computer science (see for example~\cite{Joh82,BRV01,com80}). With algebras corresponding to the syntactic side of logical systems and topological spaces to the semantic side, Stone type duality theory provides a powerful mathematical framework for studying various properties of logical systems. That duality theory plays an important r\^{o}le in computer science was emphasized by Plotkin \cite{Plo83} and Smyth \cite{Smy83} who pointed out that the duality between state-transformer and predicate-transformer semantics is an instance of a Stone type duality, as well as by Abamsky \cite{Abr91} who investigated the connection between program logic and domain theory via Stone duality. The latter is an example of a  duality between operational/denotational semantics and program logic. More recently, topological dualities have also been fruitfully explored in understanding minimization of various types of automata \cite{BKP12, Bon14, bezhetal20}, in classification of regular languages \cite{Geh08, Geh16b}, and in analyzing probabilistic systems \cite{Koz13, Fur17}. We refer to \cite{Pan13} 
and \cite{geh16a} for overviews of the use of duality theory in computer science.  Modern applications of duality theory involve algebras enriched with extra operators. 
This can be placed in the context of a duality between algebras for a functor and coalgebras for its dual functor \cite{Ven07}, providing yet another connection with computer science applications as coalgebra is known to be a theory of evolving systems \cite{Rut00, Jac16}. In this paper we  develop a new uniform coalgebraic approach to J\'onsson-Tarski and Thomason type dualities for various classes of neighborhood frames and neighborhood algebras. By doing this we are expanding the toolbox of duality theory for computer science.

J\'onsson-Tarski duality establishes that the category of modal algebras is dually equivalent to the category of descriptive Kripke frames, while Thomason duality establishes a dual equivalence between the category of Kripke frames and the category of complete and atomic modal algebras $(A,\Box)$ such that $\Box$ is completely multiplicative. J\'onsson-Tarski duality generalizes Stone duality between Boolean algebras and Stone spaces (compact Hausdorff zero-dimensional spaces), and Thomason duality generalizes Tarski duality between complete and atomic Boolean algebras and sets.

A coalgebraic proof of J\'onsson-Tarski duality is obtained by lifting Stone duality to an appropriate duality of functors \cite{Abr88,KupKurVen04,KupKurPat05}.
More precisely, let $\cat{Stone}$ be the category of Stone spaces and $\fun{V} : \cat{Stone} \to \cat{Stone}$ the Vietoris endofunctor on $\cat{Stone}$. Then the category $\cat{DKF}$ of descriptive Kripke frames is isomorphic to the category $\cat{Coalg}(\fun{V})$ of coalgebras for $\fun{V}$. In \cite{KupKurVen04} an endofunctor $\fun{K} : \cat{BA} \to \cat{BA}$ on the category $\cat{BA}$ of Boolean algebras is constructed that is dual to $\fun{V}$. Consequently, the category $\cat{Alg}(\fun{K})$ of algebras for $\fun{K}$ is dually equivalent to $\cat{Coalg}(\fun{V})$. Since $\cat{Alg}(\fun{K})$ is isomorphic to the category $\cat{MA}$ of modal algebras, putting the pieces together yields J\'{o}nsson-Tarski duality, stating that $\cat{MA}$ is dually equivalent to $\cat{DKF}$.  

Recently a similar approach was undertaken to prove Thomason duality. It is well known that the category $\cat{KF}$ of Kripke frames is isomorphic to the category $\cat{Coalg}(\fun{P})$ of coalgebras for the covariant powerset endofunctor $\fun{P}$ on the category $\cat{Set}$ of sets. In \cite{BezCarMor20} an endofunctor $\fun{H}$ is constructed on the category $\cat{CABA}$ of complete and atomic Boolean algebras that is dual to $\fun{P}$. Let $\cat{CAMA}$ be the category of complete and atomic modal algebras $(A, \Box)$ such that $\Box$ is completely multiplicative. Then $\cat{CAMA}$ is isomorphic to the category $\cat{Alg}(\fun{H})$ of algebras for $\fun{H}$, and we arrive at Thomason duality.

  In summary, we have two main dualities for normal modal logic, the ``topological'' duality of J\'onsson and Tarski, and the ``discrete'' duality of Thomason. 
  There is a tradeoff in the complexity of the structures involved in these dualities. In J\'onsson-Tarski duality the algebra side of the duality is ``simple'' (finitary), but the frame side is complex (the relational structures need to be equipped with a Stone topology resulting in descriptive Kripke frames). On the other hand, in Thomason duality the frame side of the duality is simple (Kripke frames), while the algebra side is complex (infinitary). 

  J\'onsson-Tarski duality and Thomason duality are related to each other by the following commutative diagram.
  The diagram on the right is a copy of that on the left, but with the modal algebras and frames  presented as algebras and coalgebras for the appropriate  functors.
  The algebraic counterpart of the forgetful functor
  $\fun{U} : \cat{DKF} \to \cat{KF}$ is given by the canonical extension functor
  $\sigma : \cat{MA} \to \cat{CAMA}$ making the diagram on the left commute.
  (We use $\equiv^{\op}$ to indicate dual equivalence.)

  \begin{equation}\label{eq:square-normal}
    \begin{tikzcd}
      \cat{MA}
            \arrow[r, -, "\equiv^{\op}"]
            \arrow[d, "\sigma" swap]
        & \cat{DKF}
            \arrow[d, "\fun{U}"]
        & \cat{Alg}(\fun{K})
            \arrow[r, -, "\equiv^{\op}"]
            \arrow[d, "\sigma" swap]
        & \cat{Coalg}(\fun{V})
            \arrow[d, "\fun{U}"] \\
      \cat{CAMA}
            \arrow[r, -, "\equiv^{\op}"]
        & \cat{KF}
        & \cat{Alg}(\fun{H})
            \arrow[r, -, "\equiv^{\op}"]
        & \cat{Coalg}(\fun{P})
    \end{tikzcd}
  \end{equation}

  We encounter a similar situation when looking at non-normal modal logic---%
  an extension of classical propositional logic with a unary modal operator
  that satisfies only the
  congruence rule. Its algebras are Boolean algebras with an arbitrary endofunction.
  We call these \emph{neighborhood algebras} and write $\cat{NA}$ for
  the category of neighborhood algebras and corresponding homomorphisms.
  The standard geometric semantics of non-normal modal logic is given by \emph{neighborhood frames},
  discovered independently by Scott \cite{Sco70} and Montague
  \cite{Mon70} (see also \cite{Che80,Pac17}). Neighborhood frames can be represented as
  coalgebras for the double contravariant powerset functor on $\mathsf{Set}$ (see, e.g., \cite[Example~9.5]{Ven07}).  
  We write $\cat{NF}$ for the category of neighborhood frames and appropriate
  morphisms.
  Do\v{s}en \cite{Dos89} generalized both J\'onsson-Tarski and
  Thomason dualities to the setting of neighborhood algebras and frames.
  The former gives rise to the notion of \emph{descriptive neighborhood
  frames}, and the latter yields \emph{complete atomic neighborhood algebras}.
  If we write $\cat{DNF}$ and $\cat{CANA}$ for the respective categories, then we obtain the following analogue of Diagram~\eqref{eq:square-normal}:
    \begin{equation}\label{eq:square-non-normal}
    \begin{tikzcd}
      \cat{NA}
            \arrow[r, -, "\equiv^{\op}"]
            \arrow[d, "" swap]
        & \cat{DNF}
            \arrow[d, ""]
            \arrow[d, ""] \\
      \cat{CANA}
            \arrow[r, -, "\equiv^{\op}"]
        & \cat{NF}
    \end{tikzcd}
  \end{equation}
  
  But the analogy is not perfect: in the non-normal case
  there are two natural versions of the canonical extension, the $\sigma$- and $\pi$-extensions, that do not coincide. Moreover, neither is functorial. In fact, $\sigma : \cat{MA} \to \cat{CAMA}$ is a functor because the $\sigma$- and $\pi$-extensions coincide in the presence of normality.
  These two extensions have been investigated 
  in the setting of distributive lattices \cite{GehJon94, GehJon04}, arbitrary lattices \cite{GehHar01}, and even posets \cite{GP08,GJP13}. 
  In Section~\ref{sec:canonical} we will discuss how to obtain functoriality of $\sigma$- and $\pi$-extensions in some special cases.
  
An important subcategory of $\cat{NA}$ is that of \emph{monotone neighborhood frames}. For this subcategory, Hansen \cite{Han03} and Hansen and Kupke \cite{HanKup04} developed an alternative approach to Do\v{s}en duality. In fact, Hansen's descriptive neighborhood frames are $\sigma$-extensions of Do\v{s}en's descriptive neighborhood frames. While Hansen and Kupke mention a coalgebraic approach to Do\v{s}en duality, they do not go as far as to prove Do\v{s}en duality using the coalgebraic approach discussed above.  
In Section~\ref{subsec:ce-fun} we will see how to obtain such a proof from our approach, which we next outline.

  Our main goal is to give a uniform (predominantly coalgebraic)
  approach to dualities for neighborhood frames. We construct an endofunctor
  $\fun{L}$ on the category $\cat{CABA}$ of complete atomic Boolean algebras (CABAs)
  that is dual to the double powerset functor. 
  This endofunctor is obtained by modding out the free CABA over a set by the
  axioms of $\cat{CANA}$. 
  The above construction presumes existence of free CABAs. It is well known that free complete Boolean algebras do not exist \cite{Gai64,Hal64}. 
  However, free CABAs do exist. This follows from the fact that the Eilenberg-Moore algebras of the double contravariant powerset monad are exactly 
CABAs \cite{Tay02}, and that categories of algebras for monads have free objects \cite[Proposition 20.7(2)]{AHS09}. A concrete description of free CABAs was given in \cite{BezCarMor20}.
  We generalize this
  to any subcategory of $\cat{CANA}$
  axiomatized by one-step axioms in the language of infinitary modal logic.
  As a consequence, we obtain a uniform approach to duality theory for classes of neighborhood frames, including monotone
  neighborhood frames, filter frames, and neighborhood contingency frames.
  Additional correspondence results then give rise to dualities for
  pretopological spaces, topological spaces, and their various subcategories.
  
  In the second part of the paper we define an analogue of the Vietoris endofunctor on Stone spaces
  and show that coalgebras for this endofunctor are exactly the descriptive neighborhood
  frames.
  This allows us to obtain a coalgebraic proof of the duality between descriptive
  neighborhood frames and neighborhood algebras. Furthermore, using one-step axioms in the language of finitary modal
  logic, we restrict this duality to other classes of neighborhood algebras studied in the literature such as normal modal algebras and contingency algebras.
  
  This restriction does not always correspond to known dualities.
  For example, when restricting neighborhood algebras to monotone Boolean algebra
  expansions (BAMs) \cite[Chapter 7]{Han03} we do not obtain the duality for
  monotone modal logic of Hansen and Kupke \cite{HanKup04}.
  In fact, the neighborhood frames underlying the descriptive frames in the duality
  obtained from our general theory need not be monotone. 
  The descriptive frames used by Hansen and Kupke can be obtained
  from ours through the theory of canonical extensions.
  In Section \ref{sec:canonical} we provide an axiom which guarantees
  that this construction is functorial. 
  We use this to give an alternative coalgebraic proof of the duality for BAMs
  in \cite{HanKup04}.

\section{Preliminaries}

\subsection{Duality theory for normal modal logics}

  Classical modal logic is an extension of  classical propositional logic
  with an additional unary modality $\Box$.
  The modality $\Box$ is called \emph{normal} if it distributes over finite meets.
  By the standard interpretation in a Kripke frame $(X, R)$, a state $x$ in $X$
  satisfies $\Box\phi$ if all its $R$-successors satisfy $\phi$.
  Clearly a state then satisfies $\Box\phi \land \Box\psi$ iff it
  satisfies $\Box(\phi \wedge \psi)$.
  To interpret a non-normal $\Box$ we need to generalize Kripke semantics to
  the so-called neighborhood semantics.
  
\begin{defi}[\cite{Che80,Pac17}]
  A \emph{neighborhood frame} is a pair $(X, N)$
  consisting of a set $X$ and a \emph{neighborhood function}
  $N : X\to\fun{PP}X$, where $\fun{P}X$ is the powerset of $X$.
\end{defi}
  
  For $x\in X$, we call elements of $N(x)$ the \emph{neighborhoods} of $x$.
  A state $x$ in $X$ then satisfies $\Box\phi$ if the set
  $\llb \phi \rrb := \{ y \in X \mid y \Vdash \phi \}$ is a neighborhood of $x$.
  
  Kripke frames can be thought of as special neighborhood frames.
  Indeed, if we identify a Kripke frame $(X, R)$ with the neighborhood frame
  $(X, N_R)$, where $N_R(x) = \{ b \subseteq X \mid R[x] \subseteq b \}$,
  then 
  the interpretation of $\Box\phi$ in $(X, R)$ and $(X, N_R)$ coincides.
  (As usual, $R[x] := \{ y \in X \mid xRy \}$ denotes the set of $R$-successors of $x$.)

  The algebraic semantics of normal modal logic is given by \emph{modal algebras}.
  These are Boolean algebras endowed with a unary function $\dbox:B\to B$ that
  preserves finite meets. Together with $\dbox$-preserving Boolean homomorphisms
  they form the category $\cat{MA}$.
  
  Every Kripke frame $(X,R)$ gives rise to the modal algebra $(\fun{P}X,\dbox_R)$
  where $\dbox_R a := \{ x\in X \mid R[x] \subseteq a\}$.
  By the J\'onsson-Tarski representation theorem \cite{JonTar51}, each modal algebra
  $(B,\dbox)$ can be represented as a subalgebra of the modal algebra
  $(\fun{P}X,\dbox_R)$, where $X$ is the set of ultrafilters of $B$ and $R$ is
  defined on $X$ by $xRy$ iff $\dbox a\in x$ implies $a\in y$ for each $a\in B$.
  In fact, one can endow $X$ with a topology $\tau$ such that the Boolean
  algebra of clopen subsets of $(X, \tau)$ is isomorphic to $B$.
  
  This gives rise to the notion of a \emph{descriptive Kripke frame}, that is,
  a Kripke frame $(\topo{X},R)$ where $\topo{X}$ is a Stone space (a compact
  Hausdorff zero-dimensional space) and $R$ is a continuous relation on $X$;
  see also \cite{Ven07}.
  Let $\cat{KF}$ be the category of Kripke frames and bounded morphisms and
  $\cat{DKF}$  the category of descriptive Kripke frames and continuous
  bounded morphisms. We then have the duality for modal algebras and
  descriptive frames in Theorem \ref{thm:JonTarDual} below.
  The duality on objects traces back to the 1951 paper by
  J\'{o}nsson and Tarski \cite{JonTar51} and was subsequently
  extended to the categorical duality as we know it now \cite{Hal56,Esa74,Gol76a}.

\begin{thm} [J\'onsson-Tarski duality]\label{thm:JonTarDual}
  $\cat{MA}$ is dually equivalent to $\cat{DKF}$.
\end{thm}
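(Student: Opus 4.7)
The plan is to construct explicit contravariant functors $(\cdot)_*\colon\cat{MA}\to\cat{DKF}$ and $(\cdot)^*\colon\cat{DKF}\to\cat{MA}$ on top of Stone duality and verify that they yield a dual equivalence. For $(B,\dbox)\in\cat{MA}$ I take $X_B$ to be the ultrafilter space with the Stone topology generated by $\hat a := \{x\in X_B \mid a \in x\}$, and I define $xR_\dbox y$ iff $\dbox a \in x$ implies $a \in y$ for every $a\in B$. For $(\topo{X},R)\in\cat{DKF}$ I form $(\Cl(\topo{X}),\dbox_R)$ with $\dbox_R a := \{x \in X \mid R[x]\subseteq a\}$.

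First I would verify that these assignments land in the right categories. The core identity $\widehat{\dbox a}=\dbox_{R_\dbox}\hat a$, whose nontrivial inclusion uses the prime filter theorem, simultaneously shows that $R_\dbox$ satisfies the continuity conditions of a descriptive frame. Conversely, showing that $\dbox_R$ sends clopens to clopens in any descriptive frame is where descriptiveness genuinely enters: writing $\dbox_R a = X \setminus R^{-1}[X\setminus a]$ reduces the task to clopen-preservation of $R^{-1}$, which is part of the definition of a descriptive frame. Preservation of finite meets by $\dbox_R$ is routine. Extension to morphisms is standard: a modal homomorphism $h$ induces a continuous $h^{-1}$ on ultrafilter spaces, which is a bounded morphism exactly because $h$ commutes with $\dbox$; symmetrically, a continuous bounded morphism $f$ gives $f^{-1}\colon\Cl(\topo{Y})\to\Cl(\topo{X})$ commuting with $\dbox_R$.

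To close, I would exhibit the unit $\eta_B\colon B\to\Cl(X_B)$, $a\mapsto\hat a$, and the counit $\epsilon_{\topo{X}}\colon\topo{X}\to X_{\Cl(\topo{X})}$, $x\mapsto\{a\in\Cl(\topo{X})\mid x\in a\}$, as natural isomorphisms. Most of this is Stone duality applied to the underlying Boolean algebra and Stone space. The only genuinely modal content is (a) that $\eta_B$ commutes with $\dbox$, which is the identity recorded above, and (b) that $\epsilon_{\topo{X}}$ is a bounded morphism from $R$ to $R_{\dbox_R}$ in both directions. One direction of (b) is a formal set-theoretic unwinding; the other direction, producing an actual $R$-successor from a compatible ultrafilter, is where point-closedness of $R$ together with compactness of $\topo{X}$ are used substantively.

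I expect this last existential step to be the main obstacle: it is the place where the topological axioms of a descriptive frame do genuine work and distinguish the theorem from the purely Boolean case. The entire argument admits the coalgebraic repackaging sketched in the introduction, via an endofunctor $\fun{K}\colon\cat{BA}\to\cat{BA}$ with $\cat{MA}\cong\cat{Alg}(\fun{K})$ and the Vietoris endofunctor $\fun{V}\colon\cat{Stone}\to\cat{Stone}$ with $\cat{DKF}\cong\cat{Coalg}(\fun{V})$, whose duality under Stone duality lifts to the required dual equivalence; this is the perspective that will generalize to neighborhood frames in the body of the paper.
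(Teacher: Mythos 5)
Your proposal is correct, and it is the classical, hands-on proof of J\'onsson--Tarski duality: explicit contravariant functors built on Stone duality, the key identity $\widehat{\dbox a}=\dbox_{R_{\dbox}}\hat a$ via the ultrafilter theorem, and unit/counit isomorphisms whose only genuinely modal content is the two places you isolate (the box--clopen identity and the existential back condition for the counit, which indeed is where point-closedness of $R[x]$ and compactness do real work). The paper takes a different route: Theorem~\ref{thm:JonTarDual} is stated in the preliminaries as a known result, and its actual derivation appears in Example~\ref{exm:JT} as an instance of the general machinery of Section~\ref{sec:stone-type}. There one shows $\cat{MA}\cong\cat{Alg}(\fun{N}_{\{\axN,\axC\}})$, identifies the Vietoris functor with the one-step-axiom subfunctor $\fun{D}_{\{\axN,\axC\}}$ of $\fun{D}$ (Remark~\ref{rem:V-vs-D}), and invokes the Stone Functor Duality Theorem~\ref{thm:fun-dual-stone} and Corollary~\ref{cor:fun-dual-stone} to get $\cat{Alg}(\fun{N}_{\{\axN,\axC\}})\equiv^{\op}\cat{Coalg}(\fun{D}_{\{\axN,\axC\}})\cong\cat{Coalg}(\fun{V})\cong\cat{DKF}$ --- exactly the coalgebraic repackaging you mention in your closing sentence, but with the duality of functors proved once and for all at the level of arbitrary finitary one-step axioms rather than for $\fun{K}$ and $\fun{V}$ specifically. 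Your approach buys a self-contained, elementary argument in which one sees concretely where each hypothesis of a descriptive frame is used; the paper's approach buys uniformity, since the same Lemma~\ref{lem:uf-stone} and naturality computation that yield J\'onsson--Tarski duality also yield Do\v{s}en duality ($\Ax=\varnothing$), the contingency duality, and the rest of Table~\ref{table:overview}, at the cost of outsourcing your ``genuinely modal'' steps to the isomorphisms $\cat{MA}\cong\cat{Alg}(\fun{N}_{\{\axN,\axC\}})$ and $\cat{DKF}\cong\cat{Coalg}(\fun{V})$ and to the subfunctor identification. One small caution on your sketch: the back condition for $h^{-1}$ being a bounded morphism on the algebra side also requires an ultrafilter-extension argument (not just the equation $h\circ\dbox=\dbox'\circ h$), parallel to the existential step you flag for the counit; it is worth making that explicit rather than calling it routine.
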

  
  This theorem generalizes Stone duality \cite{Sto36} between the category
  $\cat{BA}$ of Boolean algebras and Boolean homomorphisms and the category
  $\cat{Stone}$ of Stone spaces and continuous functions. To obtain a similar
  duality for $\cat{KF}$ we recall Tarski duality for complete and atomic
  Boolean algebras. Let $\cat{CABA}$ be the category of complete atomic
  Boolean algebras and complete Boolean homomorphisms, and let $\cat{Set}$
  be the category of sets and functions. Observe that $\cat{CABA}$ is a
  non-full subcategory of $\cat{BA}$.
  Then Tarski duality states that $\cat{CABA}$ is dually equivalent to
  $\cat{Set}$.
  On objects, this duality stems from Tarski's 1935 paper \cite{Tar35} and
  a statement of the full duality can be found in \cite[Example 4.6(a)]{Joh82}.
  The name ``Tarski duality'' was coined recently in \cite{BezMorOlb19}.
  The duality is obtained by sending a set $X$ to its powerset,
  conceived of as a CABA, and a CABA $A$ to its set of atoms.
  On morphisms, a function $f : X \to X'$ is sent to its inverse image.
  If $h : A \to A'$ is a complete Boolean homomorphism,
  then it has a left adjoint $h^* : A' \to A$, given by
  $h^*(a') = \bigwedge \{ a \in A \mid a' \leq h(a) \}$,
  which restricts to the atoms, and we send $h$ to this restriction.
  Thus, we obtain the contravariant functors $\wp:\cat{Set}\to\cat{CABA}$ and
  $\fun{at}:\cat{CABA}\to\cat{Set}$ which yield Tarski duality.

  Thomason \cite{Tho75} generalized Tarski duality to the category $\cat{CAMA}$,
  whose objects are modal algebras $(A, \dbox)$ such that $A$ is a CABA
  and $\dbox$ is completely multiplicative, i.e., preserves all meets.
  The morphisms of $\cat{CAMA}$ are complete modal algebra homomorphisms.
  
\begin{thm} [Thomason duality]
  $\cat{CAMA}$ is dually equivalent to $\cat{KF}$.
\end{thm}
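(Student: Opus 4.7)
The plan is to lift Tarski duality $\cat{CABA} \equiv^{\op} \cat{Set}$ to the modal setting by extending the functors $\wp$ and $\fun{at}$ so that they act also on the modal operators. Given a Kripke frame $(X,R)$, I would send it to $(\wp X, \dbox_R)$ with $\dbox_R a = \{x \in X : R[x] \subseteq a\}$; complete multiplicativity of $\dbox_R$ is immediate from the fact that $R[x] \subseteq \bigcap_i a_i$ iff $R[x] \subseteq a_i$ for every $i$. In the opposite direction, for $(A,\dbox) \in \cat{CAMA}$ I would take $\fun{at}(A)$ equipped with the relation $R$ defined by $x R y$ iff $x \nleq \dbox\neg y$, equivalently $x \leq \Diamond y$ where $\Diamond := \neg \dbox \neg$. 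Since $\dbox$ is completely multiplicative, $\Diamond$ is completely additive, hence determined on all of $A$ by its values on atoms, which is exactly the information packaged into $R$.

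On morphisms, a bounded morphism $f : (X,R) \to (X',R')$ is sent to its inverse image $f^{-1} : \wp X' \to \wp X$; the forward and back conditions of boundedness translate directly into $f^{-1} \circ \dbox_{R'} = \dbox_R \circ f^{-1}$. Conversely, a complete $\dbox$-preserving homomorphism $h : (A,\dbox) \to (A',\dbox')$ has a left adjoint $h^*$ whose restriction to atoms is the Tarski dual function, and verifying that this restriction is a bounded morphism reduces to comparing $\dbox$ and $\dbox'$ on atoms via the adjunction together with complete multiplicativity.

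The main obstacle is the object-level roundtrip: starting from $(A,\dbox) \in \cat{CAMA}$, one must show that passing to $(\fun{at}(A), R)$ and back to $(\wp(\fun{at}(A)), \dbox_R)$ reproduces $(A,\dbox)$ under the Tarski isomorphism $A \cong \wp(\fun{at}(A))$. This is exactly where complete multiplicativity is indispensable: identifying $a \in A$ with $\{x \in \fun{at}(A) : x \leq a\}$, one checks that $\dbox a$ corresponds to $\{x : R[x] \subseteq a\}$, which via De Morgan amounts to the identity $\Diamond b = \bigvee \{x \in \fun{at}(A) : \exists y \in \fun{at}(A),\, y \leq b \text{ and } xRy\}$ for every $b \in A$, an identity that holds precisely because $\Diamond$ is completely additive on the atomic generators. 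Once this is established, naturality of Tarski duality handles the rest of the categorical bookkeeping.

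An equivalent route, paralleling the coalgebraic discussion in the introduction, would be to transport the covariant powerset endofunctor $\fun{P}$ on $\cat{Set}$ across Tarski duality to obtain an endofunctor $\fun{H}$ on $\cat{CABA}$, observe that $\cat{Alg}(\fun{H}) \cong \cat{CAMA}$ and $\cat{Coalg}(\fun{P}) \cong \cat{KF}$, and conclude the duality from the duality of functors. The atom-level verification above is the substance on which any such coalgebraic packaging ultimately rests.
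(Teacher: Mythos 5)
Your argument is correct, but it takes a genuinely different route from the one the paper uses. The paper never proves Thomason duality by directly extending the Tarski functors $\wp$ and $\fun{at}$ to the modal signature; instead it obtains the theorem in Example~\ref{exm:fun-dual-Cinf} as an instance of its general machinery: the class of one-step axioms $\mc{C}_{\infty}$ yields the endofunctor $\fun{H} = \fun{L}_{\mc{C}_{\infty}}$ on $\cat{CABA}$, the Tarski Functor Duality Theorem (Theorem~\ref{thm:fun-dual}) shows $\fun{H}$ and $\fun{B}_{\mc{C}_{\infty}} \cong \fun{P}$ are Tarski-duals, and the duality then follows from $\cat{CAMA} \cong \cat{Alg}(\fun{H})$ and $\cat{KF} \cong \cat{Coalg}(\fun{P})$, with the atom-level content absorbed into Lemma~\ref{lem:uf-set} (atoms of $\fun{L}_{\Ax}A$ are $\Ax$-subsets of $A$, which for $\mc{C}_{\infty}$ are exactly the principal up-sets, i.e.\ subsets of $A$). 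Your direct construction --- $(X,R) \mapsto (\wp X, \dbox_R)$ and $(A,\dbox) \mapsto (\fun{at}(A), R)$ with $xRy$ iff $x \leq \Diamond y$, together with the roundtrip identity resting on complete additivity of $\Diamond$ over atoms --- is the classical proof going back to Thomason, and it is sound; the morphism-level claims you sketch (boundedness of the restriction of $h^*$ to atoms) do go through using the adjunction $x' \leq h(a) \Leftrightarrow h^*(x') \leq a$ together with complete additivity of $\Diamond'$. What the paper's route buys is uniformity: the same functor-duality theorem simultaneously yields Do\v{s}en duality, the monotone, contingency, convex, and filter-frame dualities of Section~\ref{sec:harvest}, whereas your direct argument is bespoke to the normal case. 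What your route buys is that the essential computation is laid bare rather than hidden inside the identification of $\Ax$-subsets; as you correctly note in your last paragraph, any coalgebraic packaging ultimately rests on that same atom-level verification.
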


\subsection{Duality theory for non-normal modal logics}

  The algebraic semantics of non-normal classical modal logic is given by pairs
  $(B, \dbox)$ where $B$ is a Boolean algebra and $\dbox : B \to B$ is an
  arbitrary (not necessarily meet-preserving) function. Do\v{s}en \cite{Dos89}
  calls these ``modal algebras,'' but to avoid confusion we will call them
  \emph{neighborhood algebras}. We write $\cat{NA}$ for the category of
  neighborhood algebras and $\dbox$-preserving Boolean homomorphisms.
  In addition, we write $\cat{NF}$ for the category of neighborhood frames and
  neighborhood morphisms, where we recall (see, e.g.,
  \cite[Definition 2.9]{Pac17}) that a \emph{neighborhood morphism}
  from $(X, N)$ to $(X', N')$ is a function $f : X \to X'$ such that
  \begin{equation}\label{eq:nbhd-mor}
    a' \in N'(f(x)) \iff f^{-1}(a') \in N(x)
  \end{equation}
  for all $x \in X$ and $a' \subseteq X'$.
  
  Do\v{s}en \cite{Dos89} generalized J\'onsson-Tarski duality and Thomason
  duality to the setting of neighborhood algebras and frames.
  The r\^{o}le of descriptive Kripke frames is now played by descriptive
  neighborhood frames (defined in detail in Section \ref{subsec:dnf-coalg}).
  Together with continuous neighborhood morphisms they comprise the category
  $\cat{DNF}$. Likewise, CAMAs are replaced by \emph{complete atomic neighborhood
  algebras} (CANAs), i.e., neighborhood algebras $(A, \dbox)$ such that $A$ is
  complete and atomic. We write $\cat{CANA}$ for the category of CANAs and
  complete neighborhood algebra homomorphisms. 
  
\begin{thm} [Do\v{s}en \cite{Dos89}]
  \begin{enumerate}
    \item[]
    \item[{\em (1)}] $\cat{NA}$ is dually equivalent to $\cat{DNF}$.
    \item[{\em (2)}] $\cat{CANA}$ is dually equivalent to $\cat{NF}$.
  \end{enumerate}
\end{thm}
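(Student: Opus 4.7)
The plan is to lift Tarski duality on objects to a duality $\cat{NF} \equiv^{\op} \cat{CANA}$ and Stone duality on objects to a duality $\cat{DNF} \equiv^{\op} \cat{NA}$, treating the neighborhood function $N$ and the operator $\dbox$ as additional structure that must be transported coherently.

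For part (2), I would start from a neighborhood frame $(X, N)$ and let $\fun{P}X$ be its powerset, which is already a CABA. Define $\dbox_N : \fun{P}X \to \fun{P}X$ by
\[
  \dbox_N(a) = \{ x \in X \mid a \in N(x) \}.
\]
No compatibility condition is needed, and $(\fun{P}X, \dbox_N)$ is a CANA. Conversely, for $(A, \dbox) \in \cat{CANA}$, I would identify $A$ with $\fun{P}(\fun{at}(A))$ via Tarski, view $\dbox$ as a map $\fun{P}(\fun{at}(A)) \to \fun{P}(\fun{at}(A))$, and set
\[
  N_\dbox(x) = \{ a \subseteq \fun{at}(A) \mid x \in \dbox(a) \}.
\]
The two constructions are mutually inverse on objects by direct unfolding. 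On morphisms, condition \eqref{eq:nbhd-mor} is exactly the statement that $f^{-1} : \fun{P}X' \to \fun{P}X$ commutes with the induced $\dbox$-operations, so $f^{-1}$ is a CANA homomorphism; conversely, for a complete $\dbox$-preserving homomorphism $h : A' \to A$, its left adjoint $h^*$ restricts to atoms exactly as in Tarski duality, and the restriction satisfies \eqref{eq:nbhd-mor}.

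For part (1), I would lift Stone duality analogously. Given $(B, \dbox) \in \cat{NA}$, let $\topo{X}$ be the Stone space of $B$ and, for $x \in X$ an ultrafilter, set
\[
  N(x) = \{ \hat{b} \mid b \in B \text{ and } \dbox b \in x \},
\]
where $\hat{b} = \{ y \in X \mid b \in y \}$ is the clopen associated with $b$ under Stone duality. This yields a neighborhood frame whose neighborhoods are built from clopens, which we take as a descriptive neighborhood frame in the sense of Section~\ref{subsec:dnf-coalg}. Conversely, given a descriptive neighborhood frame $(\topo{X}, N)$, let $B$ be the Boolean algebra of clopens of $\topo{X}$ and define $\dbox b = \{ x \in X \mid b \in N(x) \}$; the conditions in the definition of $\cat{DNF}$ must guarantee that $\dbox b$ is again clopen, so that $(B, \dbox) \in \cat{NA}$. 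On objects the round trips reduce to Stone duality together with the straightforward observation that $\dbox$ is reconstructed from $N$ and vice versa. On morphisms, a continuous neighborhood morphism $f : (\topo{X}, N) \to (\topo{X}', N')$ restricts to a Boolean homomorphism on clopens, and \eqref{eq:nbhd-mor} forces it to preserve $\dbox$; dually, a $\dbox$-preserving Boolean homomorphism $h : B' \to B$ induces a continuous map between Stone spaces that satisfies \eqref{eq:nbhd-mor}.

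The main obstacle is pinning down the right definition of descriptive neighborhood frame in part (1) so that the two assignments are genuinely mutually inverse and functorial. The construction $(B, \dbox) \mapsto (\topo{X}, N)$ only produces neighborhoods consisting of clopens, and not every $N : X \to \fun{P}\fun{P}X$ on a Stone space arises this way; the definition in Section~\ref{subsec:dnf-coalg} must single out precisely those $N$ for which (i) every $N(x)$ is determined by its clopen members and (ii) for each clopen $b$ the set $\{ x \in X \mid b \in N(x) \}$ is clopen. Once that definition is fixed, the coalgebraic viewpoint developed later in the paper, in which $\cat{DNF}$ is identified with coalgebras for a Vietoris-like endofunctor on $\cat{Stone}$, reduces the bookkeeping of the two round trips to a routine verification.
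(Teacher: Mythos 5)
Your proposal is correct, but it follows the classical direct route (essentially Do\v{s}en's original argument) rather than the route this paper takes. You transport the structure by hand: lift Tarski duality on objects and morphisms while carrying $N \leftrightarrow \dbox_N$ across for part (2), and lift Stone duality while carrying $N \leftrightarrow \dbox_N$ on clopens for part (1), with tightness ($N(x) \subseteq A$) and closure of $A$ under $\dbox_N$ doing exactly the work you identify in your ``main obstacle'' paragraph. The paper instead derives both statements as the $\Ax = \varnothing$ instances of its general machinery: part (2) via $\cat{CANA} \cong \cat{Alg}(\fun{L})$ (Proposition~\ref{prop:alg}), $\cat{NF} \cong \cat{Coalg}(\fun{B})$, and the Tarski functor duality of $\fun{L}$ and $\fun{B}$ (Theorem~\ref{thm:fun-dual}, Example~\ref{exm:fun-dual-no-axioms}); part (1) via $\cat{NA} \cong \cat{Alg}(\fun{N})$, $\cat{DNF} \cong \cat{Coalg}(\fun{D})$ (Theorem~\ref{thm:dnf-coalg}), and the Stone functor duality of $\fun{N}$ and $\fun{D}$ (Theorem~\ref{thm:fun-dual-stone}, Corollary~\ref{cor:fun-dual-stone}). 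The key lemmas there are the bijections between atoms of $\fun{L}_{\Ax}A$ (resp.\ ultrafilters of $\fun{N}_{\Ax}B$) and $\Ax$-subsets, which for $\Ax = \varnothing$ encode exactly your correspondence $p \mapsto W_p = \{b \mid p(\Box b) = 1\}$. Your approach is more elementary and self-contained; the paper's buys uniformity, since the same two functor-duality theorems immediately yield the dualities for every one-step axiomatized subcategory (monotone, contingency, convex, filter frames, etc.) without re-running the verification. One small imprecision in your obstacle discussion: the paper's tightness condition is that \emph{every} neighborhood is admissible ($N(x) \subseteq A$), not merely that $N(x)$ is determined by its clopen members; the latter, weaker condition is what reappears later as $\sigma$-descriptiveness in Section~\ref{sec:canonical}, and conflating the two would break the round trip in part (1).
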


\subsection{The algebra/coalgebra approach}

  Recall \cite[Definition 5.37]{AHS09} that an \emph{algebra} for an endofunctor
  $\fun{L}$ on a category $\cat{A}$ is a pair $(A, \alpha)$ consisting of an
  object $A \in \cat{A}$ and a morphism $\alpha : \fun{L}A \to A$ in $\cat{A}$.
  An \emph{$\fun{L}$-algebra morphism} from $(A, \alpha)$ to $(A', \alpha')$ is
  an $\cat{A}$-morphism $h : A \to A'$ such that the following diagram commutes
  in $\cat{A}$:
  $$
    \begin{tikzcd}
      \fun{L}A
            \arrow[r, "\fun{L}h"]
            \arrow[d, "\alpha" swap]
        & \fun{L}A'
            \arrow[d, "\alpha'"] \\
      A     \arrow[r, "h"]
        & A'
    \end{tikzcd}
  $$
  We write $\cat{Alg}(\fun{L})$ for the category of $\fun{L}$-algebras
  and $\fun{L}$-algebra morphisms.
  
\begin{exa}\label{exm:alg}
  \begin{enumerate}
  \item[]
    \item It is well known that $\cat{MA}$ is isomorphic to the
          category $\cat{Alg}(\fun{K})$ for the endofunctor
          $\fun{K} : \cat{BA} \to \cat{BA}$ that sends a Boolean algebra $B$
          to the free Boolean algebra generated by the meet-semilattice underlying
          $B$ (\cite{Abr88}, \cite[Section 2]{Ghi95},
          \cite[Proposition 3.17]{KupKurVen04}).
    \item \label{it:exm:alg-H}
          As we pointed out in the introduction, free CABAs exist because
          CABAs are exactly the Eilenberg-Moore algebras of the double
          contravariant powerset monad \cite{Tay02} (see also
          \cite[Section  5.1]{bezhetal20}), and categories of Eilenberg-Moore
          algebras have free objects \cite[Proposition 20.7(2)]{AHS09}.
          In \cite{BezCarMor20} a concrete construction of free CABAs was given and it was
          shown that $\cat{CAMA}$ is isomorphic to $\cat{Alg}(\fun{H})$ for the
          endofunctor $\fun{H} : \cat{CABA} \to \cat{CABA}$ that sends a CABA $A$
          to the free CABA generated by the complete meet-semilattice underlying $A$.
          Explicitly, $\fun{H}A$ can be described as the free CABA generated
          by the set underlying $A$, modulo the axioms stating that
          the inclusion map $A \to \fun{H}A$ preserves all meets.
  \end{enumerate}
\end{exa}

  By an easy adaptation of the proof of \cite{Ven07} and \cite{BezCarMor20},
  respectively, one can show that:

\begin{prop}\label{prop:alg} \
  \begin{enumerate}
    \item \label{it:prop:alg-N}
          Let $\fun{N} : \cat{BA} \to \cat{BA}$ be the composition of the
          forgetful functor $\cat{BA} \to \cat{Set}$ and the free functor
          $\cat{Set} \to \cat{BA}$. Then
          $$
            \cat{NA} \cong \cat{Alg}(\fun{N}).
          $$
    \item \label{it:exm:alg-L} 
          Let $\fun{L} : \cat{CABA} \to \cat{CABA}$ be the composition of the
          forgetful functor $\cat{CABA} \to \cat{Set}$ and the free functor
          $\cat{Set} \to \cat{CABA}$. Then
          $$
            \cat{CANA} \cong \cat{Alg}(\fun{L}).
          $$
  \end{enumerate}
\end{prop}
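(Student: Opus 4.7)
The plan is to exploit the universal property of the free functors. In both parts, the free functor is left adjoint to the forgetful functor, so morphisms out of a free object on a set $X$ into any object $B$ (resp.\ $A$) in the category are in natural bijection with functions $X \to B$ (resp.\ $X \to A$). Taking $X$ to be the underlying set of $B$ (resp.\ $A$) itself, an $\fun{N}$-algebra structure $\alpha : \fun{N}B \to B$ (resp.\ an $\fun{L}$-algebra structure $\alpha : \fun{L}A \to A$) is then in bijection with an arbitrary function $\dbox_\alpha : B \to B$ (resp.\ $A \to A$); no axioms are imposed because $\alpha$ is only required to be a Boolean (resp.\ complete Boolean) homomorphism out of a free object. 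This is exactly the data of a neighborhood algebra (resp.\ a CANA).

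To make this into a functor $\Phi : \cat{Alg}(\fun{N}) \to \cat{NA}$ (resp.\ $\cat{Alg}(\fun{L}) \to \cat{CANA}$), I would send $(B,\alpha)$ to $(B,\dbox_\alpha)$ and a morphism $h : (B,\alpha) \to (B',\alpha')$ to itself as a Boolean (resp.\ complete Boolean) homomorphism. The square
$$
\begin{tikzcd}
  \fun{N}B \arrow[r,"\fun{N}h"] \arrow[d,"\alpha" swap] & \fun{N}B' \arrow[d,"\alpha'"] \\
  B \arrow[r,"h"] & B'
\end{tikzcd}
$$
commutes iff it commutes after precomposition with the unit $B \to \fun{N}B$ of the adjunction, by the universal property of $\fun{N}B$. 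Evaluating on an element $b \in B$ gives precisely $h(\dbox_\alpha b) = \dbox_{\alpha'} h(b)$, so $h$ is an $\fun{N}$-algebra morphism iff it is a neighborhood algebra homomorphism. The same argument with $\fun{L}$ in place of $\fun{N}$ handles the complete case, using the universal property of the free CABA (whose existence is recalled in Example~\ref{exm:alg}\eqref{it:exm:alg-H}).

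In the reverse direction, from a neighborhood algebra $(B,\dbox)$ I obtain a unique $\alpha : \fun{N}B \to B$ extending $\dbox$ as a function on generators; the assignments $(B,\dbox) \mapsto (B,\alpha)$ and $h \mapsto h$ give a functor $\Psi$ which is strictly inverse to $\Phi$ on objects and morphisms. Thus $\Phi$ and $\Psi$ constitute an isomorphism of categories, and likewise for the CABA/CANA case.

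The only point requiring some care is the morphism side in part~(2): one must make sure that the free functor $\cat{Set} \to \cat{CABA}$ really is left adjoint to the forgetful functor into $\cat{Set}$, so that complete Boolean homomorphisms out of $\fun{L}A$ biject with functions out of the underlying set of $A$. This is exactly what the monadicity result of Taylor~\cite{Tay02} (cited in the excerpt) gives us, and the concrete description of free CABAs in \cite{BezCarMor20} makes the universal property explicit. Once this adjunction is in hand, everything else is formal.
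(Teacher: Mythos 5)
Your proposal is correct and follows essentially the same route as the paper: both exploit the freeness of $\fun{N}B$ (resp.\ $\fun{L}A$) on the underlying set of $B$ (resp.\ $A$) to identify structure maps with arbitrary endofunctions, and check the morphism condition on generators. You merely phrase the argument in explicit adjunction language and fill in the morphism verification and the existence of free CABAs, both of which the paper treats as routine or cites elsewhere.
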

\begin{proof}
  If $B$ is a Boolean algebra, then $\fun{N}B$ is the free Boolean algebra
  generated by the set $\{ \Box b \mid b \in B \}$. So a homomorphism
  $\fun{N}B \to B$ is uniquely determined by its action on elements of the form
  $\Box b$.
  Now given a neighborhood algebra $(B, \dbox)$, define an $\fun{N}$-algebra
  structure $\alpha_{\dbox} : \fun{N}B \to B$ via $\alpha_{\dbox}(\Box b) = \dbox b$.
  Conversely, an $\fun{N}$-algebra $\alpha : \fun{N}B \to B$ gives rise to
  the neighborhood algebra $(B, \dbox_{\alpha})$ where
  $\dbox_{\alpha}b = \alpha(\Box b)$.
  It is easy to verify that these assignments prove the isomorphism of the
  first item on objects. The verification on morphisms is a routine exercise.
  The second item can be proven analogously.
\end{proof}

  The dual notion of an algebra is that of a coalgebra.
  A \emph{coalgebra} for a functor $\fun{T} : \cat{C} \to \cat{C}$
  is a pair $(X, \gamma)$ such that $\gamma : X \to \fun{T}X$ is a morphism
  in $\cat{C}$. A \emph{$\fun{T}$-coalgebra morphism} from $(X, \gamma)$ to
  $(X', \gamma')$ is a $\cat{C}$-morphism $f : X \to X'$ such that the
  following diagram commutes in~$\cat{C}$:
  $$
    \begin{tikzcd}
      X
            \arrow[r, "f"]
            \arrow[d, "\gamma" swap]
        & X'
            \arrow[d, "\gamma'"] \\
      \fun{T}X     \arrow[r, "\fun{T}f"]
        & \fun{T}X'
    \end{tikzcd}
  $$
  We write $\cat{Coalg}(\fun{T})$ for the category of $\fun{T}$-coalgebras
  and $\fun{T}$-coalgebra morphisms.

\begin{exa}\label{exm:coalg}
\begin{enumerate}
  \item[]
    \item \label{it:exm:coalg-P}
          Let $\fun{P} : \cat{Set} \to \cat{Set}$ be the covariant
          powerset functor on $\cat{Set}$. It is well known
          (see, e.g., \cite[Example 2.1]{Rut00})
          that $\cat{KF}$ is isomorphic to $\cat{Coalg}(\fun{P})$.
    \item \label{it:exm:coalg-B}
          Let $\contra : \cat{Set} \to \cat{Set}$ be the contravariant
          powerset functor on $\cat{Set}$.%
            \footnote{Note that $\contra : \cat{Set} \to \cat{Set}$ is the
            composition of $\wp : \cat{Set} \to \cat{CABA}$ with the forgetful
            functor $\cat{CABA} \to \cat{Set}$.}
          Then $\cat{NF}$ is isomorphic to
          $\cat{Coalg}(\contra\contra)$ \cite[Example 9.5]{Ven07}.
          We abbreviate $\fun{B} := \contra\contra$, so
          $\cat{NF} \cong \cat{Coalg}(\fun{B})$.
    \item Let $\fun{V} : \cat{Stone} \to \cat{Stone}$ be the Vietoris endofunctor
          on $\cat{Stone}$. We recall (see e.g.~\cite[Definition 2.5]{KupKurVen04})
          that the \emph{Vietoris space} $\fun{V}\topo{X}$ of a Stone space $\topo{X}$
          is the set of closed subsets of $\topo{X}$
          whose topology is generated by the (clopen) subbasis
          $$
            \lbox a = \{ c \in \fun{V}\topo{X} \mid c \subseteq a \},
            \qquad
            \ldiamond a = \{ c \in \fun{V}\topo{X} \mid c \cap a \neq \emptyset \},
          $$
          where $a$ ranges over the clopen subsets of $\topo{X}$.
          (Note that $\ldiamond a = \fun{V}\topo{X} \setminus \lbox (\topo{X} \setminus a)$,
          so the topology can alternatively be defined by taking the Boolean closure of
          $\{ \lbox a \mid a \in \fun{clp}\topo{X} \}$ as a basis.)
          It is well known that $\fun{V}X$ is a Stone space, and that the assignment
          $\fun{V}$ extends to an endofunctor on $\cat{Stone}$ by setting
          $\fun{V}f : \fun{V}\topo{X} \to \fun{V}\topo{X}' : c \mapsto f[c]$
          for a continuous function $f : X \to X'$.
          It is also well known that $\cat{DKF}$ is isomorphic to $\cat{Coalg}(\fun{V})$;
          see, e.g., \cite{Esa74, Abr88, KupKurVen04}.
    \item In Section \ref{subsec:dnf-coalg} we will define a neighborhood
          analogue $\fun{D}$ of the Vietoris endofunctor such that the category
          $\cat{DNF}$ of descriptive neighborhood frames is isomorphic
          to~$\cat{Coalg}(\fun{D})$.
  \end{enumerate}
\end{exa}

  Since $\cat{MA}$ is isomorphic to $\cat{Alg}(\fun{K})$ and $\cat{DKF}$ is
  isomorphic to $\cat{Coalg}(\fun{V})$, a convenient way to obtain
  J\'onsson-Tarski duality is to prove that Stone duality between $\cat{BA}$
  and $\cat{Stone}$ lifts to a dual equivalence between $\cat{Alg}(\fun{K})$
  and $\cat{Coalg}(\fun{V})$. This can be done by lifting Stone duality to a
  \emph{duality of functors}. 

\begin{defi}
  We call functors $\fun{L} : \cat{BA} \to \cat{BA}$ and
  $\fun{T} : \cat{Stone} \to \cat{Stone}$ \emph{Stone-duals} if the diagram
  below commutes up to natural isomorphism.
  $$
    \begin{tikzcd}
      \cat{BA}
            \arrow[r, shift left=2pt, "\fun{uf}"]
            \arrow[d, "\fun{L}" swap]
        & \cat{Stone}
            \arrow[l, shift left=2pt, "\fun{clp}"]
            \arrow[d, "\fun{T}"] \\
      \cat{BA}
            \arrow[r, shift left=2pt, "\fun{uf}"]
        & \cat{Stone}
            \arrow[l, shift left=2pt, "\fun{clp}"]
    \end{tikzcd}
  $$
  Here $\fun{uf}$ and $\fun{clp}$ are the contravariant functors that
  establish Stone duality.
  That is, $\fun{uf}$ denotes the functor that sends a Boolean algebra to
  its Stone space of ultrafilters and a homomorphism $h$ to $h^{-1}$.
  In the other direction, $\fun{clp}$ is the functor that sends a Stone space to
  its Boolean algebra of clopen sets and a continuous function
  $f$ to $f^{-1}$.
\end{defi}

  Since algebra and coalgebra are dual concepts, we then obtain
  $$
    \cat{Alg}(\fun{L}) \equiv^{\op} \cat{Coalg}(\fun{T}).
  $$
  Thus, J\'{o}nsson-Tarski duality  follows from the fact that
  $\fun{K}$ and $\fun{V}$ are Stone-duals \cite{KupKurVen04}.

  Similarly, since $\cat{CAMA}$ is isomorphic to $\cat{Alg}(\fun{H})$ and
  $\cat{KF}$ is isomorphic to $\cat{Coalg}(\fun{P})$, we can obtain Thomason
  duality by lifting Tarski duality between $\cat{CABA}$ and $\cat{Set}$ to a
  dual equivalence between $\cat{Alg}(\fun{H})$ and $\cat{Coalg}(\fun{P})$. 
  
\begin{defi}\label{def:Tarski-duals}
  We call $\fun{L} : \cat{CABA} \to \cat{CABA}$ and
  $\fun{T} : \cat{Set} \to \cat{Set}$ \emph{Tarski-duals} if the 
  diagram below commutes up to natural isomorphism.
  $$
    \begin{tikzcd}
      \cat{CABA}
            \arrow[r, shift left=2pt, "\fun{at}"]
            \arrow[d, "\fun{L}" swap]
        & \cat{Set}
            \arrow[l, shift left=2pt, "\wp"]
            \arrow[d, "\fun{T}"] \\
      \cat{CABA}
            \arrow[r, shift left=2pt, "\fun{at}"]
        & \cat{Set}
            \arrow[l, shift left=2pt, "\wp"]
    \end{tikzcd}
  $$
  \end{defi}
  As with Stone-dual functors, this gives rise to a duality
  $$
    \cat{Alg}(\fun{L}) \equiv^{\op} \cat{Coalg}(\fun{T}).
  $$
  Thomason duality can now be obtained from the fact that $\fun{H}$ and $\fun{P}$
  are Tarski-duals, which is proven in \cite[Theorem 4.3]{BezCarMor20}.

\section{Thomason type dualities for neighborhood frames}\label{sec:thomason}

  In this section we derive Thomason type dualities for classes of neighborhood
  frames. We focus on classes of CANAs and classes of neighborhood frames that
  are described by the so-called one-step axioms. As we will see, the
  corresponding dualities are then obtained as algebra/coalgebra dualities.
  Indeed, classes of CANAs defined by one-step axioms can be viewed as
  categories of algebras for some endofunctor on $\cat{CABA}$, while the
  corresponding classes of neighborhood frames as categories of coalgebras for
  an endofunctor on $\cat{Set}$. We can then obtain the desired duality as a
  \emph{duality of functors} (see Definition~\ref{def:Tarski-duals}).
  
  Our results generalize those in \cite{BezCarMor20},
  where a functor duality is proved between the endofunctor $\fun{H}$
  on $\cat{CABA}$ whose algebras are CAMAs (Example~\ref{exm:alg}\eqref{it:exm:alg-H})
  and the powerset functor $\fun{P} : \cat{Set} \to \cat{Set}$
  (Example~\ref{exm:coalg}\eqref{it:exm:coalg-P}).
  In Example~\ref{exm:fun-dual-Cinf} we detail how the results of
  \cite{BezCarMor20} fit in our general scheme. 
  The algebra/coalgebra dualities we obtain pave the way for
  investigations of the resulting classes of frames using methods of
  \emph{coalgebraic logic}, such as developed in e.g.~\cite{Ven07,KupPat11}.

  As running examples, we derive Thomason duality for Kripke frames \cite{Tho75}
  and Do\v{s}en duality for neighborhood frames \cite[Theorem~12]{Dos89}.
  Section \ref{sec:harvest} is dedicated to deriving new dualities
  for various classes of neighborhood frames using
  the theory from this section.

\subsection{Infinitary languages}\label{subsec:ax}

  We will work in a modal language with arbitrary conjunctions.
  Throughout the paper, for each cardinal $\kappa$ we
  fix a set of variables $V_\kappa$ of cardinality $\kappa$ such that
  $\lambda \leq \kappa$ implies $V_{\lambda} \subseteq V_{\kappa}$
  for all cardinals $\lambda$, $\kappa$.
  Define $\mf{L}^\kappa$ to be the set of formulae generated by the grammar
  $$
    \textstyle
    \phi ::= v \mid \top \mid \neg\phi \mid \bigwedge_{i \in I}\phi_i
  $$
  where $v \in V_\kappa$ and $I$ is some index set of cardinality $< \kappa$.
  We then define $\mf{L}$ as the proper class that contains the formulae in
  $\mf{L}^{\kappa}$ for all cardinals $\kappa$.

  The study of propositional and first-order languages with infinite
  conjunctions was pioneered by Scott and Tarski \cite{ScoTar58} and
  Tarski \cite{Tar58}. In both these references
  the size of conjunctions (and quantifiers) is bounded by some cardinal.
  First-order logics that allow conjunctions and disjunctions over arbitrary
  sets of formulae have been studied comprehensively; see, e.g., \cite{Cha68}, 
  \cite[Part~C]{BarFef85}, \cite{Bel16}, and the references therein.
  
  We next extend the grammar of $\mf{L}$ with a modal operator.
   
\begin{defi}\label{def:inf-lan}
  For each cardinal $\kappa$,
  define $\mf{L}_{\Box}^{\kappa}$ as the set of formulae
  generated by the grammar
  $$
    \phi ::= v
      \mid \top
      \mid \neg\phi
      \mid \bigwedge_{i \in I}\phi_i
      \mid \Box\phi
  $$
  where $v \in V_{\kappa}$ and $I$ is an index set of cardinality $< \kappa$.
  We then define $\mf{L}_{\Box}$ as the proper class that
  contains the formulae in $\mf{L}_{\Box}^{\kappa}$ for all cardinals $\kappa$.
  We will think of elements of $\mf{L}_{\Box}$ as axioms. 
\end{defi}
  
  Note that $\mf{L}^{\kappa}$ is the $\Box$-free fragment of $\mf{L}_{\Box}^{\kappa}$.

\begin{defi}\label{def:ass-sat}
  Let $(A, \dbox)$ be a CANA.
  An \emph{assignment} for $(A, \dbox)$ is a family $\theta_{\kappa} : V_{\kappa} \to A$,
  where $\kappa$ ranges over the cardinals,
  such that $\theta_{\lambda}$ and $\theta_{\kappa}$ agree on all variables in $V_{\lambda}$ whenever $\lambda \leq \kappa$.
  Every assignment $\theta_{\kappa} : V_{\kappa} \to A$ can be extended in an obvious
  way to a map
  $$
    \hat{\theta}_{\kappa} : \mf{L}_{\Box}^\kappa \to A.
  $$
  An assignment gives rise to a map $\hat{\theta} : \mf{L}_{\Box} \to A$
  that sends $\phi \in \mf{L}_{\Box}^{\kappa}$ to $\hat{\theta}_{\kappa}(\phi) \in A$.
  \begin{enumerate}
  \item If $\phi \in \mf{L}_{\Box}$, then we say that \emph{$(A, \dbox)$ validates $\phi$}
        and write $(A, \dbox) \Vdash \phi$ if $\hat{\vartheta}(\phi) = 1$
        for every assignment $\vartheta : (V_{\kappa})_{\kappa \in \Card} \to A$. 
  \item If $\Ax \subseteq \mf{L}_{\Box}$ is a class of axioms, then we say
        that \emph{$(A, \dbox)$ validates $\Ax$}
        and write $(A, \dbox) \Vdash \Ax$ if $(A, \dbox) \Vdash \phi$
        for all $\phi \in \Ax$.
  \end{enumerate}
\end{defi}

  We sometimes write $\theta : V \to A$ for the family of assignments
  $\theta_{\kappa} : V_{\kappa} \to A$. By $v \in V$ we then mean a variable
  in $V_{\kappa}$ for some $\kappa$, and $\theta(v)$ denotes $\theta_{\kappa}(v)$,
  where $\kappa$ is such that $v \in V_{\kappa}$.
  An assignment $\theta : V \to A$ for a CABA $A$
  gives rise to several other maps, listed in Table~\ref{table:theta-maps}.
  Note that we use the font $\mf{L}$ for languages, and
  $\fun{L}$ for the endofunctor on $\cat{CABA}$ from Proposition~\ref{prop:alg}.

\begin{table}[h]
  \centering
  \begin{tabular}{l p{6.7cm}l}
    \toprule
      Map
        & Purpose
        & Location \\ \toprule
      $\hat{\theta} : \mf{L}_{\Box} \to A$
        & To evaluate axioms in a CANA $(A, \dbox)$
        & Def.~\ref{def:ass-sat} \\ \midrule
      $\ov{\theta} : (\mf{L}_{\Box})^1 \to \fun{L}A$
        & To evaluate one-step axioms in $\fun{L}A$
        & Def.~\ref{def:theta-ov} \\ \midrule
      $\theta^t : (\mf{L}_{\Box})^1 \to \fun{P}A$
        & To evaluate one-step axioms as subsets of the powerset of $A$
        & Def.~\ref{def:whattocallthis} \\
    \bottomrule
  \end{tabular}
  \caption{Different maps arising from $\theta$.}
  \label{table:theta-maps}
\end{table}

\begin{defi}
  If $\Ax$ is a class of axioms, then we write $\cat{CANA}(\Ax)$ for the
  full subcategory of $\cat{CANA}$ whose objects validate $\Ax$.
\end{defi}

\subsection{One-step axioms}

  We next concentrate on the so-called one-step axioms. Intuitively, these are
  formulae in $\mf{L}_{\Box}$ such that every variable occurs in the scope of
  precisely one box. We will see that, in case $\Ax$ consists solely of one-step
  axioms, the category $\cat{CANA}(\Ax)$ is isomorphic to
  $\cat{Alg}(\fun{L}_{\Ax})$ for some endofunctor $\fun{L}_{\Ax}$ on $\cat{CABA}$.
  
\begin{defi}\label{def:one-step}
  For each cardinal $\kappa$,
  define $(\mf{L}_{\Box}^{\kappa})^1$ as the set of
  formulae generated by the grammar
  $$
    \textstyle
    \phi ::= \Box\pi \mid \top \mid \neg\phi \mid \bigwedge_{i \in I}\phi_i
  $$
  where $\pi \in \mf{L}(V_\kappa)$ and $I$ is some index set of cardinality
  $< \kappa$. We then define $(\mf{L}_{\Box})^1$ as the proper class that
  contains the formulae in $(\mf{L}_{\Box}^{\kappa})^1$ for all cardinals $\kappa$.
  A \emph{one-step axiom} is a formula $\phi \in (\mf{L}_{\Box})^1$.
\end{defi}

\begin{rem}
  When viewed as formulae in the modal language $\mf{L}_{\Box}$, the formulae in
  $(\mf{L}_{\Box})^1$ are sometimes referred to as ``formulae of modal depth 1''
  or ``rank 1 formulae.'' This justifies our notation. Observe that every
  one-step axiom is in particular an axiom in the sense of Section~\ref{subsec:ax}.
\end{rem}

\begin{exa}
  \begin{enumerate}
    \item[]
    \item Using the standard abbreviations $\bot, \to, \leftrightarrow$,
          and $\bigvee$, examples of one-step axioms are
          $\Box v \to \Box(v \vee u)$,
          $\Box v \wedge \Box u \leftrightarrow \Box (v \wedge u)$, and
          $$
            \bigwedge_{\lambda < \kappa} \Box v_{\lambda}
              \leftrightarrow \Box \bigg(\bigwedge_{\lambda < \kappa} v_{\lambda} \bigg),
          $$
          where $v, u, v_{\lambda} \in V_\kappa$ and $\lambda,\kappa$ are cardinals.
    \item On the other hand, the axioms $\Box\Box v \to \Box v$, $v \to \Box v$, 
          and $\Box v \wedge v \leftrightarrow \Box\Box v$ are not one-step axioms.
  \end{enumerate}
\end{exa}

  The appeal of one-step axioms lies in the fact that they define endofunctors
  on $\cat{CABA}$ in a structured way. All of these are subfunctors of
  $\fun{L} : \cat{CABA} \to \cat{CABA}$
  from Proposition~\ref{prop:alg}\eqref{it:exm:alg-L}.
  If $\phi$ is a one-step axiom and
  $\theta : V \to A$ is an assignment, then $\theta$ gives rise to a map
  $\ov{\theta} : (\mf{L}_{\Box})^1 \to \fun{L}A$.
  Note that this does \emph{not} rely on any CANA-structure on $A$.
  Indeed, we can use the fact that $\phi$ is of modal depth 1 and define
  $\ov{\theta}$ as follows.
  
\begin{defi}\label{def:theta-ov}
  For a CABA $A$ and assignment $\theta : V \to A$
  we define $\ov{\theta} : (\mf{L}_{\Box})^1 \to \fun{L}A$ recursively via
  \begin{align*}
    \ov{\theta}(\Box\pi) &= \Box\hat{\theta}(\pi), \\
    \intertext{for $\pi \in \mf{L}$ (which is well defined because $\pi$
               does not contain any boxes), and}
    \ov{\theta}(\top) &= 1 \\
    \ov{\theta}(\neg\phi) &= \neg\ov{\theta}(\phi) \\
    \ov{\theta}\Big(\bigwedge_{i \in I}\phi_i\Big) &= \bigwedge_{i \in I} \ov{\theta}(\phi_i)
  \end{align*}
\end{defi}

  Intuitively, the endofunctor on $\cat{CABA}$ corresponding to a collection
  $\Ax$ of one-step axioms sends $A \in \cat{CABA}$ to the free complete atomic Boolean
  algebra generated by $A$ modulo (instantiations of) the axioms,
  i.e., modulo the relations $\ov{\theta}(\phi) = 1$, where $\phi \in \Ax$
  and $\theta$ is an assignment for $A$.
  Before defining this functor, we recall the notion of a
  complete congruence on a CABA.

\begin{defi}
  A \emph{complete congruence} on a CABA $A$ is an equivalence relation $\sim$
  on the underlying set such that $a \sim b$ implies $\neg a \sim \neg b$,
  and $a_i \sim b_i$ for all $i$ in some index set $I$ implies
  $\bigwedge_{i \in I} a_i \sim \bigwedge_{i \in I} b_i$.
\end{defi}
  
  If $\sim$ is a complete congruence on $A$, then 
  we can define the quotient CABA $A/{\sim}$. Writing $[a]$ for the
  equivalence class of $a$, the CABA-operations on $A/{\sim}$ are 
  defined by $1 = [1_A]$, $\neg [a] = [\neg a]$,
  and $\bigwedge [a_i] = [\bigwedge a_i]$.
  Furthermore, the quotient map $q : A \to A/{\sim} : a \mapsto [a]$
  is a complete homomorphism.
  
  The collection of complete congruences on a CABA is closed under arbitrary
  intersections. Therefore, if we have a collection $R$ of equations on $A$ (that is, a collection of equations of the form $a = b$, where $a, b \in A$),
  then we can define the complete congruence generated by $R$ to be the smallest
  complete congruence $\sim_R$ that contains $a \sim_R b$ for all equations
  $a = b$ in $R$.
  Consequently, $A/{\sim_R}$ is the largest quotient (= complete homomorphic image)
  of $A$ in which all equations in $R$ hold.

\begin{defi}
  Let $\Ax$ be a class of one-step axioms. Write
  $\Ax^{\kappa} := \Ax \cap \mf{L}_{\Box}^{\kappa}$ for the subset
  of one-step axioms from $\mf{L}_{\Box}^{\kappa}$
  and $\Ax^{\leq \kappa}$ for the union $\bigcup_{\lambda \leq \kappa} \Ax^{\lambda}$.
  We call $\Ax$ \emph{increasing} if for every CABA $A$
  and assignment $\theta : V \to A$, the following condition holds:
  If $\kappa$ is the cardinality of the powerset of $\fun{L}A$
  and $\ov{\theta}(\phi) = 1$ for all $\phi \in \Ax^{\leq\kappa}$,
  then $\ov{\theta}(\phi) = 1$ for all $\phi \in \Ax$.
\end{defi}

  The idea behind an increasing set of axioms is that, when we want to
  take the quotient of a free CABA with axioms in a proper class $\Ax$,
  we only need to use axioms from $\mf{L}_{\Box}^{\kappa}$,
  where $\kappa$ is the cardinality of the powerset of the free CABA.
  This ensures that the quotient is well defined.
  An example of a class of axioms that is increasing
  is given after Example~\ref{exm:fun-L-empty}.

  If $\Ax$ consists of a set of axioms (rather than a proper class),
  then we can make it increasing by simply adding for each axiom
  $\phi \in \Ax^{\kappa}$, all instantiations
  of $\phi$ where the variables are replaced by variables from
  $V_{\lambda}$ where $\lambda < \kappa$.
  In practice, when dealing with a set of axioms, we will not make this
  explicit.
  
  We are now ready to define an endofunctor $\fun{L}_{\Ax}$ on $\cat{CABA}$
  from an increasing collection $\Ax$ of one-step axioms.

\begin{defi}\label{def:ax-fun}
  Let $\Ax$ be an increasing collection of one-step axioms.  
  For $A \in \cat{CABA}$, define $\fun{L}_{\Ax}A$ to be
  the free complete atomic Boolean algebra generated by the set
  $\{ \Box a \mid a \in A \}$ modulo the congruence ${\sim}_{\Ax}$ generated by
  $\{ \ov{\vartheta}(\phi) \sim_{\Ax} 1 \}$, where $\phi$ ranges over the
  axioms in $\Ax$ and $\vartheta$ ranges over the assignments
  $V \to A$ for the variables in $\phi$.
  For a complete homomorphism $h : A \to A'$ define $\fun{L}_{\Ax}h$ on generators by
  $$
    \fun{L}_{\Ax}h(\Box a) = \Box h(a).
  $$
\end{defi}

  If $\Ax$ consists of a single axiom $\sf{ax}$, then we write
  $\fun{L}_{\sf{(ax)}}$ instead of $\fun{L}_{\{ \sf{(ax)}\}}$.
  Many well-known functors in modal logic can be obtained via
  Definition~\ref{def:ax-fun}. We give two examples.

\begin{exa}\label{exm:fun-L-empty}
  If we work with no axioms (i.e., $\Ax = \varnothing)$, then the
  functor that arises from Definition~\ref{def:ax-fun} is precisely the
  functor $\fun{L}$ from Proposition~\ref{prop:alg}\eqref{it:exm:alg-L}.
\end{exa}

  In our next example we recover the endofunctor
  $\fun{H} : \cat{CABA} \to \cat{CABA}$ from \cite[Section~4]{BezCarMor20},
  where $\fun{H} $ sends a CABA $A$ to the free CABA generated by the 
  complete meet-semilattice underlying $A$
  (see Example~\ref{exm:alg}\eqref{it:exm:alg-H}).
  
  Intuitively, the CABA $\fun{H}A$ is the free CABA generated by the set
  $\{ \Box a \mid a \in A \}$ modulo
  $\bigwedge_{b \in X} \Box b = \Box \bigwedge X$
  for every $X \subseteq A$.
  Although this looks like the instantiation of a single one-step axiom, it is not.
  Indeed, if we define
  $$
    \phi = \bigwedge_{v \in V_{\kappa}} \Box v \leftrightarrow \Box \bigg( \bigwedge V_{\kappa} \bigg)
  $$
  then the axiom $\phi$ only implies \emph{$\kappa$-distributivity}.
  To remedy this, we work with a class of axioms indexed by the class $\Card$
  of cardinal numbers. For each $\kappa \in \Card$, define
  \begin{enumerate}
    \axitem{ax:Ckap}{C_{\kappa}}
          $\bigwedge \{ \Box v \mid v \in V_{\kappa} \}
            \leftrightarrow \Box \bigwedge V_{\kappa}$.
  \end{enumerate}
  Now set
  $$
    \mc{C}_{\infty} = \{ \axCkap \mid \kappa \in \Card \}.
  $$

\begin{exa}\label{exm:fun-L-H}
  Consider the increasing collection of axioms $\Ax = \mc{C}_{\infty}$.
  Then the construction of Definition~\ref{def:ax-fun} yields the
  functor $\fun{H}$ from \cite{BezCarMor20}.
  Its algebras correspond to CAMAs.
\end{exa}

  Thus, incidentally, the previous example also illustrates the need to allow
  a proper class of axioms, rather than just a set.

  Every $\fun{L}_{\Ax}$-algebra $(A, \alpha)$ gives rise to a complete atomic
  algebra $(A, \dbox_{\alpha})$, where
  $$
    \dbox_{\alpha} : A \to A : a \mapsto \alpha(\Box a).
  $$
  Furthermore, if $\phi \in \Ax$, then since
  $\hat{\vartheta}(\phi) \sim_{\Ax} \top$ for all assignments $\vartheta : V \to A$,
  we have $(A, \dbox_{\alpha}) \Vdash \phi$.
  Conversely, if $(A, \dbox)$ is a CANA and $(A, \dbox) \Vdash \Ax$,
  then we can define an $\fun{L}_{\Ax}$-algebra structure map
  $\alpha_{\dbox} : \fun{L}_{\Ax}A \to A$ on generators by
  $\alpha_{\dbox}(\Box a) = \dbox a$.
  The fact that $(A, \dbox) \Vdash \Ax$ implies that $\alpha_{\dbox}$ is
  well defined.
  
  It is easy to see that the two assignments above define a bijection between
  objects of $\cat{Alg}(\fun{L}_{\Ax})$ and objects of $\cat{CANA}(\Ax)$.
  We can extend this to a natural isomorphism in a standard way.
  
\begin{thm}\label{thm:alg-cana}
  If $\Ax$ is an increasing collection of one-step axioms, then 
  $$
    \cat{Alg}(\fun{L}_{\Ax}) \cong \cat{CANA}(\Ax).
  $$
\end{thm}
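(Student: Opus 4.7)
The plan is to exhibit functors $F : \cat{Alg}(\fun{L}_{\Ax}) \to \cat{CANA}(\Ax)$ and $G : \cat{CANA}(\Ax) \to \cat{Alg}(\fun{L}_{\Ax})$ that are mutually inverse and act as the identity on underlying complete Boolean homomorphisms. On objects, these are exactly the assignments already indicated in the paragraphs preceding the theorem: $F(A,\alpha) = (A, \dbox_\alpha)$ with $\dbox_\alpha(a) = \alpha(\Box a)$, and $G(A,\dbox) = (A, \alpha_\dbox)$ where $\alpha_\dbox$ is the unique complete homomorphism $\fun{L}_{\Ax}A \to A$ determined on generators by $\Box a \mapsto \dbox a$. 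On morphisms, a complete homomorphism $h : A \to A'$ underlying an $\fun{L}_{\Ax}$-algebra morphism $(A, \alpha) \to (A', \alpha')$ satisfies $h \circ \alpha = \alpha' \circ \fun{L}_{\Ax}h$; evaluating this on the generator $\Box a$ gives $h(\dbox_\alpha a) = \dbox_{\alpha'}h(a)$, so $h$ is a $\cat{CANA}(\Ax)$-morphism, and the converse is an analogous check on generators.

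The main obstacle is checking that these assignments make sense, which amounts to two symmetric facts. First, for well-definedness of $G$ on objects, the map $\Box a \mapsto \dbox a$ extends uniquely to a complete homomorphism from the free CABA on $\{\Box a \mid a \in A\}$ into $A$, and we must verify that this extension factors through the quotient by ${\sim}_{\Ax}$. Equivalently, every defining relation $\ov{\vartheta}(\phi) \sim_{\Ax} 1$ must be mapped to a valid equation in $A$. The key lemma here, to be proved by a straightforward induction on the structure of $\phi \in (\mf{L}_{\Box})^1$ (using Definition~\ref{def:theta-ov}), is that the extension sends $\ov{\vartheta}(\phi)$ to $\hat{\vartheta}(\phi)$; since $(A,\dbox) \Vdash \Ax$ by assumption, the latter equals $1_A$. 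Second, to see that $F$ lands in $\cat{CANA}(\Ax)$, the same inductive identity gives $\alpha(\ov{\theta}(\phi)) = \hat{\theta}(\phi)$ in $(A, \dbox_\alpha)$ for every $\phi \in \Ax$ and every assignment $\theta$; since $\ov{\theta}(\phi) = 1$ holds in $\fun{L}_{\Ax}A$ by construction of the quotient, and $\alpha$ preserves the top element, we obtain $\hat{\theta}(\phi) = 1$ as required.

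Once both assignments are known to respect the extra structure, mutual inverseness on objects is immediate: $\dbox_{\alpha_\dbox}(a) = \alpha_\dbox(\Box a) = \dbox a$, and conversely $\alpha_{\dbox_\alpha}$ and $\alpha$ are two complete homomorphisms $\fun{L}_{\Ax}A \to A$ that agree on the generators $\Box a$, hence coincide. Functoriality of $F$ and $G$ and their being mutually inverse on morphisms are then trivial, since each functor is the identity on the underlying complete Boolean homomorphisms. The inductive identity linking $\ov{\theta}$ and $\hat{\theta}$ (together with the argument that the relevant map is well-defined on the proper-class quotient, for which the \emph{increasing} hypothesis on $\Ax$ guarantees that only set-many axioms need to be checked to describe the quotient) is essentially all the content of the proof; everything else is a formal bookkeeping exercise.
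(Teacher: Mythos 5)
Your proposal is correct and follows essentially the same route as the paper: the object-level bijection via $\dbox_{\alpha}(a) = \alpha(\Box a)$ and $\alpha_{\dbox}(\Box a) = \dbox a$, and the morphism-level check on generators showing $h \circ \alpha = \alpha' \circ \fun{L}_{\Ax}h$ iff $h(\dbox a) = \dbox' h(a)$. The paper only sketches the object part in the paragraphs preceding the theorem; you usefully make explicit the inductive identity relating $\ov{\vartheta}$ and $\hat{\vartheta}$ that underlies both well-definedness claims, but this is a filling-in of detail rather than a different argument.
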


\begin{proof}[Proof sketch.]
  The isomorphism on objects has already been sketched.
  To prove the isomorphism on morphisms, let $(A, \alpha)$ and
  $(A', \alpha')$ be two $\fun{L}_{\Ax}$-algebras with the corresponding
  CANAs $(A, \dbox)$ and $(A', \dbox')$.
  We claim that a complete homomorphism $h : A \to A'$ is an $\fun{L}_{\Ax}$-algebra
  morphism from $(A, \alpha)$ to $(A', \alpha')$ iff 
  it is a CANA-morphism from $(A, \dbox)$ to $(A', \dbox')$.
  
  Note that $h$ is an $\fun{L}_{\Ax}$-algebra morphism iff
  \begin{equation}\label{eq:Lax-alg-mor}
    h(\alpha(\Box a)) = \alpha'(\fun{L}_{\Ax}h(\Box a))
    \qquad \text{for all $a \in A$.}
  \end{equation}
  Since $\alpha(\Box a) = \dbox a$ and
  $\alpha'(\fun{L}_{\Ax}h(\Box a)) = \alpha'(\Box h(a)) = \Box'h(a)$,
  \eqref{eq:Lax-alg-mor} holds iff
  $$
    h(\dbox a) = \dbox'(h(a)) \qquad \text{for all $a \in A$.}
  $$
  In other words, \eqref{eq:Lax-alg-mor} holds iff $h$ is
  a CANA-morphism.
\end{proof}

  Going back to the class of axioms $\mc{C}_{\infty}$ from Example~\ref{exm:fun-L-H}, we see that Theorem~\ref{thm:alg-cana}
  generalizes \cite[Theorem~4.7]{BezCarMor20}:

\begin{exa}\label{exm:Cinfty-H-CAMA}
  Suppose $\Ax = \mc{C}_{\infty}$.
  Then $\fun{L}_{\Ax} = \fun{H}$ and $\cat{CANA}(\Ax) \cong \cat{CAMA}$,
  so that
  $$
    \cat{Alg}(\fun{H}) \cong \cat{CAMA}.
  $$
\end{exa}

\subsection{A functor duality theorem}\label{subsec:fdt-set}

  We now classify the atoms of
  $\fun{L}_{\Ax}A$ as subsets of $A$. This then gives rise to an endofunctor on
  $\cat{Set}$ which is a subfunctor of $\fun{B}$ of
  Example~\ref{exm:coalg}\eqref{it:exm:coalg-B} and is dual to $\fun{L}_{\Ax}$.
  We make use of the following notation to characterize the subsets of $A$ we are
  interested in.
  
\begin{defi}
  Let $A$ be a set and $a \in A$. Define
  $$
    \lbox a = \{ W \subseteq A \mid a \in W \}.
  $$
\end{defi}

  We can assign to each one-step axiom a subset of $\fun{P}A$.

\begin{defi}\label{def:whattocallthis}
  Let $\phi \in (\mf{L}^{\kappa}_{\Box})^1$ be a one-step axiom,
  $A$ a set and $\vartheta : V \to A$ an
  assignment.
  Define $\theta^t(\phi)$ to be
  the subset of $\fun{P}A$ given recursively by:
  \begin{align*}
    \theta^t(\Box v) &= \lbox \vartheta(v) \\
    \theta^t(\top) &= \fun{P}A \\
    \theta^t(\neg\phi) &= \fun{P}A \setminus \theta^t(\phi) \\
    \theta^t\textstyle (\bigwedge \phi_i) &= \textstyle\bigcap \{ \theta^t(\phi_i) \}
  \intertext{It then follows that:}
    \theta^t(\tbigvee \phi_i)
      &= \bigcup \{ \theta^t(\phi_i) \} \\
    \theta^t(\phi \to \psi)
      &= \{ W \subseteq A \mid W \in \theta^t(\phi) \Rightarrow W \in \theta^t(\psi) \} \\
    \theta^t(\phi \leftrightarrow \psi)
      &= \{ W \subseteq A \mid W \in \theta^t(\phi) \Leftrightarrow W \in \theta^t(\psi) \}
  \end{align*}
  We say that $W$ is a \emph{$\phi$-subset} of $A$
  if $W \in \theta^t(\phi)$ for every assignment $\vartheta$ of the variables in $V$.
  If $\Ax$ is a collection of axioms, then we say that
  $W$ is an \emph{$\Ax$-subset} if $W$ is a $\phi$-subset for all $\phi \in \Ax$.
\end{defi}

  The next lemma witnesses the significance of $\Ax$-subsets by
  proving a bijective correspondence between atoms
  of $\fun{L}_{\Ax}A$ and $\Ax$-subsets of $A$.
  Recall that atoms of a CABA $A$ correspond bijectively to complete
  homomorphisms into the two-element Boolean algebra $2$: If $a \in A$ is an atom, then
  $p_a : A \to 2$, given by $p_a(b) = 1$ iff $a \leq b$, defines a
  complete homomorphism. Conversely, every complete homomorphism $p$ arises in this way,
  where $a = \bigwedge \{ b \in A \mid p(b) = 1 \}$.

\begin{lem}\label{lem:uf-set}
  Let $A \in \cat{CABA}$ and let $\Ax$ be an increasing collection of axioms.
  Then the atoms of $\fun{L}_{\Ax}A$ correspond bijectively
  to $\Ax$-subsets of $A$.
\end{lem}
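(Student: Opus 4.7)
The plan is to identify atoms of $\fun{L}_{\Ax}A$ with complete homomorphisms $\fun{L}_{\Ax}A \to 2$, lift them to complete homomorphisms out of the unrestricted $\fun{L}A$, and then read off the corresponding subsets of $A$.

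First, since $\fun{L}A$ is the free CABA on the set $\{\Box a \mid a \in A\}$, a complete homomorphism $\fun{L}A \to 2$ is determined by its action on generators, and therefore by the set
$$
  W = \{ a \in A \mid p(\Box a) = 1 \}.
$$
Writing $p_W$ for the resulting homomorphism, the correspondence $W \leftrightarrow p_W$ is a bijection between subsets of $A$ and atoms of $\fun{L}A$ (since atoms of a CABA correspond bijectively to complete homomorphisms into $2$, as recalled just before the statement).

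Next, I would pass to the quotient. Since $\fun{L}_{\Ax}A = \fun{L}A/{\sim_{\Ax}}$ with quotient map $q$, a complete homomorphism $\fun{L}_{\Ax}A \to 2$ is the same thing as a complete homomorphism $p : \fun{L}A \to 2$ that is constant on $\sim_{\Ax}$-classes. Because $\sim_{\Ax}$ is generated by the pairs $\ov{\vartheta}(\phi) \sim_{\Ax} 1$ for $\phi \in \Ax$ and assignments $\vartheta : V \to A$, and because $p_W$ already respects all Boolean and infinitary operations, this is equivalent to the single condition
$$
  p_W(\ov{\vartheta}(\phi)) = 1
  \quad\text{for every $\phi \in \Ax$ and every assignment $\vartheta$.}
$$
Thus atoms of $\fun{L}_{\Ax}A$ correspond bijectively to subsets $W \subseteq A$ satisfying this condition.

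It remains to show that $p_W(\ov{\vartheta}(\phi)) = 1$ iff $W \in \vartheta^t(\phi)$, which I would prove by induction on the structure of $\phi \in (\mf{L}_{\Box})^1$ using Definition~\ref{def:theta-ov} and Definition~\ref{def:whattocallthis}. The base case $\phi = \Box v$ is immediate: $p_W(\ov{\vartheta}(\Box v)) = p_W(\Box\vartheta(v)) = 1$ iff $\vartheta(v) \in W$ iff $W \in \lbox \vartheta(v) = \vartheta^t(\Box v)$. The cases for $\top$, negation, and (infinitary) conjunction follow directly from the fact that $p_W$ is a complete Boolean homomorphism, matching the recursive definition of $\vartheta^t$. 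Granting this, $W$ gives rise to an atom of $\fun{L}_{\Ax}A$ iff $W \in \vartheta^t(\phi)$ for every $\phi \in \Ax$ and every assignment $\vartheta$, which is exactly the definition of $W$ being an $\Ax$-subset. The only delicate point is the quotient step, where one must be careful that the generating relations of $\sim_{\Ax}$ indeed capture precisely the homomorphisms factoring through the quotient; this is routine but is where the bulk of the bookkeeping lies.
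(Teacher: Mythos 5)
Your proposal is correct and follows essentially the same route as the paper's proof: identify atoms with complete homomorphisms into $2$, pass between $\fun{L}_{\Ax}A$ and $\fun{L}A$ via the quotient map (using that the kernel of a complete homomorphism is a complete congruence, hence contains $\sim_{\Ax}$ as soon as it contains the generating pairs), and establish $p_W(\ov{\vartheta}(\phi)) = 1 \Leftrightarrow W \in \vartheta^t(\phi)$ by induction on the structure of one-step axioms. The step you flag as delicate is exactly the one the paper also isolates, and your treatment of it is adequate.
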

\begin{proof}
  We view atoms of $\fun{L}_{\Ax}A$ as complete homomorphisms
  $p : \fun{L}_{\Ax}A \to 2$.
  Since $\fun{L}_{\Ax}A$ is defined by generators and relations,
  $p$ is uniquely determined by its action 
  on the generators of $\fun{L}_{\Ax}A$, i.e., the elements of the
  form $\Box b$ with $b \in A$. Let $W_p \subseteq A$ be the set
  \begin{equation}\label{eq:uf-to-set}
    W_p = \{ b \in A \mid p(\Box b) = 1 \}.
  \end{equation}
  Conversely, given a subset $W \subseteq A$, we define
  $p_W : \fun{L}_{\Ax}A \to 2$ on generators by
  $p_W(\Box b) = 1$ iff $b \in W$.
  
  In order to prove that these assignments are well defined, we use the fact that
  complete homomorphisms $p : \fun{L}_{\Ax}A \to 2$ correspond to complete homomorphisms
  $p' : \fun{L}A \to 2$ whose kernel contains the complete congruence $\sim_{\Ax}$
  generated by (instantiations of) the axioms in $\Ax$ (see Definition~\ref{def:ax-fun}).
  Therefore, for $W \subseteq A$,
  we let $p'_W : \fun{L}A \to 2$ be the complete homomorphism defined
  by $p'_W(\Box b) = 1$ iff $b \in W$.
  
  \begin{claim}
    Let $p : \fun{L}A \to 2$ be a complete homomorphism. 
    Then for all one-step axioms $\phi$ and assignments
    $\vartheta : V \to A$ we have
    $$
      p(\ov{\vartheta}(\phi)) = 1 \iff W_p \in \theta^t(\phi).
    $$
  \end{claim}
  \begin{proof}[Proof of claim]
    We proceed by induction on the complexity of $\phi$.
    If $\phi = \Box v$, where $v$ is in one of the $V_{\kappa}$, then we have
    \begin{align*}
      p(\ov{\vartheta}(\Box v)) = 1
        &\iff p(\Box\vartheta(v)) = 1 \\
        &\iff \vartheta(v) \in W_p \\
        &\iff W_p \in \lbox(\vartheta(v)) \\
        &\iff W_p \in \theta^t(\Box v).
    \end{align*}
    Let $\phi = \top$. By definition, $p(\ov{\vartheta}(\phi)) = 1$ for all
    complete homomorphisms $p$. Since $\vartheta^t(\phi) = \fun{P}A$, the result
    holds for $\phi = \top$. For negation, we have
    \begin{align*}
      p(\ov{\vartheta}(\neg\phi)) = 1
        &\iff p(\neg\ov{\vartheta}(\phi)) = 1 \\
        &\iff p(\ov{\vartheta}(\phi)) = 0 \\
        &\iff W_p \notin \theta^t(\phi) &\text{(inductive hypothesis)}\\
        &\iff W_p \in \fun{P}A \setminus \theta^t(\phi) = \theta^t(\neg\phi).
    \end{align*}
    Finally, if $\phi = \bigwedge\phi_i$ then
    \begin{align*}
      \textstyle
      p(\ov{\vartheta}(\bigwedge \phi_i)) = 1
        &\iff p(\textstyle\bigwedge \ov{\vartheta}(\phi_i)) = 1 \\
        &\iff \textstyle\bigwedge p(\ov{\vartheta}(\phi_i)) = 1 \\
        &\iff W_p \in \theta^t(\phi_i) \quad\text{ for all $i$} &\text{(inductive hypothesis)}\\
        &\iff W_p \in \bigcap \theta^t(\phi_i) = \theta^t(\textstyle\bigwedge \phi_i).
    \end{align*}
    This completes the proof of the claim.
  \end{proof}
  
  Now let $p : \fun{L}_{\Ax}A \to 2$ be a complete homomorphism.
  Then composing the quotient map $q : \fun{L}A \to \fun{L}_{\Ax}A$ with $p$
  yields a complete homomorphism
  \begin{equation}\label{eq:p-after-q}
    p \circ q : \fun{L}A \to 2
  \end{equation}
  whose kernel contains $\sim_{\Ax}$.
  Moreover, $W_p = W_{p \circ q}$, where $W_{p \circ q}$ is defined as in
  \eqref{eq:uf-to-set} for the complete homomorphism from \eqref{eq:p-after-q},
  because $q$ sends the generator $\Box a$ of $\fun{L}A$ to the equivalence class
  of $\Box a$ in $\fun{L}_{\Ax}A$.
  Consequently,
  $$
    W_p \in \theta^t(\phi)
      \iff W_{p \circ q} \in \theta^t(\phi)
      \iff (p \circ q)(\ov{\vartheta}(\phi)) = 1
      \iff p(\ov{\vartheta}(\phi)) = 1.
  $$
  Since $\ov{\vartheta}(\phi) \sim_{\Ax} 1$
  for all $\phi \in \Ax$, we have $p(\ov{\vartheta}(\phi)) = 1$ for all $\phi \in \Ax$.
  Using the claim, this proves that $W_p$ is an $\Ax$-subset.
  
  Conversely, a similar computation shows that whenever $W$ is
  an $\Ax$-subset, then the kernel of $p_W' : \fun{L}A \to 2$
  contains $\sim_{\Ax}$, and hence $p_W'$ defines a complete homomorphism
  $p_W : \fun{L}_{\Ax}A \to 2$.
  In addition, 
  for each $b \in A$ we have $p_{W_p}(\Box b) = 1$ iff $b \in W_p$ iff $p(\Box b) = 1$,
  so $p_{W_p} = p$. Similarly, $b \in W_{p_W}$ iff $p_W(\Box b) = 1$ iff $b \in W$,
  and hence $W_{p_W} = W$.
  Thus, these assignments define a bijection.
\end{proof}

  Guided by Lemma~\ref{lem:uf-set}, we define an endofunctor on $\cat{Set}$
  which we then prove to be the Tarski-dual of $\fun{L}_{\Ax}$.

\begin{defi}\label{def:fun-B}
  Let $X$ be a set and $\Ax$ an increasing collection of one-step axioms.
  Define $\fun{B}_{\Ax}X$ to be the set of $\Ax$-subsets of
  $\fun{P}X$.
  For a function $f : X \to X'$ in $\cat{Set}$, define $\fun{B}_{\Ax}f$ by
  $$
    \fun{B}_{\Ax}f
      : \fun{B}_{\Ax}X \to \fun{B}_{\Ax}X'
      : W \mapsto \{ a' \in \fun{P}X' \mid f^{-1}(a') \in W \}.
  $$
\end{defi}

  If $\Ax$ consists of a single axiom $\sf{ax}$, then we write
  $\fun{B}_{\sf{(ax)}}$ instead of $\fun{B}_{\{ \sf{(ax)}\}}$.

\begin{prop}\label{prop:Bax-welldef}
  The assignment $\fun{B}_{\Ax}$ is a well-defined endofunctor on $\cat{Set}$.
\end{prop}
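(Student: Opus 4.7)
The plan is to verify three things: that $\fun{B}_{\Ax}X$ is a set for every set $X$ (trivial since $\fun{B}_{\Ax}X \subseteq \fun{PP}X$); that for every function $f : X \to X'$ the map $\fun{B}_{\Ax}f$ sends $\Ax$-subsets of $\fun{P}X$ to $\Ax$-subsets of $\fun{P}X'$; and that the assignment respects identities and composition. The first and third tasks are routine book-keeping: $\fun{B}_{\Ax}\id_X(W) = \{a \in \fun{P}X \mid \id_X^{-1}(a) \in W\} = W$, and $\fun{B}_{\Ax}(g\circ f)(W) = \fun{B}_{\Ax}g(\fun{B}_{\Ax}f(W))$ follows from $(g\circ f)^{-1} = f^{-1}\circ g^{-1}$.

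The real content is the second point, which I would isolate as a transfer lemma. Fix $f : X \to X'$, an assignment $\vartheta' : V \to \fun{P}X'$, and let $\vartheta : V \to \fun{P}X$ be its pullback, defined by $\vartheta(v) = f^{-1}(\vartheta'(v))$. I would prove by induction on the complexity of $\phi \in (\mf{L}_{\Box})^1$ that for every $W \subseteq \fun{P}X$,
\[
  W \in \vartheta^t(\phi) \iff \fun{B}_{\Ax}f(W) \in (\vartheta')^t(\phi).
\]
The base case $\phi = \Box v$ is the heart of the argument: unfolding definitions, $W \in \vartheta^t(\Box v)$ means $\vartheta(v) \in W$, i.e.\ $f^{-1}(\vartheta'(v)) \in W$, which by the definition of $\fun{B}_{\Ax}f$ is exactly $\vartheta'(v) \in \fun{B}_{\Ax}f(W)$, i.e.\ $\fun{B}_{\Ax}f(W) \in \lbox \vartheta'(v) = (\vartheta')^t(\Box v)$. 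The case $\phi = \top$ is immediate, negation uses complementation on both sides (noting that the ambient powersets $\fun{PP}X$ and $\fun{PP}X'$ are the correct universes), and conjunction passes through the intersections verbatim.

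With the transfer lemma in hand, the preservation claim is immediate: if $W$ is an $\Ax$-subset of $\fun{P}X$ and $\phi \in \Ax$, then $W \in \vartheta^t(\phi)$ for every assignment $\vartheta : V \to \fun{P}X$; in particular for the pullback $\vartheta$ of an arbitrary $\vartheta' : V \to \fun{P}X'$, whence $\fun{B}_{\Ax}f(W) \in (\vartheta')^t(\phi)$. Since $\vartheta'$ was arbitrary, $\fun{B}_{\Ax}f(W)$ is a $\phi$-subset for every $\phi \in \Ax$, hence an $\Ax$-subset.

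The principal obstacle is simply setting up the induction correctly, namely recognising that one should pull an arbitrary assignment on the target back along $f$ rather than push an assignment on the source forward (which would be ill-defined for general $f$). Once that observation is made, every clause of the induction is a one-line calculation matching the recursive clauses of Definition~\ref{def:whattocallthis}, and the functoriality checks are formal.
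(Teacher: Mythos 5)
Your proof is correct and follows essentially the same route as the paper: the paper's entire justification for preservation of $\Ax$-subsets is the single observation that $\fun{B}_{\Ax}f(W) \in \lbox a'$ iff $W \in \lbox f^{-1}(a')$, which is precisely the base case of your transfer lemma, with the rest of the induction left implicit. Your explicit pullback-of-assignments induction, and your direct verification of identities and composition (where the paper instead cites that $\fun{B}_{\Ax}$ is a subfunctor of $\fun{B}$), simply fill in the details.
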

\begin{proof}
  Clearly $\fun{B}_{\Ax}X$ is a set for every set $X$.
  Let $f : X \to X'$ be a function and let $W \in \fun{B}_{\Ax}X$.
  We need to show that $\fun{B}_{\Ax}f(W)$ is in $\fun{B}_{\Ax}X'$,
  that is, $\fun{B}_{\Ax}f(W)$ is an $\Ax$-subset of $\fun{P}X'$.
  But this follows from the fact that $\fun{B}_{\Ax}f(W) \in \lbox a'$
  iff $W \in \lbox f^{-1}(a')$.
  
  Functoriality of $\fun{B}_{\Ax}$ follows from the fact that $\fun{B}_{\Ax}$ is a subfunctor
  of $\fun{B}$.
\end{proof}

  We are ready to prove the main result of this section. 

\begin{thm}[Tarski Functor Duality Theorem]\label{thm:fun-dual}
  The functors $\fun{L}_{\Ax}$ and $\fun{B}_{\Ax}$ are Tarski-duals.
\end{thm}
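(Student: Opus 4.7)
The plan is to construct a natural isomorphism
$\eta : \fun{at} \circ \fun{L}_{\Ax} \Rightarrow \fun{B}_{\Ax} \circ \fun{at}$
between the two contravariant functors from $\cat{CABA}$ to $\cat{Set}$; the argument rests on Lemma~\ref{lem:uf-set}, which supplies the pointwise bijection, and on Tarski duality, which supplies the naturality that glues things together.

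For each $A \in \cat{CABA}$, let $\tau_A : A \to \fun{P}(\fun{at}(A))$ be the Tarski isomorphism $a \mapsto \{ x \in \fun{at}(A) \mid x \leq a \}$. I define $\eta_A$ on an atom $p$ of $\fun{L}_{\Ax}A$, viewed as a complete homomorphism $p : \fun{L}_{\Ax}A \to 2$, by setting $\eta_A(p) = \{ \tau_A(b) \mid b \in W_p \}$, where $W_p = \{ b \in A \mid p(\Box b) = 1 \}$ is the $\Ax$-subset of $A$ associated with $p$ in Lemma~\ref{lem:uf-set}. The key observation is that the clauses for $\theta^t$ in Definition~\ref{def:whattocallthis} are purely set-theoretic operations on the ambient space, so applying $\tau_A$ pointwise sends $\Ax$-subsets of $A$ to $\Ax$-subsets of $\fun{P}(\fun{at}(A))$. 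Hence $\eta_A$ lands in $\fun{B}_{\Ax}(\fun{at}(A))$, and composing Lemma~\ref{lem:uf-set} with the Tarski bijection shows $\eta_A$ is itself a bijection.

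Naturality then remains. Given $h : A \to A'$ in $\cat{CABA}$ and $p' \in \fun{at}(\fun{L}_{\Ax}A')$, I would compute both sides of the naturality square. On one side, $\fun{at}(\fun{L}_{\Ax}h)(p') = p' \circ \fun{L}_{\Ax}h$ has associated subset $h^{-1}(W_{p'}) \subseteq A$, because $\fun{L}_{\Ax}h(\Box b) = \Box h(b)$; then $\eta_A$ sends this atom to $\{ \tau_A(b) \mid h(b) \in W_{p'} \}$. On the other side, unravelling Definition~\ref{def:fun-B} gives $\fun{B}_{\Ax}(\fun{at}(h))(\eta_{A'}(p')) = \{ a \in \fun{P}(\fun{at}(A)) \mid (\fun{at}(h))^{-1}(a) \in \eta_{A'}(p') \}$. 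These agree thanks to naturality of the Tarski isomorphism, i.e.\ the identity $\tau_{A'} \circ h = (\fun{at}(h))^{-1} \circ \tau_A$, which in turn is just the adjunction $\fun{at}(h) = h^*$ restricted to atoms (so $x' \leq h(a)$ iff $h^*(x') \leq a$). The main bookkeeping obstacle is exactly this alignment of the algebraic preimage $h^{-1}$ on subsets of $A$ with the set-theoretic preimage $(\fun{at}(h))^{-1}$ on subsets of $\fun{at}(A)$; once Tarski's naturality is in hand, both sides reduce to the set of elements of the form $\tau_A(b)$ with $h(b) \in W_{p'}$, and the theorem follows.
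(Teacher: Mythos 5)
Your proposal is correct and follows essentially the same route as the paper: the pointwise bijection comes from Lemma~\ref{lem:uf-set} via $p \mapsto W_p$, and naturality is verified by chasing the generators $\Box a$. The only difference is that the paper checks the equivalent square $\fun{at}\circ\fun{L}_{\Ax}\circ\wp \cong \fun{B}_{\Ax}$ (indexed by sets $X$, so that $W_p$ is already literally an $\Ax$-subset of $\fun{P}X$ and no transport is needed), whereas you check $\fun{at}\circ\fun{L}_{\Ax} \cong \fun{B}_{\Ax}\circ\fun{at}$ over an arbitrary CABA $A$ and therefore carry the extra (harmless) bookkeeping of conjugating by the Tarski isomorphism $\tau_A$ and noting that $\Ax$-subsets are preserved under this bijective relabeling.
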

\begin{proof}
  Define $\xi_X : \fun{at}(\fun{L}_{\Ax}(\wp X)) \to \fun{B}_{\Ax}X$ 
  by sending an atom $a$ corresponding to a complete homomorphism $p$ to $W_p$.
  This defines an isomorphism on objects by
  Lemma \ref{lem:uf-set}. We prove that the assignment
  $\xi = (\xi_X)_{X \in \cat{Set}}
    : \fun{at} \circ \fun{L}_{\Ax} \circ \wp \to \fun{B}_{\Ax}$
  is natural by showing that
  for every function $f : X \to X'$ the diagram
  $$
    \begin{tikzcd}
      \fun{at}(\fun{L}_{\Ax}(\wp X)) 
            \arrow[d, "\fun{at}(\fun{L}_{\Ax}(\wp f))" swap]
            \arrow[r, "\xi_X"]
        & [1em]
          \fun{B}_{\Ax}X
            \arrow[d, "\fun{B}_{\Ax}f"] \\ [1em]
      \fun{at}(\fun{L}_{\Ax}(\wp X'))
            \arrow[r, "\xi_{X'}"]
        & \fun{B}_{\Ax}X'
    \end{tikzcd}
  $$
  commutes.
  To this end, let $u \in \fun{at}(\fun{L}_{\Ax}(\wp X))$ and let
  $p : \fun{L}_{\Ax}(\wp X) \to 2$ be the corresponding complete homomorphism.
  Furthermore, let $a' \subseteq X'$.
  Then
  \begin{align*}
    a' \in \fun{B}_{\Ax}f \circ \xi_X(p)
      &\iff \xi_X(p)(f^{-1}(a')) = 1 \\
      &\iff p(\Box f^{-1}(a')) = 1\\
      &\iff p(\fun{L}_{\Ax}(\wp f)(\Box a')) = 1 \\
      &\iff \fun{at}(\fun{L}_{\Ax}(\wp f))(p)(\Box a') = 1 \\
      &\iff a' \in \xi_{X'} \circ \fun{at}(\fun{L}_{\Ax}(\wp f))(p).
  \end{align*}
  This proves the theorem.
\end{proof}

\begin{cor}\label{cor:fun-dual}
  For every increasing collection $\Ax$ of one-step axioms, we have
  $$
    \cat{Alg}(\fun{L}_{\Ax}) \equiv^{\op} \cat{Coalg}(\fun{B}_{\Ax}).
  $$
\end{cor}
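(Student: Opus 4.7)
The corollary is a completely formal consequence of Theorem~\ref{thm:fun-dual}, following exactly the same template that produces $\cat{Alg}(\fun{K}) \equiv^{\op} \cat{Coalg}(\fun{V})$ from Stone-duality of $\fun{K}$ and $\fun{V}$, and $\cat{Alg}(\fun{H}) \equiv^{\op} \cat{Coalg}(\fun{P})$ from Tarski-duality of $\fun{H}$ and $\fun{P}$ in \cite{BezCarMor20}. The plan is to spell out how a Tarski-duality of functors lifts to a dual equivalence of the associated algebra and coalgebra categories, using as essential ingredients the Tarski duality $\fun{at} \dashv \wp$ and the natural isomorphism $\xi : \fun{at} \circ \fun{L}_{\Ax} \circ \wp \Rightarrow \fun{B}_{\Ax}$ constructed in the previous theorem.

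Concretely, I would define $\Phi : \cat{Alg}(\fun{L}_{\Ax}) \to \cat{Coalg}(\fun{B}_{\Ax})^{\op}$ as follows. Given an $\fun{L}_{\Ax}$-algebra $\alpha : \fun{L}_{\Ax}A \to A$, apply the contravariant functor $\fun{at}$ to obtain $\fun{at}(\alpha) : \fun{at}(A) \to \fun{at}(\fun{L}_{\Ax}A)$. Using the Tarski-duality unit $\eta_A : A \to \wp(\fun{at}(A))$, which is an isomorphism in $\cat{CABA}$, rewrite the codomain as $\fun{at}(\fun{L}_{\Ax}(\wp(\fun{at}(A))))$ and compose with $\xi_{\fun{at}(A)}$ to land in $\fun{B}_{\Ax}(\fun{at}(A))$. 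This yields a coalgebra $\gamma_\alpha : \fun{at}(A) \to \fun{B}_{\Ax}(\fun{at}(A))$. On morphisms $\Phi$ is given by $\fun{at}$, and a morphism of $\fun{L}_{\Ax}$-algebras becomes a morphism of $\fun{B}_{\Ax}$-coalgebras thanks to the functoriality of $\fun{at}$ combined with the naturality of $\xi$ and of $\eta$. The inverse functor $\Psi : \cat{Coalg}(\fun{B}_{\Ax})^{\op} \to \cat{Alg}(\fun{L}_{\Ax})$ is defined symmetrically, using the functor $\wp$ and the inverse natural isomorphism $\xi^{-1}$ together with the Tarski counit on $\cat{Set}$.

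The verifications are routine diagram chases: (i) $\gamma_\alpha$ and its counterpart in the other direction really are (co)algebras; (ii) $\Phi$ and $\Psi$ preserve (co)algebra morphisms; (iii) the composites $\Phi \circ \Psi$ and $\Psi \circ \Phi$ are naturally isomorphic to the identity, with the components of the natural isomorphisms supplied by the unit and counit of Tarski duality. There is no substantive obstacle here: the only nontrivial input is the functor duality already secured by Theorem~\ref{thm:fun-dual}, and the abstract lifting argument is the same one used implicitly in the passage following Definition~\ref{def:Tarski-duals} and carried out in detail for $\fun{K},\fun{V}$ in \cite{KupKurVen04} and for $\fun{H},\fun{P}$ in \cite{BezCarMor20}. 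If anything counts as the ``main obstacle'', it is merely the bookkeeping of keeping $\eta$, $\xi$, and the $\cat{Set}$-side counit aligned in the two diagram chases that prove naturality—all of which is forced by the coherence of the Tarski equivalence.
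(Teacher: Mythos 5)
Your proposal is correct and is essentially the paper's own argument: the paper gives no separate proof of this corollary, treating it as an instance of the general principle (stated after Definitions of Stone- and Tarski-dual functors) that a duality of functors lifts to a dual equivalence between the corresponding algebra and coalgebra categories, with Theorem~\ref{thm:fun-dual} supplying the required natural isomorphism $\xi$. Your explicit construction of $\Phi$ and $\Psi$ via $\fun{at}$, $\wp$, $\xi$, and the unit/counit of Tarski duality is exactly the standard lifting the paper relies on.
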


\begin{exa}\label{exm:fun-dual-no-axioms}
  As the notation suggests, the functor $\fun{B}_{\varnothing}$
  defined as in Definition~\ref{def:fun-B} using the empty set of axioms
  is precisely the functor $\fun{B}$ from Example~\ref{exm:coalg}\eqref{it:exm:coalg-B}.
  We have seen that its coalgebras are neighborhood frames.
  Combining this with Example~\ref{exm:fun-L-empty} 
  and Theorem~\ref{thm:fun-dual} yields that $\fun{L}$ and
  $\fun{B}$ are Tarski-duals.
  As a consequence of Corollary~\ref{cor:fun-dual},
  we derive Do\v{s}en's duality \cite{Dos89} for neighborhood frames via
  $$
    \cat{CANA}
      \cong \cat{Alg}(\fun{L})
      \equiv^{\op} \cat{Coalg}(\fun{B})
      \cong \cat{NF}.
  $$
\end{exa}

\begin{exa}\label{exm:fun-dual-Cinf}
  Let $\Ax = \mc{C}_{\infty}$ as in Example \ref{exm:Cinfty-H-CAMA}.
  For a set $X$, the set $\fun{B}_{\Ax}X$ consists of all collections of
  neighborhoods that are upward closed under inclusion and closed under
  arbitrary intersections. Such a collection is uniquely determined by
  its smallest neighborhood, and a straightforward verification shows that
  $\fun{B}_{\Ax}$ is naturally isomorphic to the covariant powerset
  functor $\fun{P} : \cat{Set} \to \cat{Set}$. 
  
  As a consequence of Theorem \ref{thm:fun-dual}, we obtain that $\fun{P}$
  is dual to $\fun{H}$, a result that was recently established
  in \cite[Theorem 4.3]{BezCarMor20}.
  Using the well-known fact that $\fun{P}$-coalgebras are Kripke frames
  and the observation that $\fun{H}$-algebras are CAMAs, we arrive at Thomason
  duality:
  $$
    \cat{CAMA}
      \cong \cat{Alg}(\fun{H})
      \equiv^{\op} \cat{Coalg}(\fun{P})
      \cong \cat{KF}.
  $$
\end{exa}

\section{Applications}\label{sec:harvest}

  In this section we first derive dualities for various types of neighborhood frames
  using only one-step axioms and Corollary~\ref{cor:fun-dual}.
  This gives rise to Thomason type dualities for monotone neighborhood frames,
  contingency neighborhood frames, and filter frames.
  Each of these is an algebra/coalgebra duality.
  
  Next we show how some of these restrict when we
  invoke further axioms, which are not necessarily one-step axioms.
  These results can be seen as correspondence
  results. Most notably, they allow us to obtain McKinsey-Tarski duality for
  topological spaces (with interior maps) as an easy restriction of the duality for
  filter frames.
  An overview of the dual equivalences discussed in this section
  is given in Table \ref{table:overview}.

\subsection{One-step dualities}

\subsubsection{Monotone neighborhood frames}\label{subsubsec:monotone}
  
  Monotone modal logic is a well-studied branch of
  modal logic (see, e.g., \cite{Che80,Han03,HanKup04}).
  The standard semantics for monotone modal logic is given by
  \emph{monotone neighborhood frames}.
  Recall that these are neighborhood frames $(X, N)$ such that $N(x)$
  is upward closed under inclusion as a subset of $\fun{P}X$.
  We write $\cat{MF}$ for the full subcategory of $\cat{NF}$ whose objects are
  monotone neighborhood frames.
  It is well known that $\cat{MF} \cong \cat{Coalg}(\fun{UpP})$, where
  $\fun{UpP} : \cat{Set} \to \cat{Set}$ takes a set $X$ to the collection of
  subsets of $\fun{P}X$ that are upward closed under inclusion \cite{Han03,HanKup04}.
  
  The algebraic semantics of monotone modal
  logic is given by \emph{monotone Boolean algebra expansions} (BAMs for short)
  \cite[Section 7]{Han03}.
  A BAM is a neighborhood algebra $(A, \dbox)$ such that
  $\dbox : A \to A$ is a monotone function (that is, $a\le b$ implies $\dbox a\le\dbox b$).
  Let $\cat{BAM}$ be the full subcategory of $\cat{NA}$ whose objects are BAMs.

\begin{table}
  \centering
  \begin{tabular}{llll}
    \toprule
    Axioms & Algebras & Objects & Location \\ 
    & Frames && \\
    \toprule
    None
      & $\cat{CANA}$ & Complete atomic neighborhood algebras & Example \ref{exm:fun-dual-no-axioms} \\
      & $\cat{NF}$ & Neighborhood frames & \\ \midrule
    \axCont\
      & $\cat{CAContA}$ & Complete atomic contingency algebras & Section \ref{subsubsec:cont} \\
      & $\cat{ContF}$ & Contingency frames & \\ \midrule
    \axConv\
      & $\cat{CACA}$ & Complete atomic convex algebras & Section \ref{subsubsec:convex} \\
      & $\cat{CNF}$ & Convex neighborhood frames & \\ \midrule
    \axM
      & $\cat{CABAM}$ & Complete atomic monotone BA expansions & Section \ref{subsubsec:monotone} \\
      & $\cat{MF}$    & Monotone neighborhood frames & \\ \midrule
    \axN, \axC
      & $\cat{caMA}$ & Modal algebras over CABAs & Section \ref{subsubsec:filter} \\
      & $\cat{FF}$ & Filter frames & \\ \midrule
    \axCkap
      && $\kappa$-additive complete atomic modal algebras & Remark \ref{rem:tanaka} \\
      && $\kappa$-complete neighborhood frames & \\ \midrule
    $\mc{C}_{\infty}$
      & $\cat{CAMA}$ & CAMAs & Example \ref{exm:fun-dual-Cinf} \\
      & $\cat{KF}$ & Kripke frames & \\ \midrule
    \axN, \axC,
      & $\cat{PreTop_{int}}$ & Pretopological spaces & Section \ref{subsec:corr} \\
     \axT
      & $\cat{PreInt}$ & Complete atomic pre-interior algebras & \\ \midrule
    \axN, \axC,
      & $\cat{Top_{int}}$ & Topological spaces & Section \ref{subsec:corr} \\
     \axT, \axfour
      & $\cat{Int}$ & Complete atomic interior algebras & \\ 
    \bottomrule
  \end{tabular}
  \caption{Overview of pairs of dual categories.}
  \label{table:overview}
\end{table}
  
\begin{defi}
  Let $\cat{CABAM}$ be the full subcategory of $\cat{CANA}$ whose objects
  are also BAMs.
\end{defi}

  We can view $\cat{CABAM}$ as a category of algebras for an endofunctor on $\cat{CABA}$.
  To see this, consider the one-step axiom
  \begin{enumerate}[($\mathsf{M}$)]
    \axitem{ax:M}{M}
          $\Box(u \wedge v) \to \Box u$
  \end{enumerate}
  expressing monotonicity.
  As a consequence of Theorem~\ref{thm:alg-cana}, we have:
  
  \begin{cor}\label{cor:CABAM-AlgLM}
  $
    \cat{CABAM} \cong \cat{Alg}(\fun{L}_{\axM}).
  $
  \end{cor}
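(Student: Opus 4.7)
The plan is to apply Theorem~\ref{thm:alg-cana} with the singleton axiom class $\Ax = \{\axM\}$ (implicitly making it increasing by closing under instantiation in variables from each $V_\kappa$, as described in the paragraph following the definition of an increasing class). Since the axiom \axM\ has finite modal depth and finitely many variables, the increasing closure is automatic and adds no semantic content. By Theorem~\ref{thm:alg-cana} we then obtain
\[
  \cat{Alg}(\fun{L}_{\axM}) \cong \cat{CANA}(\axM),
\]
so the task reduces to identifying $\cat{CANA}(\axM)$ with $\cat{CABAM}$.

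The key lemma to verify is that a CANA $(A,\dbox)$ validates \axM\ if and only if $\dbox$ is monotone. The ``if'' direction is immediate: if $\dbox$ preserves order, then since $u \wedge v \leq u$, we have $\dbox(a \wedge b) \leq \dbox a$ for every assignment, so $\hat{\vartheta}(\Box(u \wedge v) \to \Box u) = 1$. For ``only if'', assume $(A,\dbox) \Vdash \axM$ and let $a \leq b$ in $A$. Taking the assignment $\vartheta(u) = b$, $\vartheta(v) = a$, validity of \axM\ yields $\dbox(b \wedge a) \leq \dbox b$, which (using $a \wedge b = a$) gives $\dbox a \leq \dbox b$. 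Hence $\dbox$ is monotone and $(A,\dbox) \in \cat{CABAM}$.

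Agreement on morphisms is automatic because both $\cat{CABAM}$ and $\cat{CANA}(\axM)$ are defined as full subcategories of $\cat{CANA}$ on the same object class; Theorem~\ref{thm:alg-cana} already handles the compatibility between $\fun{L}_{\axM}$-algebra morphisms and CANA-morphisms. There is no real obstacle here: the only nontrivial content is the algebraic observation that monotonicity of a unary operation on a Boolean algebra is captured by the single inequality $\dbox(u \wedge v) \leq \dbox u$, which is standard. Thus the corollary follows by composing the identification $\cat{CABAM} = \cat{CANA}(\axM)$ with the isomorphism of Theorem~\ref{thm:alg-cana}.
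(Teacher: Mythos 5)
Your proposal is correct and follows exactly the route the paper takes: the corollary is stated as an immediate consequence of Theorem~\ref{thm:alg-cana} applied to $\Ax = \{\axM\}$, with the identification $\cat{CABAM} = \cat{CANA}(\axM)$ left implicit under the phrase ``expressing monotonicity.'' Your explicit verification that validity of $\Box(u \wedge v) \to \Box u$ is equivalent to monotonicity of $\dbox$, and your remark on the increasing closure being harmless for a single finitary axiom, simply fill in details the paper omits.
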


  For a set $X$, the $\axM$-subsets of
  $\fun{P}X$ are precisely the ones that are up-closed under inclusion.
  It then follows immediately from the definitions that
  $\fun{B}_{\axM}$ defined as in Definition~\ref{def:fun-B} coincides with
  $\fun{UpP} : \cat{Set} \to \cat{Set}$ from \cite[Section 3.1]{HanKup04}.
  As a consequence of Corollary~\ref{cor:fun-dual}, we obtain:

\begin{thm}\label{thm:dual-MF-CABAM}
  The category $\cat{MF}$ of monotone neighborhood frames
  is dually equivalent to $\cat{CABAM}$.
\end{thm}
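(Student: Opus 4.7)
The plan is to assemble the duality as an instance of the general algebra/coalgebra machinery built up in Section~\ref{subsec:fdt-set}. We already have the two ``bridges'' on each side: by Corollary~\ref{cor:CABAM-AlgLM}, $\cat{CABAM} \cong \cat{Alg}(\fun{L}_{\axM})$, and by the standard observation recalled in the paragraph preceding the theorem, $\cat{MF} \cong \cat{Coalg}(\fun{UpP})$. It therefore suffices to identify $\fun{UpP}$ with the functor $\fun{B}_{\axM}$ of Definition~\ref{def:fun-B} and invoke Corollary~\ref{cor:fun-dual}.

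First I would check that the single-axiom collection $\{\axM\}$ satisfies the running hypothesis of the theory: since $\axM$ is a finitary rank-$1$ formula, it is a one-step axiom, and being a set of axioms it can be treated as increasing by the convention made after the definition of increasing collections. Hence both $\fun{L}_{\axM}$ and $\fun{B}_{\axM}$ are well defined, and Theorem~\ref{thm:fun-dual} applies. The next, and main, step is to show that for every set $X$ the $\axM$-subsets of $\fun{P}X$ are exactly the upward-closed subsets. Unfolding Definition~\ref{def:whattocallthis} on the formula $\Box(u \wedge v) \to \Box u$ with an assignment $\vartheta$ taking values in $\fun{P}X$, one computes
$$
  \vartheta^{t}(\Box(u \wedge v) \to \Box u)
    = \{\, W \subseteq \fun{P}X \mid \vartheta(u) \cap \vartheta(v) \in W \Rightarrow \vartheta(u) \in W \,\}.
$$
A subset $W \subseteq \fun{P}X$ belongs to this set for every $\vartheta$ iff, whenever $a \cap b \in W$, also $a \in W$; writing $c = a \cap b$ (which ranges over arbitrary subsets of $a$), this is equivalent to $W$ being upward closed under inclusion, and conversely upward closure trivially implies the implication. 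Thus $\fun{B}_{\axM}X = \fun{UpP}X$ as sets.

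A brief check on morphisms completes the identification of the functors. For $f : X \to X'$, Definition~\ref{def:fun-B} gives $\fun{B}_{\axM}f(W) = \{\, a' \subseteq X' \mid f^{-1}(a') \in W \,\}$, which is precisely the action of $\fun{UpP}f$ used in the coalgebraic presentation of monotone frames (see~\cite{HanKup04}). Hence $\fun{B}_{\axM} = \fun{UpP}$ as endofunctors on $\cat{Set}$. Stringing the equivalences together,
$$
  \cat{CABAM}
    \cong \cat{Alg}(\fun{L}_{\axM})
    \equiv^{\op} \cat{Coalg}(\fun{B}_{\axM})
    = \cat{Coalg}(\fun{UpP})
    \cong \cat{MF},
$$
where the middle duality is Corollary~\ref{cor:fun-dual}.

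The only place where genuine work is required is the unpacking of Definition~\ref{def:whattocallthis} that identifies $\axM$-subsets with upward-closed collections; everything else is bookkeeping and an appeal to the general theorem. I therefore expect this to be the main (though rather mild) obstacle, and would present it as a short lemma inserted just before the computation above, rather than as part of the main flow of the proof.
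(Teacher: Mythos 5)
Your proposal is correct and follows essentially the same route as the paper: identify $\cat{CABAM}$ with $\cat{Alg}(\fun{L}_{\axM})$ via Corollary~\ref{cor:CABAM-AlgLM}, identify $\cat{MF}$ with $\cat{Coalg}(\fun{UpP})$, observe that the $\axM$-subsets of $\fun{P}X$ are exactly the upward-closed collections so that $\fun{B}_{\axM} = \fun{UpP}$, and conclude by Theorem~\ref{thm:fun-dual} and Corollary~\ref{cor:fun-dual}. The only difference is presentational: you spell out the unpacking of Definition~\ref{def:whattocallthis} for $\axM$ as an explicit computation, which the paper leaves as a one-line remark preceding the theorem.
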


\begin{proof}
  As a consequence of Theorem \ref{thm:fun-dual}, the functors
  $\fun{L}_{\axM}$ and $\fun{B}_{\axM} = \fun{UpP}$ are dual.
  Moreover, $\cat{CABAM} \cong \cat{Alg}(\fun{L}_{\axM})$ by Corollary \ref{cor:CABAM-AlgLM},
  and $\cat{MF} \cong \cat{Coalg}(\fun{UpP})$ by \cite[Lemma 3.4]{HanKup04}.
  Therefore, Corollary~\ref{cor:fun-dual} implies that
  $\cat{Alg}(\fun{L}_{\axM}) \equiv^{\op} \cat{Coalg}(\fun{UpP})$,
  which proves the theorem.
\end{proof}

\subsubsection{Duality for neighborhood contingency logic}\label{subsubsec:cont}
  
  A formula is called \emph{contingent} if it is possibly true and possibly
  false. Otherwise it is non-contingent, i.e., it is necessarily true or
  necessarily false. Neighborhood contingency logic was recently introduced in
  \cite{FanDit15} to reason about contingent formulae, and is investigated
  further in \cite{BakDitHan17,Fan18}. The non-contingency modality $\vartri$ is
  interpreted in a neighborhood frame $(X, N)$ by
  $$
    x \Vdash \vartri \phi
      \iff \llb \phi \rrb \in N(x) \text{ or } X \setminus \llb \phi \rrb \in N(x).
  $$
  As a consequence of this definition, the interpretation of formulae
  does not distinguish whether $a \in N(x)$ or $X \setminus a \in N(x)$ or both.
  Therefore, it suffices to only consider the neighborhood
  frames $(X, N)$ that satisfy: for all $x \in X$ and $a \subseteq X$,
  $$
    a \in N(x) \iff X \setminus a \in N(x).
  $$
  We call such frames \emph{contingency frames} and write $\cat{ContF}$
  for the full subcategory of $\cat{NF}$ consisting of contingency frames.
  Then $\cat{ContF}$ is isomorphic to the category of $\fun{B}_{\axCont}$-coalgebras,
  where \axCont\ is the axiom
  \begin{enumerate}
    \axitem{ax:cont}{Cont}
          $\Box v \leftrightarrow \Box \neg v$.
  \end{enumerate}
  Corollary~\ref{cor:fun-dual} implies that
  $\cat{Coalg}(\fun{B}_{\axCont}) \equiv^{\op} \cat{Alg}(\fun{L}_{\axCont})$.
  As a consequence of Theorem~\ref{thm:alg-cana} we can describe the latter
  category of algebras explicitly as the full subcategory of $\cat{CANA}$
  whose objects are the CANAs $(A, \dbox)$ satisfying $\dbox a = \dbox \neg a$ for all $a \in A$.
  We call these \emph{complete atomic contingence algebras}
  and denote the category they form by $\cat{CAContA}$.
  Thus, putting the above together, we obtain:
  
\begin{thm}\label{thm:dual-cont}
  $
    \cat{ContF} \equiv^{\op} \cat{CAContA}.
  $
\end{thm}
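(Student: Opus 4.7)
The proof is almost entirely assembled from the machinery of Section~\ref{subsec:fdt-set}; my plan is to explicitly verify the two bridging facts the author uses implicitly and then chain the equivalences.

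First, I would check that $\axCont$ is a genuine one-step axiom in the sense of Definition~\ref{def:one-step}. This is immediate: $\Box v \leftrightarrow \Box \neg v$ is (after unfolding $\leftrightarrow$) built from $\Box v$ and $\Box \neg v$ using Boolean connectives, and in each case the variable $v$ sits under exactly one $\Box$. Since $\axCont$ is a single axiom (or, more precisely, the class of its substitution instances over $V_\kappa$ for $\kappa \in \Card$), it is trivially increasing in the sense preceding Definition~\ref{def:ax-fun}: whether $\ov{\theta}(\axCont) = 1$ is determined by the values of $\theta$ on a single variable, so the condition for cardinalities below $|\mathcal{P}(\fun{L}A)|$ already forces it for all assignments.

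Next, I would unpack what the $\axCont$-subsets of $\fun{P}X$ look like. Using Definition~\ref{def:whattocallthis}, a subset $W \subseteq \fun{P}X$ is an $\axCont$-subset iff for every assignment $\vartheta : V \to \fun{P}X$ (i.e., every choice of $a := \vartheta(v) \subseteq X$),
\[
  W \in \theta^t(\Box v) \Leftrightarrow W \in \theta^t(\Box \neg v),
\]
which unfolds to $a \in W \Leftrightarrow (X \setminus a) \in W$. Consequently $\fun{B}_{\axCont}X = \{ W \subseteq \fun{P}X \mid a \in W \Leftrightarrow X\setminus a \in W \text{ for all } a \subseteq X\}$. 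Combining this with the standard isomorphism $\cat{NF} \cong \cat{Coalg}(\fun{B})$ from Example~\ref{exm:coalg}(\ref{it:exm:coalg-B})—a coalgebra $\gamma : X \to \fun{B}X$ factors through $\fun{B}_{\axCont}X$ precisely when $\gamma(x)$ satisfies the contingency condition for each $x$—yields $\cat{ContF} \cong \cat{Coalg}(\fun{B}_{\axCont})$. The morphism side follows automatically because $\cat{ContF}$ is a full subcategory of $\cat{NF}$ and $\fun{B}_{\axCont}$ is a subfunctor of $\fun{B}$ (Proposition~\ref{prop:Bax-welldef}).

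On the algebra side, Theorem~\ref{thm:alg-cana} applied to $\Ax = \{\axCont\}$ gives $\cat{Alg}(\fun{L}_{\axCont}) \cong \cat{CANA}(\{\axCont\})$; by the definition of validation (Definition~\ref{def:ass-sat}) the latter consists of exactly those CANAs $(A, \dbox)$ satisfying $\dbox a = \dbox \neg a$ for all $a \in A$, which by definition is $\cat{CAContA}$. Finally, chaining these isomorphisms with the Tarski functor duality of Corollary~\ref{cor:fun-dual} delivers
\[
  \cat{CAContA} \cong \cat{Alg}(\fun{L}_{\axCont}) \equiv^{\op} \cat{Coalg}(\fun{B}_{\axCont}) \cong \cat{ContF},
\]
which is the desired statement. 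The only real work is the small verification of the shape of $\axCont$-subsets; the remaining steps are invocations of already-proved general theorems, so I expect no genuine obstacle.
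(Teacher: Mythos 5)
Your proposal is correct and follows exactly the route the paper takes: identify $\cat{ContF}$ with $\cat{Coalg}(\fun{B}_{\axCont})$ by computing the $\axCont$-subsets, identify $\cat{CAContA}$ with $\cat{Alg}(\fun{L}_{\axCont})$ via Theorem~\ref{thm:alg-cana}, and chain through Corollary~\ref{cor:fun-dual}. The only difference is that you spell out the bridging verifications (one-step form, increasingness, the shape of $\axCont$-subsets) that the paper leaves implicit, which is harmless.
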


\subsubsection{Convex frames}\label{subsubsec:convex}

  Our next example concerns \emph{convex neighborhood frames}. These are
  neighborhood frames $(X, N)$ such that $N(x)$ is a convex subset of $\fun{P}X$,
  meaning that for all $x \in X$, if $a, a' \in N(x)$ and
  $a \subseteq b \subseteq a'$, then $b \in N(x)$. Write $\cat{CNF}$ for the full
  subcategory of $\cat{NF}$ whose objects are convex neighborhood frames. 
  
  As we will see in Section \ref{sec:canonical}, convexity is closely related to
  the question of functoriality of canonical extensions of neighborhood frames.
  Convexity is captured by the following axiom:
  \begin{enumerate}
    \axitem{ax:conv}{Conv}
          $\Box(v \wedge v') \wedge \Box(v \vee v'') \to \Box v$
  \end{enumerate}
  Therefore,
  $
    \cat{CNF} \cong \cat{Coalg}(\fun{B}_{\axConv}).
  $
  We call the corresponding algebras
  \emph{complete atomic convex algebras},
  and denote by $\cat{CACA}$ the full subcategory of $\cat{CANA}$ whose objects are
  complete atomic convex algebras.
  Then $\cat{CACA} \cong \cat{Alg}(\fun{L}_{\axConv})$,
  and as a consequence of Corollary~\ref{cor:fun-dual}, we obtain:

\begin{thm}\label{thm:dual-CNF-CACA}
  $\cat{CNF} \equiv^{\op} \cat{CACA}$.
\end{thm}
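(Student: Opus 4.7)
The plan is to mirror the strategy used for Theorems~\ref{thm:dual-MF-CABAM} and~\ref{thm:dual-cont}: identify both sides of the claimed duality as categories of algebras and coalgebras for Tarski-dual endofunctors, and then invoke Corollary~\ref{cor:fun-dual}. Since \axConv\ is a one-step axiom, the machinery of Section~\ref{subsec:fdt-set} provides endofunctors $\fun{L}_{\axConv}$ on $\cat{CABA}$ and $\fun{B}_{\axConv}$ on $\cat{Set}$, and Theorem~\ref{thm:fun-dual} guarantees that they are Tarski-duals. What is left is to check that these functors have the right algebras and coalgebras.

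On the algebra side, I will apply Theorem~\ref{thm:alg-cana} to conclude $\cat{Alg}(\fun{L}_{\axConv}) \cong \cat{CANA}(\axConv)$, then observe that $\cat{CANA}(\axConv) = \cat{CACA}$. This amounts to verifying that a CANA $(A,\dbox)$ validates \axConv\ iff the sets $N(x) := \{ a \in A \mid x \leq \dbox a \}$, indexed by atoms $x$ of $A$, are convex in $A$. Writing the validity of \axConv\ as the inequality $\dbox(a\wedge a') \wedge \dbox(a\vee a'') \leq \dbox a$ for all $a,a',a'' \in A$ and using atoms, this translates directly into: $b_1 \leq a \leq b_2$ with $b_1,b_2 \in N(x)$ forces $a \in N(x)$, since we can set $b_1 = a\wedge a'$ and $b_2 = a\vee a''$ and conversely realize any $b_1 \leq a \leq b_2$ by taking $a'=b_1$ and $a''=b_2$.

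On the frame side, I need $\cat{CNF} \cong \cat{Coalg}(\fun{B}_{\axConv})$. By Definition~\ref{def:fun-B}, $\fun{B}_{\axConv}X$ consists of the \axConv-subsets of $\fun{P}X$. Unwinding Definition~\ref{def:whattocallthis} for the axiom $\Box(v\wedge v')\wedge\Box(v\vee v'')\to\Box v$, one sees that $W\subseteq\fun{P}X$ is an \axConv-subset exactly when, for all $a,a',a''\subseteq X$, the conditions $a\cap a' \in W$ and $a\cup a''\in W$ entail $a\in W$. By the same elementary argument as above (taking $a'=b_1$, $a''=b_2$), this is equivalent to $W$ being convex as a subset of $\fun{P}X$. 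Hence $\fun{B}_{\axConv}X$ is precisely the collection of convex subsets of $\fun{P}X$, and a routine check that the action of $\fun{B}_{\axConv}$ on functions agrees with the one implicit in the coalgebraic description of $\cat{NF}$ (which already appears in Example~\ref{exm:coalg}) shows $\cat{CNF} \cong \cat{Coalg}(\fun{B}_{\axConv})$.

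Combining the two identifications with Corollary~\ref{cor:fun-dual} then yields
\[
  \cat{CACA} \cong \cat{Alg}(\fun{L}_{\axConv}) \equiv^{\op} \cat{Coalg}(\fun{B}_{\axConv}) \cong \cat{CNF},
\]
which is the desired duality. The main obstacle is really only the first equivalence in the verification that \axConv-subsets are exactly the convex subsets of $\fun{P}X$; once that correspondence is nailed down, everything else is an instance of the general framework already set up, so I do not expect any genuine difficulty.
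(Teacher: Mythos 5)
Your proposal is correct and follows essentially the same route as the paper: the paper likewise observes that convexity is captured by the one-step axiom \axConv, identifies $\cat{CNF} \cong \cat{Coalg}(\fun{B}_{\axConv})$ and $\cat{CACA} \cong \cat{Alg}(\fun{L}_{\axConv})$, and concludes by Corollary~\ref{cor:fun-dual}. The only difference is that you spell out the (correct) verification that \axConv-subsets are exactly the convex subsets of $\fun{P}X$, which the paper leaves implicit.
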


\subsubsection{Filter frames}\label{subsubsec:filter}
  
  Finally, we consider filter frames \cite[Section 7.2]{Che80}.
  These are of interest because they
  are as close as we can get to topological spaces using only one-step axioms (see below).
  Recall that a \emph{filter} on a set $X$ is a subset $F \subseteq \fun{P}X$ that is
  closed under finite intersections and upward closed under inclusion.
  In particular, this implies that $X \in F$ as $X$ is the empty intersection
  of subsets of $X$.
  A \emph{filter frame} is a neighborhood frame $(X, N)$ such that
  $N(x)$ is a filter on $X$ for each $x \in X$. Let $\cat{FF}$ be
  the full subcategory of $\cat{NF}$ consisting of filter frames.
  
  From the coalgebraic point of view, filter frames are
  $\fun{B}_{\Ax}$-coalgebras, where $\Ax$ consists of the axioms
  \begin{enumerate}
    \axitem{ax:N}{N} $\Box\top$
    \item[$(\mathsf{C})$] \label{it:C}
          $\Box u \wedge \Box v \leftrightarrow \Box(u \wedge v)$
  \end{enumerate}
  We write $\fun{B}_{\wedge}$ for $\fun{B}_{\{\axN, \axC \}}$.
  We then have $\cat{FF} \cong \cat{Coalg}(\fun{B}_{\wedge})$.
  As a consequence of Corollary~\ref{cor:fun-dual}, we obtain the dual equivalence
  \begin{equation}\label{eq:alg-coalg-fi}
    \cat{Coalg}(\fun{B}_{\wedge}) \equiv^{\op} \cat{Alg}(\fun{L}_{\wedge}),
  \end{equation}
  where $\fun{L}_{\wedge}$ abbreviates $\fun{L}_{\{ \axN, \axC \}}$.

  Since \axN\ and \axC\ are the axioms that on the algebra side define modal algebras,
  $\fun{L}_{\wedge}$-algebras are simply modal algebras whose underlying
  Boolean algebra is complete and atomic.
  We write $\cat{caMA}$ for the full subcategory of $\cat{CANA}$ whose
  objects are modal algebras based on complete atomic Boolean algebras.
  By contrast, recall that $\cat{CAMA}$ denotes the full subcategory of
  $\cat{CANA}$ whose objects $(A, \dbox)$ are such that 
  $A$ is complete atomic and $\dbox$ preserves arbitrary meets.
  Therefore, $\cat{CAMA}$ is a full subcategory of $\cat{caMA}$. 
  Rephrasing \eqref{eq:alg-coalg-fi} yields the following generalization of Thomason duality:
  
\begin{thm}\label{thm:dual-FF-FIMA}
    $\cat{FF} \equiv^{\op} \cat{caMA}$.
\end{thm}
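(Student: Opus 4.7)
The plan is to assemble the theorem from three ingredients already present in the excerpt: the isomorphism $\cat{FF} \cong \cat{Coalg}(\fun{B}_{\wedge})$ noted just before the statement, the duality of functors provided by Corollary~\ref{cor:fun-dual}, and the algebraic characterisation of $\fun{L}_{\Ax}$-algebras given by Theorem~\ref{thm:alg-cana}. The whole argument is essentially a chain of canonical identifications, so the real content is identifying $\cat{CANA}(\{\axN, \axC\})$ with $\cat{caMA}$.

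First I would invoke Theorem~\ref{thm:alg-cana} with $\Ax = \{\axN, \axC\}$ to obtain
$$
  \cat{Alg}(\fun{L}_{\wedge}) \cong \cat{CANA}(\{\axN, \axC\}).
$$
Next I would unpack what validation of \axN\ and \axC\ means in a CANA $(A, \dbox)$: the axiom \axN\ forces $\dbox 1 = 1$, and \axC\ forces $\dbox a \wedge \dbox b = \dbox(a \wedge b)$ for every $a, b \in A$ (here one uses Definition~\ref{def:ass-sat} together with the fact that validity is over all assignments, so the universally quantified instances yield the equations on all pairs of elements). Thus an object of $\cat{CANA}(\{\axN, \axC\})$ is exactly a pair $(A, \dbox)$ where $A$ is a CABA and $\dbox$ preserves $1$ and finite meets, i.e.\ a modal algebra whose underlying Boolean algebra is complete and atomic. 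Morphisms on both sides are complete Boolean homomorphisms commuting with $\dbox$, so the identification extends to a categorical isomorphism $\cat{CANA}(\{\axN, \axC\}) = \cat{caMA}$.

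Finally I would chain together
$$
  \cat{FF}
    \;\cong\; \cat{Coalg}(\fun{B}_{\wedge})
    \;\equiv^{\op}\; \cat{Alg}(\fun{L}_{\wedge})
    \;\cong\; \cat{caMA},
$$
where the middle dual equivalence is \eqref{eq:alg-coalg-fi}, itself an instance of Corollary~\ref{cor:fun-dual} (which applies because $\{\axN, \axC\}$ is a set, hence trivially increasing after closing under substitution as explained following Example~\ref{exm:fun-L-empty}). I do not expect a genuine obstacle: the only small point to be careful about is checking that the axioms \axN\ and \axC, as elements of $\mf{L}_{\Box}^{\aleph_0}$, really are one-step in the sense of Definition~\ref{def:one-step} (each variable occurs under exactly one $\Box$) so that the general framework of Section~\ref{subsec:ax} applies, and that the quantifier-over-assignments clause of Definition~\ref{def:ass-sat} correctly translates validity into the usual equational axioms of a modal algebra. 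Both are immediate by inspection.
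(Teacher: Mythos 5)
Your proposal is correct and follows essentially the same route as the paper: the paper derives the theorem by noting $\cat{FF} \cong \cat{Coalg}(\fun{B}_{\wedge})$, invoking Corollary~\ref{cor:fun-dual} for the dual equivalence \eqref{eq:alg-coalg-fi}, and identifying $\fun{L}_{\wedge}$-algebras with modal algebras over CABAs, i.e.\ $\cat{caMA}$. Your extra care in unpacking validity of \axN\ and \axC\ and checking the one-step condition only makes explicit what the paper leaves implicit.
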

  
\begin{rem}\label{rem:tanaka}
  If we require the collection of neighborhoods at each state to be closed
  under intersections of size $< \kappa$, where $\kappa$ is some fixed cardinal,
  then we obtain the \emph{$\kappa$-complete neighborhood frames} from
  \cite[Section 4]{Tan21}. This corresponds to the axiom \axCkap,
  and yields a dual equivalence with \emph{$\kappa$-additive complete atomic modal
  algebras}.
  This category lies in between $\cat{CAMA}$ and $\cat{caMA}$.
\end{rem}

\subsection{Restrictions/correspondence results}\label{subsec:corr}
  
  If a collection of frames or algebras is not given by one-step axioms, we can
  still derive dualities for them from correspondence
  results for the axioms under consideration, but they are no longer
  algebra/coalgebra dualities. 
  This allows us to derive other interesting dualities, such as
  McKinsey-Tarski duality for topological spaces (and interior maps).
  
  Topological spaces are obtained from filter frames by stipulating
  the reflexivity and transitivity axioms \axT\ and \axfour.
  Adding only \axT\ results in the more general notion of pre-topological spaces.
  For a neighborhood frame $(X,N)$ and $a \subseteq X$, let 
  $$
    \dbox_N(a) = \{ y \in X \mid a \in N(y) \}. 
  $$

\begin{defi}
  \begin{enumerate}
    \item[]
    \item A \emph{pre-topological space} is a filter frame that satisfies
          \begin{enumerate}
            \axitem{ax:cent}{Cent}
                    $a \in N(x)$ implies $x \in a$.
          \end{enumerate}
    \item A \emph{topological space} is a pre-topological space that satisfies
          \begin{enumerate}
            \axitem{ax:iv}{iv}
                    $a \in N(x)$ implies $\dbox_N(a) \in N(x)$.
          \end{enumerate}
  \end{enumerate}  
\end{defi}

\begin{rem}
  \begin{enumerate}
    \item[]
    \item The above definition of topological spaces in the language of
          neighborhood bases is well known (see, e.g., \cite[Theorem 4.5]{Wil04}). 
    \item The above definition of pre-topological spaces can for example be
          found in \cite[Appendix A]{stadler02}. 
    \item Neighborhood frames satisfying \axCent\ are called centered
          \cite[Section 1.3]{Lew73}, hence the abbreviation.
  \end{enumerate}
\end{rem}

  In the language of topological spaces, neighborhood morphisms correspond to
  maps that are both continuous and open. Such maps are often called
  \emph{interior maps} \cite[Section~III.3]{RasSik63}. On the other hand,
  continuous maps are the ones that satisfy only the left-to-right implication
  of \eqref{eq:nbhd-mor}. 

  We write $\cat{PreTop_{int}}$ for the category of pre-topological spaces and
  interior maps, and $\cat{Top_{int}}$ for its full subcategory consisting of
  topological spaces. Clearly both $\cat{PreTop_{int}}$ and $\cat{Top_{int}}$
  are full subcategories of $\cat{FF}$. Moreover, we have:  
  $$
    \cat{PreTop_{int}} \cong \cat{FF}\axCent
      \quad\text{and}\quad
      \cat{Top_{int}} \cong \cat{FF}(\sf{Cent,iv}),
  $$
  where $\cat{FF}\axCent$ and $\cat{FF}(\sf{Cent,iv})$ are the full
  subcategories of $\cat{FF}$ whose objects satisfy $\axCent$ and
  $\sf{(Cent,iv)}$, respectively.

  The duals of topological spaces are given by complete atomic interior algebras,
  and the duals of pre-topological spaces by complete atomic pre-interior algebras.

\begin{defi}
  \begin{enumerate}
    \item[]
    \item A \emph{pre-interior algebra} is a modal algebra $(B, \dbox)$ that
          satisfies
          \begin{enumerate}
            \axitem{ax:T}{T} $\dbox b \leq b$.
          \end{enumerate}
          If $B$ is complete and atomic, $(B, \dbox)$ is a \emph{complete atomic
          pre-interior algebra}.
    \item An \emph{interior algebra} is a pre-interior algebra $(B, \dbox)$ that satisfies
          \begin{enumerate}
            \axitem{ax:four}{4} $\dbox\dbox b \leq \dbox b$.
          \end{enumerate}
          If $B$ is complete and atomic, $(B, \dbox)$ is a \emph{complete atomic
          interior algebra}.
  \end{enumerate}  
\end{defi}

\begin{rem}
  \begin{enumerate}
    \item[]
    \item The dual concept of interior operator is that of closure operator.
          Interior algebras were first introduced by McKinsey and Tarski
          \cite{McKTar44} in the language of closure operators and under the
          name of closure algebras. Rasiowa and Sikorski \cite{RasSik63} called
          these algebras topological Boolean algebras. The name interior algebra
          is due to Blok \cite{Blo76}.
    \item Generalizing closure on a powerset to pre-closure yields the notion
          of \emph{\v{C}ech closure spaces} \cite[Definition~14.A.1]{Cech66}.
          This provides an alternate language to talk about complete atomic
          pre-interior algebras.
  \end{enumerate}
\end{rem}

  Complete atomic pre-interior algebras are simply CANAs that satisfy
  \axN, \axC, and \axT.  We write $\cat{PreInt}$ for the full subcategory of
  $\cat{CANA}$ whose objects are pre-interior algebras. Then $\cat{PreInt}$
  is a full subcategory of $\cat{caMA}$. Let $\cat{Int}$ be the full
  subcategory of $\cat{PreInt}$ consisting of interior algebras.  

  Given a neighborhood frame $(X,N)$ a straightforward verification
  (see, e.g., \cite[Section 7.4]{Che80}) shows that:
  \begin{enumerate}
    \item[] $(X, N)$ validates \axCent\ iff its dual $(\wp X, \dbox_N)$ validates \axT.
    \item[] $(X, N)$ validates \axiv\ iff its dual $(\wp X, \dbox_N)$ validates \axfour.
  \end{enumerate}
  
  Thus, we arrive at the following duality theorems:
  
\begin{thm}\label{thm:dual-Pre-Top}
  The dual equivalence from Theorem \ref{thm:dual-FF-FIMA} restricts to
  $$
    \cat{PreInt} \equiv^{\op} \cat{PreTop_{int}}
      \quad\text{and}\quad
      \cat{Int} \equiv^{\op} \cat{Top_{int}}.
  $$
  \label{MT-for-open}
\end{thm}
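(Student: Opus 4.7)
The plan is to leverage the dual equivalence $\cat{FF} \equiv^{\op} \cat{caMA}$ of Theorem~\ref{thm:dual-FF-FIMA} and restrict it to the relevant full subcategories. By the discussion preceding the theorem, $\cat{PreTop_{int}}$ and $\cat{Top_{int}}$ are full subcategories of $\cat{FF}$ (the interior maps being exactly the neighborhood morphisms between (pre-)topological spaces), while $\cat{PreInt}$ and $\cat{Int}$ are full subcategories of $\cat{caMA}$. Consequently, to obtain the two dual equivalences it suffices to show that the duality functors restrict at the level of objects: a filter frame lies in $\cat{PreTop_{int}}$ (resp.\ $\cat{Top_{int}}$) exactly when its dual algebra lies in $\cat{PreInt}$ (resp.\ $\cat{Int}$). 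Restricting a dual equivalence to full subcategories interchanged on objects automatically yields a dual equivalence.

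The heart of the argument is the two bulleted correspondences stated just above the theorem. First I would verify: for a filter frame $(X, N)$ with dual algebra $(\wp X, \dbox_N)$, the frame validates \axCent\ iff the algebra validates \axT, and the frame validates \axiv\ iff the algebra validates \axfour. Each is an immediate unfolding of the definition $\dbox_N(a) = \{ y \in X \mid a \in N(y) \}$. For instance, the inequality $\dbox_N(a) \subseteq a$ for every $a \subseteq X$ says precisely that $a \in N(y)$ implies $y \in a$ for all $y$, which is \axCent. The check for \axiv\ versus \axfour\ proceeds by the same template after unfolding the outer $\dbox_N$ once more.

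To close the loop in the inverse direction, given a pre-interior (resp.\ interior) algebra $(A, \dbox)$ I would use the natural isomorphism of Theorem~\ref{thm:dual-FF-FIMA} to identify $A$ with $\wp(\fun{at}(A))$ and thus reduce to the case already verified: the dual frame on $\fun{at}(A)$ automatically satisfies \axCent\ (resp.\ \axCent\ and \axiv). The only genuine work in the entire argument is the two unfolding calculations linking \axCent\ to \axT\ and \axiv\ to \axfour; the remainder is standard bookkeeping about restricting a dual equivalence to full subcategories whose objects are interchanged by the underlying functors.
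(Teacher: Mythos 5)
Your proposal is correct and follows essentially the same route as the paper: the paper derives the theorem directly from the two correspondence bullets ($\axCent \leftrightarrow \axT$ and $\axiv \leftrightarrow \axfour$, attributed to a straightforward verification as in Chellas) together with the observation that the duality of Theorem~\ref{thm:dual-FF-FIMA} restricts to full subcategories whose objects are interchanged. Your unfolding of $\dbox_N$ and the bookkeeping about restricting a dual equivalence are exactly the content the paper leaves implicit.
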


\begin{rem}
  The object part of the dual equivalence $\cat{Int} \equiv^{\op} \cat{Top_{int}}$
  dates back to McKinsey and Tarski \cite{McKTar44}, and the morphism part
  to Rasiowa and Sikorski \cite[Section III.3]{RasSik63}. See \cite{BH21}
  for more details.
\end{rem}

\begin{rem}
  Restricting Theorem~\ref{MT-for-open} further gives rise to dualities for the categories
  of $T_0$-spaces, $T_1$-spaces, $P$-spaces (that is, topological spaces
  whose topology is closed under countable intersections),
  and Alexandrov spaces (topological spaces whose topology is closed
  under arbitrary intersections), with interior maps as morphisms.
\end{rem}

\begin{rem}
  Other topology-like spaces are the so-called
  ``generalized topological spaces'' of Cs\'{a}sz\'{a}r \cite{Csa02}.
  Proving correspondence results for the relevant axioms gives
  rise to a duality for such spaces in a similar manner as for
  pre-topological spaces.
\end{rem}

  The findings of this section can be summarized in the following diagram.
  The horizontal arrows indicate full inclusions of categories.
  The vertical arrows denote dual equivalences, and are labelled with the
  relevant theorem or example.

\bigskip
\begin{center}
  \begin{tikzcd}[column sep=1.2em, row sep=5em]
    \cat{KF} \arrow[r, hook]
        \arrow[d,latex-latex, "\text{Exm.~\ref{exm:fun-dual-Cinf}}" {rotate=90, anchor=south}]
      & \cat{Top_{int}}  \arrow[r, hook]
        \arrow[d,latex-latex, "\text{Thm.~\ref{thm:dual-Pre-Top}}" {rotate=90, anchor=south}]
      & \cat{PreTop_{int}}  \arrow[r, hook]
        \arrow[d,latex-latex, "\text{Thm.~\ref{thm:dual-Pre-Top}}" {rotate=90, anchor=south}]
      & \cat{FF}  \arrow[r, hook]
        \arrow[d,latex-latex, "\text{Thm.~\ref{thm:dual-FF-FIMA}}" {rotate=90, anchor=south}]
      & \cat{MF}  \arrow[r, hook]
        \arrow[d,latex-latex, "\text{Thm.~\ref{thm:dual-MF-CABAM}}" {rotate=90, anchor=south}]
      & \cat{CNF}  \arrow[r, hook]
        \arrow[d,latex-latex, "\text{Thm.~\ref{thm:dual-CNF-CACA}}" {rotate=90, anchor=south}]
      & \cat{NF}
        \arrow[d,latex-latex, "\text{Exm.~\ref{exm:fun-dual-no-axioms}}" {rotate=90, anchor=south}]
        \arrow[r, hookleftarrow]
      & \cat{ContF}
        \arrow[d,latex-latex, "\text{Thm.~\ref{thm:dual-cont}}" {rotate=90, anchor=south}] \\
    \cat{CAMA} \arrow[r, hook]
      & \cat{Int} \arrow[r, hook]
      & \cat{PreInt} \arrow[r, hook]
      & \cat{caMA} \arrow[r, hook]
      & \cat{CABAM} \arrow[r, hook]
      & \cat{CACA} \arrow[r, hook]
      & \cat{CANA} \arrow[r, hookleftarrow]
      & \cat{CAContA}
  \end{tikzcd}
\end{center}

\section{J\'{o}nsson-Tarski type dualties}\label{sec:stone-type}

  In this section we derive categorical dualities for classes of neighborhood
  algebras that are not necessarily complete and atomic. While this simplifies
  the algebraic side of our story, it requires extra structure on the frame side
  of the duality: we now have to work with \emph{descriptive} neighborhood frames
  \cite{Dos89}. For this we work with one-step axioms in the standard modal
  language with finitary connectives. As corollaries we derive J\'{o}nsson-Tarski
  duality for modal algebras and Do\v{s}en duality for neighborhood algebras.
  
  Our main contribution is to define an analogue of $\fun{B}$ on Stone spaces by
  modifying the celebrated Vietoris construction. As a result, we obtain a new
  endofunctor on $\cat{Stone}$ and show that the category of coalgebras for this
  endofunctor is isomorphic to the category of descriptive neighborhood frames of 
  \cite[Section~2]{Dos89}. The Vietoris space of a Stone space $\topo{X}$ is
  embeddable in this new hyperspace of $\topo{X}$ as a closed subspace.

\subsection{Descriptive neighborhood frames}\label{subsec:dnf-coalg}

  We start by recalling the definition of a descriptive
  neighborhood frame, first introduced by Do\v{s}en \cite[Section 2]{Dos89}.
  However, to be in line with standard practice in modal logic, our definition of \emph{general}
  frames deviates slightly from that of Do\v{s}en in that
  Do\v{s}en's definition requires that all neighborhoods
  are admissible, while we view this as an additional \emph{tightness} condition.
  So our \emph{tight general frames} correspond to Do\v{s}en's \emph{general frames}.
  
  Subsequently, we define a new endofunctor on $\cat{Stone}$ generalizing the
  Vietoris endofunctor, and show that the category of descriptive neighborhood
  frames can be viewed as the category of coalgebras for this endofunctor. 
  
\begin{defi}\label{def:gen}
  \begin{enumerate}
    \item[]
    \item \label{it:gen-frm} 
          A \emph{general \textup{(}neighborhood\textup{)} frame} is a tuple $(X, N, A)$
          consisting of a neighborhood frame $(X, N)$ and a Boolean subalgebra
          $A$ of $\fun{P}X$ such that $A$ is closed under
          $\dbox_N : \fun{P}X \to \fun{P}X$ given by
          $$
            \dbox_N a = \{ x \in X \mid a \in N(x) \}.
          $$
  \item A general frame $(X,N,A)$ is \emph{tight} if $N(x) \subseteq A$
  for all $x\in X$.
  \item A general frame $(X, N, A)$ is 
  \emph{differentiated} if for all distinct $x, y \in X$ there is
          $a \in A$ such that $x \in a$ and $y \notin a$; and
  \emph{compact} if whenever $A' \subseteq A$ has the finite intersection
          property, then $\bigcap A' \neq \varnothing$.
  \item A \emph{descriptive \textup{(}neighborhood\textup{)} frame} is a general frame that is
  differentiated, compact, and tight. 
  \item A \emph{general frame morphism} from $(X, N, A)$ to $(X', N', A')$ is a neighborhood
  frame morphism $f : (X, N) \to (X', N')$ such that $f^{-1}(a') \in A$ for all
  $a' \in A'$. We denote the category of descriptive frames
  and general frame morphisms by $\cat{DNF}$.
  \end{enumerate}
\end{defi}
  
  Let $(X, N, A)$ be a descriptive frame. As usual, we can generate a topology
  $\tau_A$ on $X$ using $A$ as a base. The space $(X, \tau_A)$ is compact
  and Hausdorff because $(X, N, A)$ is compact and differentiated.
  Moreover, it is zero-dimensional because $A$ is closed under complements,
  and hence $(X, \tau_A)$ is a Stone space.
  A subset $a \subseteq X$ is in $A$ iff it is
  clopen in $(X, \tau_A)$, so we can recover $A$ from the Stone topology.

  Since $A = \fun{clp}\topo{X}$ and $N(x) \subseteq A$, it makes sense to
  define a functor $\fun{D}$ on $\cat{Stone}$ that sends $\topo{X} \in \cat{Stone}$
  to $\fun{P}(\fun{clp}\topo{X})$. The choice of topology on
  $\fun{P}(\fun{clp}\topo{X})$ is motivated by the desideratum to turn
  $\fun{D}\topo{X}$ into a Stone space, and is a generalization of the
  Vietoris topology.

\begin{defi}
  The \emph{$\fun{D}$-hyperspace} $\fun{D}\topo{X}$ of a Stone space $\topo{X}$
  is the space $\fun{P}(\fun{clp}\topo{X})$ whose topology is generated by
  the clopen subbase
  $$
    \lbox a = \{ W \in \fun{P}(\fun{clp}\topo{X}) \mid a \in W \},
    \qquad
    \ldiamond a = \{ W \in \fun{P}(\fun{clp}\topo{X}) \mid X \setminus a \notin W \},
  $$
  where $a$ ranges over the clopen subsets of $\topo{X}$.
  For a continuous function $f : \topo{X} \to \topo{X}'$ between Stone spaces,
  define $\fun{D}f : \fun{D}\topo{X} \to \fun{D}\topo{X}'$ by
  $$
    \fun{D}f(W) = \{ a' \in \fun{clp}\topo{X}' \mid f^{-1}(a') \in W \}.
  $$
\end{defi}

\begin{lem}
  The assignment $\fun{D}$ defines an endofunctor on $\cat{Stone}$.
\end{lem}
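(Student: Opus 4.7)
The plan is to decompose the verification into three self-contained parts: (i) $\fun{D}\topo{X}$ is a Stone space for every Stone space $\topo{X}$; (ii) $\fun{D}f$ is a well-defined continuous map whenever $f$ is; and (iii) $\fun{D}$ preserves identities and composition.

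For (i), I would first observe that the subbasic sets $\lbox a$ and $\ldiamond a$ are closed under complementation inside $\fun{D}\topo{X}$, since a direct computation gives $\fun{D}\topo{X} \setminus \lbox a = \ldiamond(\topo{X}\setminus a)$ (and symmetrically for $\ldiamond a$). In particular, every subbasic set is clopen, so $\fun{D}\topo{X}$ is zero-dimensional. Hausdorffness is immediate: distinct $W,W' \in \fun{P}(\fun{clp}\topo{X})$ differ on some clopen $a$, so one of $\lbox a$, $\ldiamond(\topo{X}\setminus a)$ separates them. For compactness, the cleanest route is to identify $\fun{P}(\fun{clp}\topo{X})$ with $2^{\fun{clp}\topo{X}}$ via characteristic functions and note that the topology generated by our subbase coincides with the product topology on $2^{\fun{clp}\topo{X}}$ (both subbases are generated by evaluation at a single clopen, and our subbase is closed under complements). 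Compactness then follows from Tychonoff; alternatively one can apply the Alexander subbase theorem directly. I expect this topological identification to be the only nontrivial step.

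For (ii), $\fun{D}f(W) \subseteq \fun{clp}\topo{X}'$ is well-defined because continuity of $f$ ensures $f^{-1}(a') \in \fun{clp}\topo{X}$ for every $a' \in \fun{clp}\topo{X}'$. Continuity of $\fun{D}f$ reduces to checking that preimages of subbasic opens are open, and a straightforward unwinding gives
\begin{align*}
  (\fun{D}f)^{-1}(\lbox a') &= \lbox(f^{-1}(a')), \\
  (\fun{D}f)^{-1}(\ldiamond a') &= \ldiamond(f^{-1}(a')),
\end{align*}
both of which are subbasic clopens of $\fun{D}\topo{X}$ by continuity of $f$.

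For (iii), the identity and composition axioms follow by unfolding the definition: $\fun{D}(\id_{\topo{X}})(W) = \{a \in \fun{clp}\topo{X} \mid \id^{-1}(a) \in W\} = W$, and for composable $f,g$ a direct calculation using $(g\circ f)^{-1} = f^{-1}\circ g^{-1}$ gives $\fun{D}(g\circ f) = \fun{D}g \circ \fun{D}f$. These are routine and require no further ideas. The only real obstacle is the Stone-space verification in step (i), which is handled most transparently by the product-space identification.
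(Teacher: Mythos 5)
Your proposal is correct, and it differs from the paper's argument in one substantive place: the proof of compactness. You identify $\fun{P}(\fun{clp}\topo{X})$ with the Cantor cube $2^{\fun{clp}\topo{X}}$ and observe that, because $\fun{clp}\topo{X}$ is closed under complementation, the subbase $\{\lbox a, \ldiamond a\}$ is exactly the standard evaluation subbase for the product topology; compactness (indeed the whole Stone-space claim) then drops out of Tychonoff for products of discrete two-point spaces. The paper instead applies the Alexander subbase theorem directly: given a subbasic cover $\bigcup_{a\in A}\lbox a \cup \bigcup_{b\in B}\ldiamond b$, the witness $W = \{X\setminus b \mid b\in B\}$ lies in no $\ldiamond b$, hence in some $\lbox a'$ with $a' = X\setminus b'$, and $\lbox a' \cup \ldiamond b'$ is already a cover. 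Your route is more conceptual --- it exhibits $\fun{D}\topo{X}$ as the Stone dual of the free Boolean algebra on the set $\fun{clp}\topo{X}$, which mirrors the algebraic description of $\fun{N}B$ as a free algebra on $\{\Box b \mid b\in B\}$ --- while the paper's is self-contained and produces an explicit two-element subcover. The remaining steps match: zero-dimensionality and Hausdorffness are argued identically, your computation $(\fun{D}f)^{-1}(\ldiamond a') = \ldiamond(f^{-1}(a'))$ is the complement argument the paper merely gestures at, and your direct verification of identities and composition replaces the paper's appeal to $\fun{UD}$ being a subfunctor of $\fun{BU}$; both are routine.
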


\begin{proof}
  To see that $\fun{D}$ is well defined, we first show that $\fun{D}\topo{X}$
  is a Stone space. 
  Zero-dimensionality of $\fun{D}\topo{X}$ follows from the fact that it
  is generated by a base that is closed under complementation.
  (Indeed, for all $a \in \fun{clp}\topo{X}$ we have
  $\fun{D}\topo{X} \setminus \lbox a = \ldiamond(X \setminus a)$
  and $\fun{D}\topo{X} \setminus \ldiamond a = \lbox(X \setminus a)$.)
  To see that $\fun{D}\topo{X}$ is Hausdorff, suppose that $W, W' \in \fun{D}\topo{X}$ are
  distinct. Then there must be an $a \in \fun{clp}\topo{X}$ such
  that either $a \in W$ and $a \notin W'$, or $a \notin W$ and $a \in W'$.
  In either case $\lbox a$ and $\ldiamond(X \setminus a)$ provide two disjoint
  open subsets of $\fun{D}\topo{X}$ separating $W$ and $W'$.
  
  For compactness, by the Alexander subbase theorem, it suffices to prove that every open cover
  consisting of subbasic (cl)opens has a finite subcover. So suppose
  \begin{equation}\label{eq:cover}
    \fun{D}\topo{X} = \bigcup_{a \in A} \lbox a \cup \bigcup_{b \in B} \ldiamond b,
  \end{equation}
  where $A, B \subseteq \fun{clp}\topo{X}$.
  Consider
  $$
    W = \{ X \setminus b \mid b \in B \} \in \fun{D}\topo{X}.
  $$
  By construction, this is in none of the $\ldiamond b$, so there must be
  $a' \in A$ such that $W \in \lbox a'$. But this means $a' = X \setminus b'$ for
  some $b' \in B$.
  Consequently, if $V$ is an arbitrary element of $\fun{D}\topo{X}$ such that
  $V \notin \lbox a'$, then
  $V \in \fun{D}\topo{X} \setminus \lbox a' = \ldiamond(X \setminus a') = \ldiamond b'$.
  Therefore,
  $$
    \fun{D}\topo{X} = \lbox a' \cup \ldiamond b',
  $$
  so we have found a finite subcover of the cover in \eqref{eq:cover}.
  Thus, $\fun{D}\topo{X}$ is a Stone space.
  
  Finally, we show that $\fun{D}$ is well defined on morphisms.
  Let $f : \topo{X} \to \topo{X}'$ be a morphism in $\cat{Stone}$.
  In order to prove that $\fun{D}f$ is continuous, it suffices to show that
  $(\fun{D}f)^{-1}(\lbox a')$ is clopen in $\fun{D}\topo{X}$ for all $a' \in \fun{clp}\topo{X}'$.
  (The case for diamonds follows by working with complements.)
  So let $a' \in \fun{clp}\topo{X}'$. Then
  $$
    (\fun{D}f)^{-1}(\lbox a')
      = \{ W \in \fun{D}\topo{X} \mid a' \in \fun{D}f(W) \}
      = \{ W \in \fun{D}\topo{X} \mid f^{-1}(a') \in W \}
      = \lbox f^{-1}(a'),
  $$
  which is clopen in $\fun{D}\topo{X}$. Consequently, $\fun{D}$ is well defined. 
  Functoriality follows from the fact that $\fun{UD}$ is a subfunctor of $\fun{BU}$,
  where $\fun{B} = \contra\contra$ and $\fun{U} : \cat{Stone} \to \cat{Set}$ is the forgetful functor.
\end{proof}

\begin{thm}\label{thm:dnf-coalg}
  $
    \cat{DNF} \cong \cat{Coalg}(\fun{D}).
  $
\end{thm}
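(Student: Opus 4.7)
The plan is to exhibit mutually inverse functors $F : \cat{DNF} \to \cat{Coalg}(\fun{D})$ and $G : \cat{Coalg}(\fun{D}) \to \cat{DNF}$. On objects, given a descriptive frame $(X, N, A)$, I first use differentiation, compactness, and the Boolean-algebra structure of $A$ to generate a Stone topology $\tau_A$ on $X$ with $A$ as a clopen base, so that $\fun{clp}(X,\tau_A) = A$; this is the standard Stone-duality step, with differentiation giving Hausdorffness and compactness giving compactness of $(X,\tau_A)$. By tightness $N(x) \subseteq A = \fun{clp}\topo{X}$, so $N$ may be viewed as a map $\topo{X} \to \fun{D}\topo{X}$. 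Continuity of $N$ reduces to examining preimages of the subbasic clopens: $N^{-1}(\lbox a) = \{x \mid a \in N(x)\} = \dbox_N a$, which belongs to $A$ by the general-frame closure condition, and the $\ldiamond$-case follows by complementation. This assigns to $(X, N, A)$ a $\fun{D}$-coalgebra $(\topo{X}, N)$.

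In the reverse direction, given a coalgebra $\gamma : \topo{X} \to \fun{D}\topo{X}$, set $N := \gamma$ and $A := \fun{clp}\topo{X}$. Tightness holds by the very definition of $\fun{D}\topo{X}$, and differentiation and compactness come for free from $\topo{X}$ being a Stone space. Closure of $A$ under $\dbox_N$ is immediate from continuity of $\gamma$, since $\dbox_N a = \gamma^{-1}(\lbox a)$. These two assignments are mutually inverse on objects: starting from $(X,N,A)$ the topology $\tau_A$ has $A$ as its clopen algebra by Stone duality, and starting from $\topo{X}$ we recover the same Stone space from $\fun{clp}\topo{X}$.

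For morphisms, a general frame morphism $f : (X, N, A) \to (X', N', A')$ induces a continuous map between the underlying Stone spaces because $f^{-1}(a') \in A$ for all $a' \in A'$; and the neighborhood-morphism condition, read off at an admissible $a' \in A'$, says precisely that $a' \in N'(f(x)) \Leftrightarrow f^{-1}(a') \in N(x)$, which unfolds to the coalgebra equation $\fun{D}f(N(x)) = N'(f(x))$ at each $x$. The converse is analogous. Functoriality of $F$ and $G$ is routine since both act as the identity on underlying functions, so preservation of identities and composition is automatic. The main obstacle is the morphism correspondence: one must verify carefully that the admissibility-restricted coalgebra condition carries the full content of being a neighborhood morphism in the tight descriptive setting, which is where the interplay between tightness, continuity, and the closure of $A$ under $\dbox_N$ is essential; the object-level construction of the Stone space $(X,\tau_A)$ from $A$ has already been set up by Definition~\ref{def:gen} and the clopen base argument.
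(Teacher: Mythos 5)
Your proposal is correct and follows essentially the same route as the paper: generate the Stone topology $\tau_A$ from $A$ (differentiation and compactness giving Hausdorffness and compactness), observe that tightness lets $N$ land in $\fun{D}\topo{X}$ and that continuity is exactly closure of $A$ under $\dbox_N$ via $N^{-1}(\lbox a) = \dbox_N a$, and invert by taking $A := \fun{clp}\topo{X}$; the paper likewise dismisses the morphism correspondence as ``a straightforward verification.'' The caveat you flag at the end is real but is a matter of reading the definition rather than of proof: the coalgebra equation only tests clopen $a'$, and tightness alone does not rule out $f^{-1}(a') \in N(x)$ for a non-admissible $a'$ with admissible preimage, so the equivalence with \eqref{eq:nbhd-mor} holds only if the general-frame morphism condition is understood as quantified over admissible sets $a' \in A'$ (as in Do\v{s}en's original definition), under which reading your unfolding of the two conditions is exactly the intended verification.
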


\begin{proof}
  For isomorphism on objects,
  if $(X, N, A)$ is a descriptive neighborhood frame and $\tau_A$ is the 
  topology on $X$ generated by $A$, then $N$ is a function from
  $\topo{X} = (X, \tau_A)$ to $\fun{D}\topo{X}$ which is continuous
  because $A$ is closed under $\dbox_N$ defined in Definition~\ref{def:gen}\eqref{it:gen-frm}.
  Thus, $(\topo{X}, N)$ is a $\fun{D}$-coalgebra.
  
  Conversely, a $\fun{D}$-coalgebra $(\topo{X}, \gamma)$ gives rise to the
  descriptive neighborhood frame $(X, N, \fun{clp}\topo{X})$, where
  $X$ is the set underlying $\topo{X}$ and $N$ is defined by
  $N(x) = \gamma(x)$. 
  Continuity of $\gamma$ entails that $\fun{clp}\topo{X}$ is closed under $\dbox_N$.
  
  The isomorphism on morphisms follows from a straightforward verification.
\end{proof}

\begin{rem}\label{rem:V-vs-D}
  The Vietoris functor $\fun{V}$ is a subfunctor of $\fun{D}$ via the
  natural transformation $\zeta : \fun{V} \to \fun{D}$ defined on components by
  $$
    \zeta_{\topo{X}}
      : \fun{V}\topo{X} \to \fun{D}\topo{X}
      : c \mapsto \{ a \in \fun{clp}\topo{X} \mid c \subseteq a \}.
  $$
  This gives rise to a functor $\hat{\zeta} : \cat{Coalg}(\fun{V}) \to \cat{Coalg}(\fun{D})$,
  defined on objects by sending $(\topo{X}, \gamma) \in \cat{Coalg}(\fun{V})$ to
  $$
    \begin{tikzcd}
      \topo{X}
            \arrow[r, "\gamma"]
        & \fun{V}\topo{X}
            \arrow[r, "\zeta_{\topo{X}}"]
        & \fun{D}\topo{X}
    \end{tikzcd}
  $$
  and on morphisms by $\hat{\zeta}f = f$.
  Specifically, if $(\topo{X}, \gamma)$ is a $\fun{V}$-coalgebra, then the
  corresponding $\fun{D}$-coalgebra is given by $(\topo{X}, N)$, where
  $N(x) = \{ a \in \fun{clp}\topo{X} \mid \gamma(x) \subseteq a \}$.
  The descriptive neighborhood frames lying in the image of $\hat{\zeta}$ are
  precisely those descriptive frames validating \axN\ and \axC;
  see also Example \ref{exm:JT}.
\end{rem}

\subsection{Functor dualities}

  Like we did in Section \ref{subsec:fdt-set}, we can use one-step axioms to
  prove that certain quotient functors of $\fun{N}$ from 
  Proposition~\ref{prop:alg}\eqref{it:prop:alg-N} are Stone-dual to subfunctors
  of~$\fun{D}$.
  
  Since $\fun{N}$ is an endofunctor on $\cat{BA}$, we can only take quotients
  with finitary axioms. These are axioms that have finite conjunctions
  and disjunctions. Since this implies that axioms can only contain
  a finite number of variables, it suffices to work with a countable set of
  variables. Thus, we will work with the standard modal language, viewed
  as a sublanguage of $\mf{L}_{\Box}$.

\begin{defi}
  A \emph{finitary axiom} is a formula in the language $\mf{L}_{\Box}^{\omega}$
  (defined as in Definition \ref{def:inf-lan}).
  A \emph{finitary one-step axiom} is a formula in the language
  $(\mf{L}^{\omega}_{\Box})^1$.
\end{defi}

  We define assignments and satisfaction of these axioms in neighborhood algebras
  as in Definition \ref{def:ass-sat}. That is, an assignment for a neighborhood
  algebra $(B, \dbox)$ is a function $\theta : V_{\omega} \to B$ and extends uniquely to
  a map $\hat{\theta} : \mf{L}_{\Box}^{\omega} \to B$.
  We write $(B, \dbox) \Vdash \phi$ if $\hat{\theta}(\phi) = 1$ for
  every assignment $\theta$.
  
  Furthermore, the assignment $\theta$ gives rise to a map
  $\ov{\theta} : (\mf{L}_{\Box}^{\omega})^1 \to \fun{N}B$ which interprets
  finitary one-step axioms in $\fun{N}B$ in the same manner as in Definition
  \ref{def:theta-ov}.

\begin{defi}
  Let $\Ax$ be a collection of finitary one-step axioms.
  For $B \in \cat{BA}$, define $\fun{N}_{\Ax}B$ to be the free Boolean algebra
  generated by $\{ \Box b \mid b \in B \}$ modulo the congruence relation $\sim_{\Ax}$
  generated by $\{ \ov{\theta}(\phi) \sim_{\Ax} 1 \}$,
  where $\phi$ ranges over $\Ax$ and $\theta$ over the assignments
  $V_{\omega} \to B$.
  For a homomorphism $h : B \to B'$ define $\fun{N}_{\Ax}h$ on generators
  by $\fun{N}_{\Ax}h(\Box b) = \Box h(b)$.
  Then $\fun{N}_{\Ax}$ defines a functor $\cat{BA} \to \cat{BA}$.
\end{defi}

\begin{exa}
  Again, well-known functors can be obtained via this procedure.
  Of course, if we take $\Ax = \varnothing$, then we get
  $\fun{N}_{\emptyset} = \fun{N}$.
  Similarly, the axioms
  \begin{enumerate}
    \item [\axN]
          $\Box\top$
    \axitem{ax:C}{C}
          $\Box v \wedge \Box v' \leftrightarrow \Box(v \wedge v')$
  \end{enumerate}
  give rise to the endofunctor on $\cat{BA}$ whose algebras are normal modal
  algebras.
\end{exa}

  Like in Section \ref{sec:thomason}, for each $\fun{N}_{\Ax}$ we can
  define a dual functor $\fun{D}_{\Ax}$.
  The functor $\fun{D}_{\Ax}$ arises as a subfunctor of $\fun{D}$.
  In particular, this means that for every Stone space $\topo{X}$,
  the space $\fun{D}_{\Ax}\topo{X}$ is a subspace of $\fun{D}\topo{X}$.

\begin{defi}\label{def:ax-subset-stone}
  Let $\topo{X}$ be a Stone space and $B$ its dual Boolean algebra of clopens.
  For a finitary one-step axiom $\phi$ and an assignment $\theta : V_{\omega} \to B$
  of the variables we define the clopen neighborhood $\theta^t(\phi)$
  recursively by
  \begin{align*}
    \theta^t(\Box v) &= \lbox \theta(v) \\
    \theta^t(\top) &= \fun{P}B \\
    \theta^t(\neg\phi) &= \fun{P}B \setminus \theta^t(\phi) \\
    \theta^t(\phi \wedge \psi) &= \theta^t(\phi) \cap \theta^t(\psi)
  \end{align*}
  Let $\phi$ be a finitary one-step axiom.
  Then we call $W \in \fun{D}\topo{X}$ a $\phi$-subset of $B$ if $W \in \theta^t(\phi)$
  for every assignment $\theta$ of the variables $V_{\omega}$.
  If $\Ax$ is a collection of finitary one-step axioms, then we say that
  $W$ is an \emph{$\Ax$-subset} if $W$ is a $\phi$-subset
  for all axioms $\phi \in \Ax$.
\end{defi}

  The next lemma is an analogue of Lemma \ref{lem:uf-set}.

\begin{lem}\label{lem:uf-stone}
  Ultrafilters of $\fun{N}_{\Ax}B$ correspond bijectively to $\Ax$-subsets of $B$.
\end{lem}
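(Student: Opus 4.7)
The plan is to mimic the proof of Lemma~\ref{lem:uf-set} almost verbatim, exploiting the parallel between (i) atoms of a CABA viewed as complete homomorphisms into $2$ and (ii) ultrafilters of a Boolean algebra viewed as homomorphisms into $2$. The only real differences are that we work with finitary one-step axioms and ordinary (rather than complete) Boolean congruences, so every join/meet appearing in the induction is finite.

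First, I would identify ultrafilters of $\fun{N}_{\Ax}B$ with Boolean homomorphisms $p:\fun{N}_{\Ax}B\to 2$. Since $\fun{N}_{\Ax}B$ is presented by generators $\{\Box b\mid b\in B\}$ modulo the congruence $\sim_{\Ax}$, each such $p$ is uniquely determined by its values on generators and corresponds bijectively to a homomorphism $p':\fun{N}B\to 2$ whose kernel contains $\sim_{\Ax}$. I then set
\[
  W_p=\{b\in B\mid p(\Box b)=1\}\subseteq B,
\]
and conversely, given $W\subseteq B$, I define $p_W':\fun{N}B\to 2$ on generators by $p_W'(\Box b)=1$ iff $b\in W$; this extends uniquely to a homomorphism because $\fun{N}B$ is free.

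The crucial technical step is the analogue of the claim inside the proof of Lemma~\ref{lem:uf-set}: for every homomorphism $p:\fun{N}B\to 2$, every finitary one-step axiom $\phi$, and every assignment $\theta:V_{\omega}\to B$,
\[
  p(\ov{\theta}(\phi))=1 \iff W_p\in\theta^t(\phi).
\]
I would prove this by induction on the complexity of $\phi$, treating the cases $\Box v$, $\top$, $\neg\phi$, and $\phi\wedge\psi$ (which suffice since $\mf{L}_{\Box}^{\omega}$ is finitary); each case is a direct unfolding of Definitions~\ref{def:theta-ov} and~\ref{def:ax-subset-stone}. I do not anticipate any obstacle here: the induction works exactly as in Lemma~\ref{lem:uf-set}, but only finite conjunctions appear.

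With the claim in hand, I would conclude as follows. If $p:\fun{N}_{\Ax}B\to 2$ is a homomorphism and $q:\fun{N}B\to\fun{N}_{\Ax}B$ is the quotient, then $p\circ q$ has kernel containing $\sim_{\Ax}$, so $(p\circ q)(\ov{\theta}(\phi))=1$ for every $\phi\in\Ax$ and every $\theta$; since $W_{p\circ q}=W_p$, the claim yields that $W_p$ is an $\Ax$-subset. Conversely, if $W$ is an $\Ax$-subset, the claim gives $p_W'(\ov{\theta}(\phi))=1$ for all $\phi\in\Ax$ and all $\theta$, so $\ov{\theta}(\phi)\sim_{\Ax}\top$ implies the kernel of $p_W'$ contains $\sim_{\Ax}$, and hence $p_W'$ descends to a homomorphism $p_W:\fun{N}_{\Ax}B\to 2$. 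The identities $p_{W_p}=p$ and $W_{p_W}=W$ are immediate from the definitions on generators, establishing the bijection. The only subtlety to mention, though not an obstacle, is that we must verify that the kernel containment $\sim_{\Ax}\subseteq \ker p_W'$ really does follow from validity on \emph{all} instantiations in $\Ax$; this is handled by the fact that $\sim_{\Ax}$ is by definition the congruence generated by those instantiations.
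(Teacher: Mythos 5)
Your proposal is correct and follows essentially the same route as the paper: the paper's own proof simply defines the maps $p \mapsto W_p$ and $W \mapsto p_W$ and then states that the rest is obtained from the proof of Lemma~\ref{lem:uf-set} by replacing $\fun{L}$ with $\fun{N}$, $\fun{L}_{\Ax}$ with $\fun{N}_{\Ax}$, and ``complete homomorphism/congruence'' with ``homomorphism/congruence'', which is exactly the translation you carry out explicitly (including the inductive claim restricted to finite conjunctions and the kernel argument for descending through the quotient).
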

\begin{proof}
  The bijection is established by sending an ultrafilter of $\fun{N}_{\Ax}B$
  (viewed as a homomorphism $p : \fun{N}_{\Ax}B \to 2$) to the set
  \begin{equation}\label{eq:uf-to-set-descr}
    W_p = \{ b \in B \mid p(\Box b) = 1 \} \subseteq B,
  \end{equation}
  and by sending any $\Ax$-subset $W \subseteq B$ to the map
  $p_W : \fun{N}_{\Ax}B \to 2$ defined on generators by $p_W(\Box b) = 1$ iff $b \in W$.
  The remainder of the proof can be obtained from the proof of Lemma \ref{lem:uf-set} by
  replacing ``$\fun{L}$'' with $``\fun{N}$'',
  ``$\fun{L}_{\Ax}$'' with ``$\fun{N}_{\Ax}$'',
  ``complete homomorphism'' with ``homomorphism'',
  and ``complete congruence'' with ``congruence''.
\end{proof}

  We can now define the functor $\fun{D}_{\Ax}$ in a similar manner to
  $\fun{B}_{\Ax}$ from Definition \ref{def:fun-B}.

\begin{defi}
  For a Stone space $\topo{X}$, let $\fun{D}_{\Ax}\topo{X}$ be the subspace
  of $\fun{D}\topo{X}$ whose objects are $\Ax$-subsets of $\fun{clp}\topo{X}$.
  For a continuous function $f : \topo{X} \to \topo{X}'$ let
  $\fun{D}_{\Ax}f$ be the restriction of $\fun{D}f$ to $\fun{D}_{\Ax}\topo{X}$.
\end{defi}

  While for any set $X$ the set $\fun{B}_{\Ax}X$ is automatically a subset
  of $\fun{B}X$, in our current setting it is not guaranteed that
  $\fun{D}_{\Ax}\topo{X}$ is a Stone subspace of $\fun{D}\topo{X}$.
  The next proposition ensures that this is indeed the case.

\begin{prop}
  $\fun{D}_{\Ax}$ is an endofunctor on $\cat{Stone}$.
\end{prop}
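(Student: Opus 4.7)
The plan is to verify three things: (i) $\fun{D}_{\Ax}\topo{X}$ is a Stone space, (ii) the restriction $\fun{D}_{\Ax}f$ actually lands in $\fun{D}_{\Ax}\topo{X}'$ and is continuous, and (iii) composition and identities are preserved. Parts (ii) and (iii) are mostly inherited from the fact that $\fun{D}_{\Ax}$ is constructed as a subfunctor of $\fun{D}$; the real content lies in (i), which amounts to showing that $\fun{D}_{\Ax}\topo{X}$ is a \emph{closed} subspace of $\fun{D}\topo{X}$, since closed subspaces of Stone spaces are Stone spaces.

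For (i), I would first prove by induction on $\phi \in (\mf{L}_{\Box}^{\omega})^1$ that for every assignment $\theta : V_{\omega} \to \fun{clp}\topo{X}$, the set $\theta^t(\phi) \subseteq \fun{D}\topo{X}$ is clopen. The base case $\theta^t(\Box v) = \lbox \theta(v)$ is a subbasic clopen of $\fun{D}\topo{X}$, and the cases $\top$, $\neg\phi$, $\phi \wedge \psi$ are immediate since clopens are closed under complements and finite intersections. Here finitariness is essential: it ensures we only take finite Boolean combinations of subbasic clopens. Consequently,
\[
  \fun{D}_{\Ax}\topo{X}
    = \bigcap_{\phi \in \Ax} \bigcap_{\theta : V_{\omega} \to \fun{clp}\topo{X}} \theta^t(\phi)
\]
is an intersection of clopen (hence closed) subsets, so it is closed in $\fun{D}\topo{X}$. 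Therefore $\fun{D}_{\Ax}\topo{X}$, equipped with the subspace topology, is a Stone space.

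For (ii), given a continuous $f : \topo{X} \to \topo{X}'$ and an $\Ax$-subset $W$ of $\fun{clp}\topo{X}$, I need $\fun{D}f(W)$ to be an $\Ax$-subset of $\fun{clp}\topo{X}'$. Fix $\phi \in \Ax$ and an assignment $\theta' : V_{\omega} \to \fun{clp}\topo{X}'$; set $\theta = f^{-1} \circ \theta'$, which is a well-defined assignment for $\topo{X}$ because $f$ is continuous (so $f^{-1}$ sends clopens to clopens). Using the basic identity $\fun{D}f(W) \in \lbox a'$ iff $W \in \lbox f^{-1}(a')$, a routine induction on $\phi$ (mirroring the inductive proof of the claim inside Lemma~\ref{lem:uf-set}) yields
\[
  \fun{D}f(W) \in (\theta')^t(\phi) \iff W \in \theta^t(\phi).
\]
Since $W$ is a $\phi$-subset, the right-hand side holds for every such $\theta$, so $\fun{D}f(W) \in (\theta')^t(\phi)$ for every $\theta'$, showing that $\fun{D}f(W)$ is a $\phi$-subset. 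Varying $\phi$ over $\Ax$ gives the claim. Continuity of $\fun{D}_{\Ax}f$ is then immediate as the restriction and corestriction of the continuous map $\fun{D}f$ to the (closed) subspaces $\fun{D}_{\Ax}\topo{X}$ and $\fun{D}_{\Ax}\topo{X}'$.

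Finally, (iii) functoriality (preservation of identities and composition) transfers directly from $\fun{D}$, since $\fun{D}_{\Ax}$ agrees with $\fun{D}$ pointwise on objects and morphisms. The main obstacle is really the verification at the start of step (i) that each $\theta^t(\phi)$ is clopen; this is where the restriction to finitary one-step axioms is used in an essential way, in contrast to the $\cat{Set}$-based situation of Section~\ref{subsec:fdt-set} where no topological conditions needed to be checked.
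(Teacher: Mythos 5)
Your proposal is correct and follows essentially the same route as the paper: the paper also establishes that $\fun{D}_{\Ax}\topo{X}$ is a closed (indeed, an intersection of clopen) subspace of $\fun{D}\topo{X}$, and handles morphisms and functoriality by appeal to the subfunctor relationship with $\fun{D}$ and the argument of Proposition~\ref{prop:Bax-welldef}. You merely spell out two steps the paper leaves implicit---the induction showing each $\theta^t(\phi)$ is clopen (where finitariness is used) and the inductive verification that $\fun{D}f$ preserves $\Ax$-subsets---both of which are carried out correctly.
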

\begin{proof}
  Let $\topo{X}$ be a Stone space. In order to show that $\fun{D}_{\Ax}\topo{X}$
  is again a Stone space it suffices to prove that it is a closed subspace
  of $\fun{D}\topo{X}$.
  Note that $\theta^t(\phi)$ is a clopen subset of $\fun{D}\topo{X}$ for any
  finitary one-step axiom $\phi$ and assignment $\theta : V_{\omega} \to \fun{clp}\topo{X}$.
  By definition, the set underlying $\fun{D}_{\Ax}\topo{X}$ is given by
  \begin{equation}\label{eq:uf-stone}
    \bigcap \{ \theta^t(\phi) \mid \phi \in \Ax, \theta : V_{\omega} \to \fun{clp}\topo{X} \} \subseteq \fun{D}\topo{X}.
  \end{equation}
  Since this is the intersection of clopen subsets of $\fun{D}\topo{X}$,
  it is a closed subset of $\fun{D}\topo{X}$.
  
  That $\fun{D}_{\Ax}$ is well defined on morphisms can be proven as in
  Proposition~\ref{prop:Bax-welldef}, and functoriality of $\fun{D}_{\Ax}$
  follows from functoriality of $\fun{D}$.
\end{proof}

  The next theorem is an analogue of Theorem \ref{thm:fun-dual} in the setting
  of descriptive frames.

\begin{thm}[Stone Functor Duality Theorem]\label{thm:fun-dual-stone}
  Let $\Ax$ be a collection of finitary one-step axioms.
  Then the functors $\fun{N}_{\Ax}$ and $\fun{D}_{\Ax}$ are Stone-duals.
\end{thm}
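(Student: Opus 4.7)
The task is to exhibit a natural isomorphism $\xi : \fun{uf} \circ \fun{N}_{\Ax} \Rightarrow \fun{D}_{\Ax} \circ \fun{uf}$ between contravariant functors $\cat{BA} \to \cat{Stone}$. My plan is to first define the components $\xi_B$ on underlying sets via Lemma~\ref{lem:uf-stone}, then upgrade this pointwise bijection to a homeomorphism by matching subbasic opens, and finally verify naturality by a computation parallel to the one in the proof of Theorem~\ref{thm:fun-dual}.

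For a Boolean algebra $B$, I will define
\[
  \xi_B : \fun{uf}(\fun{N}_{\Ax}B) \to \fun{D}_{\Ax}(\fun{uf}(B))
\]
by sending an ultrafilter $p$ of $\fun{N}_{\Ax}B$ (viewed as a homomorphism $p : \fun{N}_{\Ax}B \to 2$) to the set $\{\hat{b} \in \fun{clp}(\fun{uf}(B)) \mid p(\Box b) = 1\}$, where $\hat{b} = \{u \in \fun{uf}(B) : b \in u\}$ is the clopen corresponding to $b \in B$ under the Stone isomorphism $\eta_B : B \to \fun{clp}(\fun{uf}(B))$. That $\xi_B(p)$ actually lies in $\fun{D}_{\Ax}(\fun{uf}(B))$ (i.e., is an $\Ax$-subset) and that $\xi_B$ is a bijection will follow immediately from Lemma~\ref{lem:uf-stone} transported along $\eta_B$, with inverse $\xi_B^{-1}(W') = p_{W}$ where $W = \eta_B^{-1}[W']$.

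Next I will check that $\xi_B$ is a homeomorphism. Since both spaces are compact Hausdorff, it is enough to show that $\xi_B$ is continuous, and for this it suffices, by the Alexander subbase theorem, that the preimages of the subbasic opens $\lbox \hat{b}$ and $\ldiamond \hat{b}$ are clopen in $\fun{uf}(\fun{N}_{\Ax}B)$. A direct unfolding gives $\xi_B^{-1}(\lbox \hat{b}) = \{p \mid p(\Box b) = 1\} = \widehat{\Box b}$ and $\xi_B^{-1}(\ldiamond \hat{b}) = \{p \mid p(\Box(\neg b)) = 0\} = \widehat{\neg \Box(\neg b)}$, both of which are basic clopens of the Stone space $\fun{uf}(\fun{N}_{\Ax}B)$. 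Finally, for naturality, given $h : B \to B'$ I will verify that $\fun{D}_{\Ax}(\fun{uf}(h)) \circ \xi_{B'} = \xi_B \circ \fun{uf}(\fun{N}_{\Ax}h)$; chasing an ultrafilter $p'$ of $\fun{N}_{\Ax}B'$ through both sides reduces, via the identity $\fun{N}_{\Ax}h(\Box b) = \Box h(b)$ on generators and $(\fun{uf}(h))^{-1}(\hat{b}) = \widehat{h(b)}$ from Stone duality, to the statement that $b \in W_{p' \circ \fun{N}_{\Ax}h}$ iff $\hat{b} \in \{a' \mid (\fun{uf}(h))^{-1}(a') \in W_{p'}\}$, which is immediate.

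The main obstacle will not be any single step but rather bookkeeping: one must keep a firm distinction between an element $b \in B$ and its Stone-dual clopen $\hat{b}$, and between an ultrafilter of $\fun{N}_{\Ax}B$ and its incarnation as a $2$-valued homomorphism. The only genuinely new ingredient compared with Theorem~\ref{thm:fun-dual} is the topological check that the Stone topology on $\fun{uf}(\fun{N}_{\Ax}B)$ and the subspace topology on $\fun{D}_{\Ax}(\fun{uf}(B)) \subseteq \fun{D}(\fun{uf}(B))$ agree, and the subbasic-open computation above handles this directly because the generators $\Box b$ of $\fun{N}_{\Ax}B$ are precisely what encode the subbasic $\lbox$'s of $\fun{D}$.
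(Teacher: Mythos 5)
Your proposal is correct and follows essentially the same route as the paper: define the components by $p \mapsto W_p$, get bijectivity from Lemma~\ref{lem:uf-stone}, check continuity on the subbasic opens $\lbox$ and $\ldiamond$, conclude homeomorphism from compact-Hausdorff-ness, and verify naturality by chasing generators $\Box b$. The only differences are cosmetic --- you index the natural isomorphism by Boolean algebras ($\fun{uf}\circ\fun{N}_{\Ax} \cong \fun{D}_{\Ax}\circ\fun{uf}$) where the paper indexes by Stone spaces ($\fun{uf}\circ\fun{N}_{\Ax}\circ\fun{clp} \cong \fun{D}_{\Ax}$), at the cost of carrying the Stone isomorphism $\eta_B$ through the bookkeeping, and your appeal to the Alexander subbase theorem for continuity is a misattribution (that theorem concerns compactness; checking preimages of subbasic opens suffices for continuity by elementary set algebra).
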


\begin{proof}
  We view ultrafilters as homomorphisms, so for example when we say
  $p \in \fun{uf}(\fun{N}_{\Ax}(\fun{clp}\topo{X}))$ we view
  $p$ as a homomorphism $\fun{N}_{\Ax}(\fun{clp}\topo{X}) \to 2$.
  For a Stone space $\topo{X}$, define the function
  $\xi_{\topo{X}} : \fun{uf}(\fun{N}_{\Ax}(\fun{clp}\topo{X})) \to \fun{D}_{\Ax}\topo{X}$ 
  by $p \mapsto W_p$ (defined as in \eqref{eq:uf-to-set-descr}).
  This yields a bijection on objects by Lemma \ref{lem:uf-stone}.
  
  To see that $\xi_{\topo{X}}$ is continuous,
  recall that the topology on $\fun{D}\topo{X}$ is generated by
  $\lbox a$ and $\ldiamond a$, where $a$ ranges over the clopens of $\topo{X}$.
  So the topology on $\fun{D}_{\Ax}\topo{X}$ is generated by
  $$
    \lbox a \cap \fun{D}_{\Ax}\topo{X}, \qquad \ldiamond a \cap \fun{D}_{\Ax}\topo{X},
  $$
  where $a$ ranges over the clopen subsets of $\topo{X}$.
  Also, recall that for any Boolean algebra $B$, the topology on
  $\fun{uf}B$ is generated by sets of the form
  $\beta(a) = \{ p \in \fun{uf}B \mid p(a) = 1 \}$, where $a \in B$.
  Continuity of $\xi_{\topo{X}}$ now follows from the fact that
  $$
    \xi_{\topo{X}}^{-1}(\lbox a \cap \fun{D}_{\Ax}(\topo{X}))
      = \{ p \in \fun{uf}(\fun{N}_{\Ax}(\fun{clp}\topo{X})) \mid p(\Box a) = 1 \}
      = \beta(\Box a)
  $$
  is open in $\fun{uf}(\fun{N}_{\Ax}(\fun{clp}\topo{X}))$,
  and similarly $\xi_{\topo{X}}^{-1}(\ldiamond a) = \beta(\neg\Box \neg a)$.
  Since $\xi_{\topo{X}}$ is a continuous bijection between Stone spaces, it
  is a homeomorphism, i.e., an isomorphism in $\cat{Stone}$.
  
  Finally, we prove that the assignment
  $\xi = (\xi_{\topo{X}})_{\topo{X} \in \cat{Stone}}
    : \fun{uf} \circ \fun{N}_{\Ax} \circ \fun{clp} \to \fun{D}_{\Ax}$
  is natural by showing that
  for every continuous function $f : \topo{X} \to \topo{X}'$ the diagram
  $$
    \begin{tikzcd}
      \fun{uf}(\fun{N}_{\Ax}(\fun{clp}\topo{X})) 
            \arrow[d, "\fun{uf}(\fun{N}_{\Ax}(\fun{clp}f))" swap]
            \arrow[r, "\xi_{\topo{X}}"]
        & [1em]
          \fun{D}_{\Ax}\topo{X}
            \arrow[d, "\fun{D}_{\Ax}f"] \\ [1em]
      \fun{uf}(\fun{N}_{\Ax}(\fun{clp}\topo{X}'))
            \arrow[r, "\xi_{\topo{X}'}"]
        & \fun{D}_{\Ax}\topo{X}'
    \end{tikzcd}
  $$
  commutes.
  To this end, let $p \in \fun{uf}(\fun{N}_{\Ax}(\fun{clp}\topo{X}))$
  and $a' \in \fun{clp}\topo{X}'$.
  Then
  \begin{align*}
    a' \in \fun{D}_{\Ax}f \circ \xi_{\topo{X}}(p)
      &\iff \xi_{\topo{X}}(p)(f^{-1}(a')) = 1 \\
      &\iff p(\Box f^{-1}(a')) = 1\\
      &\iff p(\fun{N}_{\Ax}(\fun{clp}f)(\Box a')) = 1 \\
      &\iff \fun{uf}(\fun{N}_{\Ax}(\fun{clp}f))(p)(\Box a') = 1 \\
      &\iff a' \in \xi_{\topo{X}'} \circ \fun{uf}(\fun{N}_{\Ax}(\fun{clp}f))(p).
  \end{align*}
  This proves the theorem.
\end{proof}

\begin{cor}\label{cor:fun-dual-stone}
  For every set $\Ax$ of finitary one-step axioms, we have
  $$
    \cat{Alg}(\fun{N}_{\Ax}) \equiv^{\op} \cat{Coalg}(\fun{D}_{\Ax}).
  $$
\end{cor}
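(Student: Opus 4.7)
The plan is to invoke the standard categorical lifting principle recalled in Section~2: whenever two endofunctors $\fun{L} : \cat{BA} \to \cat{BA}$ and $\fun{T} : \cat{Stone} \to \cat{Stone}$ are Stone-dual, Stone duality lifts to a dual equivalence $\cat{Alg}(\fun{L}) \equiv^{\op} \cat{Coalg}(\fun{T})$. Since Theorem~\ref{thm:fun-dual-stone} provides a natural isomorphism $\xi : \fun{uf} \circ \fun{N}_{\Ax} \circ \fun{clp} \to \fun{D}_{\Ax}$ witnessing that $\fun{N}_{\Ax}$ and $\fun{D}_{\Ax}$ are Stone-dual, the corollary is a direct application of this principle.

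Concretely, I would define a contravariant functor $\Phi : \cat{Alg}(\fun{N}_{\Ax}) \to \cat{Coalg}(\fun{D}_{\Ax})$ as follows. Given an $\fun{N}_{\Ax}$-algebra $(B, \alpha)$, apply $\fun{uf}$ to obtain $\fun{uf}\alpha : \fun{uf}B \to \fun{uf}(\fun{N}_{\Ax}B)$, then compose with $\xi_{\fun{uf}B} \circ \fun{uf}(\fun{N}_{\Ax}\eta_B)^{-1}$, where $\eta_B : B \to \fun{clp}(\fun{uf}B)$ is the unit of Stone duality. This yields a continuous map $\fun{uf}B \to \fun{D}_{\Ax}(\fun{uf}B)$, i.e., a $\fun{D}_{\Ax}$-coalgebra structure on $\fun{uf}B$. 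On morphisms, $\Phi$ acts by applying $\fun{uf}$. Dually, one defines $\Psi : \cat{Coalg}(\fun{D}_{\Ax}) \to \cat{Alg}(\fun{N}_{\Ax})$ using the Stone duality counit.

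The main step is to verify that $\Phi$ and $\Psi$ are well-defined on morphisms, which amounts to a diagram chase combining the square witnessing that $h$ is an $\fun{N}_{\Ax}$-algebra morphism with the naturality square of $\xi$ at $\fun{uf}h$. Mutual quasi-inverseness of $\Phi$ and $\Psi$ is then immediate from the fact that the unit and counit of Stone duality are natural isomorphisms, so that algebra and coalgebra structures transport correctly under the identifications $B \cong \fun{clp}(\fun{uf}B)$ and $\topo{X} \cong \fun{uf}(\fun{clp}\topo{X})$.

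The main obstacle is purely bookkeeping: correctly tracking the whiskerings of $\xi$ with units and counits of Stone duality. There is no conceptual difficulty, and the argument runs exactly parallel to the lifting used to deduce Corollary~\ref{cor:fun-dual} from Theorem~\ref{thm:fun-dual}, as well as to the corresponding construction in \cite{KupKurVen04}.
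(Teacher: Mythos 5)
Your proposal is correct and matches the paper's (implicit) argument exactly: the paper states this corollary without proof, as a direct consequence of Theorem~\ref{thm:fun-dual-stone} together with the general principle that Stone-dual functors induce a dual equivalence between their algebras and coalgebras, which is precisely the lifting you carry out. The explicit construction of $\Phi$ and $\Psi$ via $\xi$ and the unit/counit of Stone duality is the standard one and is sound.
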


  Some well-known dualities are instantiations of Corollary~\ref{cor:fun-dual-stone}.

\begin{exa}
  If we take $\Ax = \varnothing$, then we recover Do\v{s}en's duality for descriptive
  neighborhood frames \cite[Theorem~6]{Dos89}:
  $$
    \cat{NA}
      \cong \cat{Alg}(\fun{N})
      \equiv^{\op} \cat{Coalg}(\fun{D})
      \cong \cat{DNF}.
  $$
\end{exa}

\begin{exa}\label{exm:JT}
  We can derive J\'{o}nsson-Tarski duality for normal modal algebras
  as follows. As we pointed out in Remark \ref{rem:V-vs-D}, 
  $\fun{V} = \fun{D}_{\{ \axN, \axC \}}$.
  Moreover, it is immediate from the definition of $\fun{N}_{\{ \axN, \axC\}}$
  that $\cat{MA} \cong \cat{Alg}(\fun{N}_{\{ \axN, \axC \}})$.
  Therefore,
  $$
    \cat{MA}
      \cong \cat{Alg}(\fun{N}_{\{ \axN, \axC \}})
      \equiv^{\op} \cat{Coalg}(\fun{D}_{\{ \axN, \axC \}})
      \cong \cat{Coalg}(\fun{V})
      \cong \cat{DKF}.
  $$
\end{exa}

  Corollary \ref{cor:fun-dual-stone} also gives rise to the notion of
  a descriptive contingency frame, as shown in the next example.

\begin{exa}\label{exm:dual-descr-cont}
  Since \axCont\ is a finitary axiom, it gives rise to the notions of contingency neighborhood algebras,
  descriptive contingency neighborhood frames, and a duality between them.
  A \emph{contingency neighborhood algebra} is a neighborhood
  algebra $(B, \dbox)$ such that $\dbox b = \dbox \neg b$. We write
  $\cat{CNA}$ for the full subcategory of $\cat{NA}$ of contingency neighborhood algebras.
  A \emph{descriptive contingency neighborhood frame} is a $\fun{D}_{\axCont}$-coalgebra.
  Explicitly we can describe these as tuples $(X, N, A)$ such that
  $(X, N, A)$ is a descriptive neighborhood frame and
  $(X, N)$ is a contingency neighborhood frame.
  Writing $\cat{DCNF}$ for the full subcategory of $\cat{DNF}$
  whose objects are descriptive contingency neighborhood frames,
  we obtain:
  $$
    \cat{CNA}
      \cong \cat{Alg}(\fun{N}_{\axCont})
      \equiv^{\op} \cat{Coalg}(\fun{D}_{\axCont})
      \cong \cat{DCNF}.
  $$
\end{exa}

\subsection{Forgetting incorrectly}\label{subsec:problem}

  Just like \axCont\ in Example \ref{exm:dual-descr-cont},
  the axiom \axM\ is also finitary. Therefore, we get functors
  $\fun{N}_{\axM}$ and $\fun{D}_{\axM}$ (shorthand for $\fun{N}_{\{ \axM \}}$ and
  $\fun{D}_{\{ \axM \}}$) such that
  $\cat{Alg}(\fun{N}_{\axM}) \equiv^{\op} \cat{Coalg}(\fun{D}_{\axM})$.
  Moreover, we have
  $$
    \cat{Alg}(\fun{N}_{\axM}) \cong \cat{BAM}.
  $$
  Thus, one may think that the coalgebras in $\cat{Coalg}(\fun{D}_{\axM})$
  give a suitable notion of descriptive monotone frames.
  However, care is needed:
  if we take a $\fun{D}_{\axM}$-coalgebra and
  forget about the topology, we do \emph{not} obtain a $\fun{B}_{\axM}$-coalgebra.
  Indeed, since only clopen sets are allowed to serve as neighborhoods
  of a state, the collection of neighborhoods at a state need not be
  upward closed under inclusion.
  We give an example of this phenomenon.

\begin{exa}\label{exm:mon-non-mon}
  Let $X = \mb{N} \cup \{ \infty \}$ and generate a topology $\tau$ on $X$ by
  the finite subsets of $\mb{N}$ and cofinite sets containing $\infty$. Thus, $\topo{X} = (X, \tau)$ is the one-point compactification of the discrete space $\mb{N}$.
  Clearly $\topo{X}$ is a Stone space.
  Define $\gamma : \topo{X} \to \fun{D}_{\axM}\topo{X}$ by
  $$
    \gamma(x) = \fun{clp}\topo{X}.
  $$
  Then $(\topo{X}, \gamma)$ is a $\fun{D}_{\axM}$-coalgebra.
  However, $(X, \gamma)$ does not define a $\fun{B}_{\axM}$-coalgebra
  because $\mb{N}_{\odd} = \{ x \in \mb{N} \mid x \text{ is odd} \}$
  is not clopen, hence it is not in $\gamma(x)$, while both $\varnothing \in \gamma(x)$
  and $\varnothing \subseteq \mb{N}_{\odd}$.
\end{exa}

  A similar problem occurs with descriptive convex frames.
  In fact, Example \ref{exm:mon-non-mon} also shows that 
  if we forget about the topological structure of a $\fun{D}_{\axConv}$-coalgebra,
  then we do not necessarily end up with a $\fun{B}_{\axConv}$-coalgebra
  because $\varnothing \subseteq \mb{N}_{\odd} \subseteq X$
  and $\varnothing, X \in \gamma(x)$.
  One way to remedy this is by looking at canonical extensions.
  Such an approach was carried out in \cite{Han03} for the special case of
  monotone frames.
  We explore this idea in the next section.

\section{Canonical extensions}\label{sec:canonical}

  In this section we discuss the $\sigma$- and $\pi$-extensions of (descriptive)
  neighborhood algebras. In general, these extensions are not functorial.  
  However, we show that adding the convexity and co-convexity axioms
  yields functoriality of the $\sigma$- and $\pi$-extensions, respectively.
  We also show that adding these axioms allows us to view descriptive
  frames as categories of coalgebras.
  We use this to give an alternative coalgebraic proof of the duality for
  monotone Boolean algebra expansions of \cite{HanKup04}. 
 
\subsection{$\sigma$- and $\pi$-exensions}

  There is an obvious forgetful functor $\cat{DNF} \to \cat{NF}$.
  Although this forgetful functor is conveniently simple, it has an undesirable
  property: it ``leaves gaps.'' For example, if $(X, N, A)$ is a monotone
  descriptive frame, then $(X, N)$ is \emph{not} generally a monotone frame.
  Indeed, monotonicity now only holds with respect to admissible sets (clopens).
  Similarly, if $(X, N, A)$ is a normal descriptive frame and $R_N$ is the
  relation that arises from $N$ via $R_N(x) = \bigcap N(x)$, then we would prefer
  that the underlying neighborhood frame be normal as well.
  This would be the case if $(X, N)$ satisfied $a \in N(x)$ iff
  $R_N[x] \subseteq a$ for each $a \subseteq X$.
  Again, since $N(x)$ contains only admissible subsets
  of $X$, this need not be the case.%
    \footnote{For example, let $\topo{X} = \mb{N} \cup \{ \infty \}$ be the
    one-point compactification of the discrete space $\mb{N}$ and $X$
    the set underlying $\topo{X}$. For each $x \in X$ let $N(x)$ be the
    collection of co-finite subsets of $X$ containing $\infty$. Then it is easy
    to see that $(X, N, \fun{clp}\topo{X})$ is a descriptive neighborhood frame,
    that $R_N(x) = \{ \infty \}$, but that $\{ \infty \} \notin N(x)$.}
  
  To remedy this, we explore alternatives to the forgetful functor.
  These are dual versions of the well-known $\sigma$- and $\pi$-extensions
  from the theory of canonical extensions
  (see, e.g.,~\cite{JonTar51,GehJon94,GehHar01,GehJon04}):

\begin{defi}
  Let $A$ be a CABA and $B$ a Boolean subalgebra of $A$.
  \begin{enumerate}
    \item $B$ is called \emph{dense} in $A$ if every element in $A$ is the 
          join of meets of elements in $B$.
    \item We say that $B$ is \emph{compact} in $A$ if for all sets
          $S, T \subseteq B$ with $\bigwedge S \leq \bigvee T$ in $A$,
          there exist finite $S' \subseteq S$ and $T' \subseteq T$ such that
          $\bigwedge S' \leq \bigvee T'$.
    \item  A \emph{canonical extension} of a Boolean algebra $B$ is a pair
          $(B^\sigma, e)$ where $B^\sigma$ is a CABA and
          $e : B \to B^\sigma$ is a Boolean embedding such that $e[B]$ is
          dense and compact in $B^\sigma$.
  \end{enumerate}
\end{defi}
  
\begin{rem}
  It is well known \cite{JonTar51}
  that the canonical extension of a Boolean algebra is unique up to isomorphism,
  and can explicitly be described as the powerset of the dual Stone space of $B$.
  Thus, we often speak of $B^{\sigma}$ as \emph{the} canonical extension of $B$
  and view $B$ as sitting inside $B^{\sigma}$.
\end{rem}

  Let $\dbox : B \to B$ be an endofunction (not necessarily a homomorphism).
  Then we can extend $\dbox$ to a function
  $B^{\sigma} \to B^{\sigma}$ in several ways.
  Two well-known extensions are the $\sigma$- and $\pi$-extensions.
  To define these, we recall the notions of closed and open elements of
  $B^{\sigma}$ \cite[Definition 1.20]{JonTar51}.
  
  We say that $x \in B^{\sigma}$ is \emph{closed} if it is a meet
  of elements from $B$, and $x$ is \emph{open} if it is a join of elements from $B$.
  We write $\fun{K}_B$ and $\fun{O}_B$ for the sets of closed and
  open elements of $B^{\sigma}$, respectively.
  The $\sigma$- and $\pi$-extensions of $\dbox : B \to B$ are
  now defined by
  \begin{align*}
    \dbox^{\sigma} x
      &= \bigvee \big\{ \bigwedge \{ \dbox b \mid b \in B \text{ and } c \leq b \leq d \}
        \mid c \in \fun{K}_B, d \in \fun{O}_B, c \leq x \leq d \big\} \\
    \dbox^{\pi} x
      &= \bigwedge \big\{ \bigvee \{ \dbox b \mid b \in B \text{ and } c \leq b \leq d \}
        \mid c \in \fun{K}_B, d \in \fun{O}_B, c \leq x \leq d \big\}
  \end{align*}
  These give maps $\sigma, \pi : \cat{NA} \to \cat{CANA}$, which in turn give rise to
  $\Sigma, \Pi : \cat{DNF} \to \cat{NF}$ by composing as follows:
  \[
    \Sigma = \fun{at} \circ \sigma \circ \fun{clp} \quad\text{and}\quad
    \Pi = \fun{at} \circ \pi \circ \fun{clp}.
  \]
  Thus we have the following diagram:
  \begin{equation}\label{eq:sigma-pi}
    \begin{tikzcd}
      \cat{NA}
            \arrow[r, shift left=2pt, "\fun{uf}"]
            \arrow[d, shift right=2pt, bend right=10, "\sigma" swap]
            \arrow[d, shift left=2pt, bend left=10, "\pi"]
        & [1em]
          \cat{DNF}
            \arrow[l, shift left=2pt, "\fun{clp}"]
            \arrow[d, dashed, shift right=2pt, bend right=10, "\Sigma" swap]
            \arrow[d, dashed, shift left=2pt, bend left=10, "\Pi"] \\ [1em]
      \cat{CANA}
            \arrow[r, shift left=2pt, "\fun{at}"]
        & \cat{NF}
            \arrow[l, shift left=2pt, "\wp"]
    \end{tikzcd}
  \end{equation}

  We point out that $\sigma, \pi : \cat{NA} \to \cat{CANA}$
  are not necessarily functors (see \cite[Example 3.4 and Remark 3.5]{BMM08}),
  and hence neither are $\Sigma$ and $\Pi$.
  We will temporarily ignore this issue and focus solely on the action of
  $\sigma$ and $\pi$ on objects.

  We next define $\Sigma$ and $\Pi$ explicitly.
  For this we need the notions of closed and open elements of $(X, N, A)$.
  These are defined to be the closed and open sets of the topological space
  $\topo{X}=(X,\tau_A)$, and denoted by $\fun{K}_A$ and $\fun{O}_A$, respectively.
  Finally, for $c, d \in \fun{P}X$ define
  $[c, d] = \{ e \in \fun{P}X \mid c \subseteq e \subseteq d \}$.
  
\begin{defi}
  Let $(X, N, A)$ be a descriptive neighborhood frame.
  \begin{enumerate}
\item Define the $\sigma$-extension of $N$ by
  $$
    N^{\sigma}(x)
      = \{ e \in \fun{P}X \mid \exists c \in \fun{K}_A, d \in \fun{O}_A
        \text{ with } c \subseteq e \subseteq d
        \text{ and } [c, d] \cap A \subseteq N(x) \},
  $$
  and set $\Sigma(X, N, A) = (X, N^{\sigma})$.
  
\item  Define the $\pi$-extension of $N$ by
  $$
    N^{\pi}(x)
      = \{ e \in \fun{P}X \mid \forall c \in \fun{K}_A, d \in \fun{O}_A
        \text{ with } c \subseteq e \subseteq d
        \text{ we have } [c, d] \cap A \cap N(x) \neq \varnothing \},
  $$
  and set $\Pi(X, N, A) = (X, N^{\pi})$.
  \end{enumerate}
\end{defi}

\begin{prop}
  The following diagrams commute on objects, up to natural isomorphism.
  $$
    \begin{tikzcd}[column sep=3em]
      \cat{NA}
            \arrow[d, "\sigma"]
            \arrow[r, shift left=2pt, "\fun{uf}"]
        & \cat{DNF}
            \arrow[d, "\Sigma"]
            \arrow[l, shift left=2pt, "\fun{clp}"]
        & \cat{NA}
            \arrow[d, "\pi"]
            \arrow[r, shift left=2pt, "\fun{uf}"]
        & \cat{DNF}
            \arrow[d, "\Pi"]
            \arrow[l, shift left=2pt, "\fun{clp}"] \\ [.5em]
      \cat{CANA}
            \arrow[r, shift left=2pt, "\fun{at}"]
        & \cat{NF}
            \arrow[l, shift left=2pt, "\wp"]
        & \cat{CANA}
            \arrow[r, shift left=2pt, "\fun{at}"]
        & \cat{NF}
            \arrow[l, shift left=2pt, "\wp"]
    \end{tikzcd}
  $$
\end{prop}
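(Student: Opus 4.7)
The plan is to verify the stated commutativity on objects by unpacking both routes around each square for a fixed neighborhood algebra $(B, \dbox)$. The $\pi$-square is entirely dual to the $\sigma$-square, so I would focus on the latter and at the end indicate how the $\pi$-case adapts.

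Fix $(B, \dbox) \in \cat{NA}$, let $\topo{X} = \fun{uf}(B)$, set $A = \fun{clp}(\topo{X})$, and write $\beta : B \to A$ for the Stone isomorphism $b \mapsto \{x \in X : b \in x\}$. Then $\fun{uf}(B, \dbox) = (X, N, A)$ with $N(x) = \{\beta(b) : \dbox b \in x\}$, and the canonical extension $B^{\sigma}$ can be identified with $\wp(X)$ so that $\beta$ becomes the inclusion $B \hookrightarrow \wp(X)$. Under this identification the closed elements $\fun{K}_B \subseteq B^{\sigma}$ correspond precisely to the topologically closed subsets $\fun{K}_A$ of $\topo{X}$, and dually $\fun{O}_B \leftrightarrow \fun{O}_A$. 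By Tarski duality, $\fun{at}(B^{\sigma}, \dbox^{\sigma}) = (X, \hat N)$ with $\hat N(x) = \{e \in \wp(X) : x \in \dbox^{\sigma}(e)\}$, so it suffices to show $\hat N(x) = N^{\sigma}(x)$ for each $x \in X$.

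For the key computation, I would unpack the defining formula for $\dbox^{\sigma}$ at $e \in \wp(X)$. Since joins and meets in $B^{\sigma} = \wp(X)$ are unions and intersections, $x \in \dbox^{\sigma}(e)$ iff there exist $c \in \fun{K}_A$ and $d \in \fun{O}_A$ with $c \subseteq e \subseteq d$ such that $\dbox b \in x$ for every $b \in B$ with $c \subseteq \beta(b) \subseteq d$. Translating via $\dbox b \in x \iff \beta(b) \in N(x)$ and noting that $\{\beta(b) : b \in B,\ c \subseteq \beta(b) \subseteq d\} = [c,d] \cap A$, the inner universal clause becomes $[c, d] \cap A \subseteq N(x)$. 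This is exactly the defining condition of $e \in N^{\sigma}(x)$, so $\hat N(x) = N^{\sigma}(x)$, and the identity map on $X$ witnesses the required isomorphism of neighborhood frames.

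The $\pi$-square is handled by the same translation, with the only difference that the inner meet and join in the formula for $\dbox^{\pi}$ swap roles; the universal ``every $b$ gives $\beta(b) \in N(x)$'' becomes the existential ``some $b$ gives $\beta(b) \in N(x)$,'' yielding the non-empty-intersection condition that defines $N^{\pi}(x)$. The main obstacle is simply keeping the three layers of identification straight --- $B \cong A$ via $\beta$, $B^{\sigma} \cong \wp(X)$, and $\fun{K}_B \leftrightarrow \fun{K}_A$, $\fun{O}_B \leftrightarrow \fun{O}_A$ --- but each is a standard fact about canonical extensions of Boolean algebras over a Stone space. The phrase ``natural isomorphism'' in the statement is a mild abuse, since $\sigma$, $\pi$, $\Sigma$, $\Pi$ are only defined on objects, but the witness is canonically the identity on the underlying set, which is as natural as one could ask.
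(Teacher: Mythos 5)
Your proposal is correct and follows essentially the same route as the paper's proof: both unwind the defining join-of-meets formula for $\dbox^{\sigma}$ inside $B^{\sigma}\cong\wp X$, translate the inner clause ``$\dbox b\in x$ for all $b$ with $c\leq b\leq d$'' into ``$[c,d]\cap A\subseteq N(x)$,'' and observe this is precisely the definition of $N^{\sigma}(x)$, with the $\pi$-case obtained by swapping the meet and join. The only cosmetic difference is the starting corner of the square --- the paper begins with a descriptive frame $(X,N,A)$ so that $A$ sits literally inside $\wp X$, whereas you begin with $(B,\dbox)\in\cat{NA}$ and thread the Stone isomorphism $\beta$ through the identifications --- which is immaterial since $\fun{uf}$ and $\fun{clp}$ are quasi-inverse.
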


\begin{proof}
  Let $(X, N, A)$ be a descriptive neighborhood frame.
  Recall that $\dbox_N a = \{ x \in X \mid a \in N(x) \}$ for $a \in A$.
  Then $(A, \dbox_N)$ is the dual modal algebra of $(X, N, A)$.
  It is well known that the canonical extension of $A$ is $\wp X$.
  Write $(X, N')$ for the neighborhood frame dual to $(A^{\sigma}, \dbox_N^{\sigma})$.
  We aim to show that for all $x \in X$ and $e \subseteq X$ we have
  $e \in N'(x)$ iff $e \in N^{\sigma}(x)$.
  By definition,
  $$
    \dbox_N^{\sigma}e
      = \bigcup \Big\{ \bigcap  \{ \dbox_N a \mid a \in A, c \subseteq a \subseteq d \}
        \mid c \in \fun{K}_A, d \in \fun{O}_A, c \subseteq e \subseteq d \Big\}.
  $$
  Thus, $e \in N'(x)$ iff $x \in \dbox_N^{\sigma}e$, which happens iff there are closed $c$
  and open $d$ such that $c \subseteq e \subseteq d$ and
  $a \in [c, d] \cap A$ implies $x \in \dbox_N a$.
  The latter means that we have $a \in N(x)$ for all such $a$, and hence
  $c$ and $d$ witness the fact that $e \in N^{\sigma}(x)$.
  
  A similar reasoning proves the statement for $\pi$-extensions.
\end{proof}

  The extensions $\Sigma$ and $\Pi$ are closely related.
  To see this, we need the notion of \emph{dual} (descriptive) neighborhood frames.

\begin{defi}\label{def:duals}
  For a neighborhood frame $(X, N)$, define
  $$
    N^c : X \to \fun{PP}X : x \mapsto \{ a \subseteq X \mid a \notin N(x) \}.
  $$
  We call $(X, N)^c := (X, N^c)$ the \emph{complement} of $(X, N)$.
  
  If $(X, N, A)$ is a descriptive neighborhood frame,
  then we define
  $$
    N^c_A : X \to \fun{PP}X : x \mapsto \{ a \in A \mid a \notin N(x) \}.
  $$
  It is easy to see that $(X, N_A^c, A)$ is again a descriptive frame.
  We call $(X, N_A^c, A)$ the \emph{complement} of $(X, N, A)$
  and denote it by $(X, N, A)^{c}$.
\end{defi}

  It is easy to see that $((X, N)^c)^c = (X, N)$. Moreover,
  $(\cdot)^c : \cat{NF} \to \cat{NF}$ defines an involution,
  where $f^c = f$ for a neighborhood morphism $f$.
  Similar statements hold for descriptive frames and the descriptive complement.

\begin{prop}\label{prop:sigma-pi-duals}
  Let $(X, N, A)$ be a descriptive neighborhood frame.
  Then
  $$
    \Pi(X, N, A) = (\Sigma(X, N^c_A, A))^{c}.
  $$
  Consequently, $\Sigma(X, N, A) = (\Pi(X, N^c_A, A))^c$.
\end{prop}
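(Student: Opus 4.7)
The plan is to unfold definitions and exploit a De Morgan duality between the universal quantifier defining the $\pi$-extension and the existential quantifier defining the $\sigma$-extension. The central calculation is to establish, pointwise in $x \in X$ and $e \subseteq X$, the equivalence
$$
  e \in N^{\pi}(x) \iff e \notin (N^c_A)^{\sigma}(x),
$$
from which $\Pi(X,N,A) = (\Sigma(X,N^c_A,A))^{c}$ follows by taking complements on both sides.

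To carry this out I would fix $x$ and $e$, and rewrite what $e \in (N^c_A)^{\sigma}(x)$ asserts. Since $N^c_A(x) = \{ a \in A \mid a \notin N(x)\}$, the inclusion $[c,d] \cap A \subseteq N^c_A(x)$ appearing in the definition of $\Sigma(X, N^c_A, A)$ is equivalent to $[c,d] \cap A \cap N(x) = \varnothing$. Hence $e \in (N^c_A)^{\sigma}(x)$ iff there exist $c \in \fun{K}_A$ and $d \in \fun{O}_A$ with $c \subseteq e \subseteq d$ such that $[c,d] \cap A \cap N(x) = \varnothing$. Negating this existential and the emptiness condition produces precisely the universal statement that defines $e \in N^{\pi}(x)$, namely that for all $c \in \fun{K}_A$, $d \in \fun{O}_A$ with $c \subseteq e \subseteq d$ one has $[c,d] \cap A \cap N(x) \neq \varnothing$. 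This settles the displayed equivalence and hence the first equation.

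For the ``consequently'' clause, I would apply the identity just proved to the descriptive frame $(X, N^c_A, A)$, obtaining
$$
  \Pi(X, N^c_A, A) = (\Sigma(X, (N^c_A)^c_A, A))^{c}.
$$
Here tightness of $(X,N,A)$ enters in an essential way: because $N(x) \subseteq A$, one has $(N^c_A)^c_A(x) = \{ a \in A \mid a \in N(x)\} = N(x)$, so the inner descriptive frame reduces to $(X, N, A)$. Applying the involution $(\cdot)^{c}$ to both sides and invoking $((Y,M)^{c})^{c} = (Y,M)$ then yields $\Sigma(X, N, A) = (\Pi(X, N^c_A, A))^{c}$. There is no real obstacle beyond careful quantifier bookkeeping and the use of tightness to recover $N$ from the double complement; everything else is a routine unfolding of definitions.
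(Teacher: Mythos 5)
Your proposal is correct and follows essentially the same route as the paper: both proofs unfold the definition of $(N^c_A)^{\sigma}$, observe that $[c,d]\cap A\subseteq N^c_A(x)$ is equivalent to $[c,d]\cap A\cap N(x)=\varnothing$, and negate the existential to recover the universal condition defining $N^{\pi}(x)$. Your explicit derivation of the ``consequently'' clause via tightness (so that $(N^c_A)^c_A = N$) is a welcome elaboration of a step the paper leaves implicit, but it is not a different argument.
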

\begin{proof}
  Note that $(\Sigma(X, N^c_A, A))^{c} = (X, ((N^c_A)^{\sigma})^c)$.
  We need to prove that for all $x \in X$ and $e \subseteq X$ we have
  \begin{equation}\label{eq:double-complement}
    e \in N^{\pi}(x) \iff e \in ((N^c_A)^{\sigma})^c(x).
  \end{equation}
  We do so by unravelling the definitions.
  
  For $x \in X$ and $e \subseteq X$ we have
  $e \in ((N^c_A)^{\sigma})^c(x)$ iff $e \notin (N^c_A)^{\sigma}(x)$.
  In other words, $e \in ((N^c_A)^{\sigma})^c(x)$ iff
  we can find no closed $c \in \fun{K}_A$ and open $d \in \fun{O}_A$
  such that $c \subseteq e \subseteq d$ and $[c, d] \cap A \subseteq N^c(x)$.
  Therefore, $e \in ((N^c_A)^{\sigma})^c(x)$ iff
  for all $c \in \fun{K}_A$ and $d \in \fun{O}_A$ such that
  $c \subseteq e \subseteq d$ we have $[c, d] \cap A \cap N(x) \neq \emptyset$.
  But this is exactly the definition of $e \in N^{\pi}(x)$, so
  \eqref{eq:double-complement} holds.
\end{proof}

\subsection{$\sigma$- and $\pi$-descriptive frames}

  As we have seen in Section~\ref{subsec:problem},
  we do not always have a forgetful functor
  $\fun{U} : \cat{Coalg}(\fun{D}_{\Ax}) \to \cat{Coalg}(\fun{B}_{\Ax})$.
  In particular, recall that convexity and monotonicity are not preserved.
  This is, in part, solved by replacing $\fun{U}$ with either
  $\Sigma$ or $\Pi$.
  
  If $(X, N, A)$ is monotone, then the definition of $N^{\sigma}$ simplifies to
  $$
    N^{\sigma}(x)
      = \{ e \in \fun{P}X \mid \exists c \in \fun{K}_A
        \text{ with } c \subseteq e
        \text{ and } [c, X] \cap A \subseteq N(x) \}.
  $$
  It is then easy to see that the neighborhood frame
  $\Sigma(X, N, A) = (X, N^{\sigma})$ is monotone as well.
  Next suppose $(X, N, A)$ is convex and there are $e, e' \in N^{\sigma}(x)$
  and $e'' \subseteq X$ such that $e \subseteq e'' \subseteq e'$.
  Then by definition of $N^{\sigma}$ we have closed sets $c, c'$ and open sets
  $d, d'$ such that $e \in [c, d]$, $e' \in [c', d']$, $[c, d] \cap A \subseteq N(x)$
  and $[c', d'] \cap A \subseteq N(x)$.
  Since $c \subseteq e \subseteq e'' \subseteq e' \subseteq d'$,
  convexity of $(X, N, A)$ implies $[c, d'] \cap A \subseteq N(x)$.
  This, in turn, witnesses the fact that $e'' \in N^{\sigma}(x)$.
  Therefore, $(X, N^{\sigma})$ is a convex neighborhood frame.
  
  Thus, on objects we have the following well-defined assignments:
  $$
    \Sigma : \cat{Coalg}(\fun{D}_{\axM}) \to \cat{Coalg}(\fun{B}_{\axM})
    \qquad\text{and}\qquad
    \Sigma : \cat{Coalg}(\fun{D}_{\axConv}) \to \cat{Coalg}(\fun{B}_{\axConv}).
  $$
  From the connection between $\Sigma$ and $\Pi$ discussed in
  Proposition~\ref{prop:sigma-pi-duals} we get that
  $$
    \Pi : \cat{Coalg}(\fun{D}_{\axM}) \to \cat{Coalg}(\fun{B}_{\axM})
    \qquad\text{and}\qquad
    \Pi : \cat{Coalg}(\fun{D}_{\axCoConv}) \to \cat{Coalg}(\fun{B}_{\axCoConv})
  $$
  are well-defined assignments as well. Here \axCoConv\ denotes the co-convexity axiom
  \begin{enumerate}
    \axitem{ax:coconv}{CoConv}
          $\Box v \to \Box(v \wedge v') \vee \Box(v \vee v'')$
  \end{enumerate}
  On neighborhood frames $(X, N)$ this corresponds to $N^c(x)$ being convex for all $x \in X$.
  
  We can turn the assignments $\Sigma$ and $\Pi$ into ``proper'' forgetful functors 
  by incorporating the additional neighborhoods that arise from
  $\Sigma$ or $\Pi$ into the notion of a descriptive frame.
  This yields two alternative definitions of descriptive frames:
  $\sigma$-descriptive and $\pi$-descriptive frames.
  In what follows we will focus on $\sigma$-descriptive frames,
  leaving the dual treatment of $\pi$-descriptive frames to the reader.
  
\begin{defi}\label{def:sigma-descriptive}
  A \emph{$\sigma$-descriptive neighborhood frame} is a general neighborhood frame
  $(X, N, A)$ that is differentiated and compact (see Definition~\ref{def:gen})
  and satisfies the following modification of the tightness condition:
  for all $x \in X$ and $e \in \fun{P}X$
  $$
    e \in N(x) \iff \exists c \in \fun{K}_A, d \in \fun{O}_A
               \text{ with } c \subseteq e \subseteq d
               \text{ and } [c, d] \cap A \subseteq N(x).
  $$
  We write $\cat{DNF}^{\sigma}$ for the category of $\sigma$-descriptive
  neighborhood frames and general neighborhood morphisms.
\end{defi}

  If $(X, N, A)$ is a descriptive neighborhood frame, then
  $(X, N^{\sigma}, A)$ is a $\sigma$-descriptive neighborhood frame.
  Conversely, given a $\sigma$-descriptive neighborhood frame $(X, N, A)$,
  setting $N_A(x) = N(x) \cap A$ yields a descriptive neighborhood
  frame $(X, N_A, A)$. It is straightforward to see that these two assignments
  give rise to a bijective correspondence between objects in $\cat{DNF}$ and 
  $\cat{DNF}^{\sigma}$.
  Thus, the following diagram commutes on objects, where $\fun{U}$ is the
  forgetful functor that does not add any neighborhoods:
  $$
    \begin{tikzcd}[column sep=.5em]
      \cat{DNF}
            \arrow[rr, <->, "\text{1-1}"]
            \arrow[dr, "\Sigma" swap]
        &
        & \cat{DNF}^{\sigma}
            \arrow[dl, "\fun{U}"] \\
        & \cat{NF}
        &
    \end{tikzcd}
  $$
  
  While in descriptive neighborhood frames (Definition \ref{def:gen})
  the tightness condition stipulates
  that all neighborhoods are admissible sets, $\sigma$-tightness allows non-admissible sets
  to act as neighborhoods too.

\begin{rem}\label{rem:pi-descr}
  Let $\cat{DNF}^{\pi}$ be the category of $\pi$-descriptive frames,
  defined analogously to Definition~\ref{def:sigma-descriptive}.
  That is, a $\pi$-descriptive frame is a general neighborhood frame that is
  differentiated and compact and satisfies for all $x \in X$ and $e \subseteq X$:
  $$
    e \in N(x) \iff \forall c \in \fun{K}_A, d \in \fun{O}_A
    \text{ with } c \subseteq e \subseteq d
    \text{ we have } [c, d] \cap A \cap N(x) \neq \varnothing.
  $$
  We have that $(X, N, A)$ is $\sigma$-descriptive iff $(X, N^c_A, A)$ is
  $\pi$-descriptive.
  Moreover, 
  $f : (X, N, A) \to (X', N', A')$ is a
  general morphism between $\sigma$-descriptive frames iff
  $f$ is a general morphism between $(X, N_A^c, A)$ and $(X', (N')_A^c, A')$.
  Thus, we obtain an isomorphism
  between $\cat{DNF}^{\sigma}$ and $\cat{DNF}^{\pi}$.
\end{rem}

  Write $\cat{DNF}^{\sigma}(\Ax)$ for the full subcategory of $\cat{DNF}$
  such that the axioms in $\Ax$ are satisfied when interpreting the
  variables used in $\Ax$ as clopens.
  Write also $\fun{U}_{\Ax} : \cat{DNF}^{\sigma}(\Ax) \to \cat{Coalg}(\fun{B}_{\Ax})$
  for the functor that sends a $\sigma$-descriptive neighborhood frame to its
  underlying frame (viewed as a coalgebra).
  If $\fun{U}_{\Ax}$ is well defined, then it is automatically
  a functor, because the additional (non-admissible) neighborhoods of
  $\sigma$-descriptive frames ensure that every morphism in $\cat{DNF}^{\sigma}(\Ax)$
  is in particular a neighborhood morphism between the underlying neighborhood frames.

  While the introduction of $\sigma$-descriptive frames ensures that
  $\fun{U}_{\Ax} : \cat{DNF}^{\sigma}(\Ax) \to \cat{Coalg}(\fun{B}_{\Ax})$
  becomes a functor,
  it only moves the problem of functoriality elsewhere.
  Indeed, we are not guaranteed that the category $\cat{Alg}(\fun{N}_{\Ax})$
  is dual to $\cat{DNF}^{\sigma}(\Ax)$.
  
  In Section \ref{subsec:ce-fun} we will prove that whenever $\Ax$ implies $\axConv$,
  then 
  \begin{enumerate}
    \item $\cat{DNF}^{\sigma}(\Ax)$ is a category of coalgebras for an endofunctor
          $\fun{D}^{\sigma}_{\Ax}$ on $\cat{Stone}$; and
    \item $\fun{N}_{\Ax}$ is dual to $\fun{D}^{\sigma}_{\Ax}$.
  \end{enumerate}
  Combined, these give the dual equivalence
  $
    \cat{Alg}(\fun{N}_{\Ax})
      \equiv^{\op} \cat{Coalg}(\fun{D}^{\sigma}_{\Ax})
      \cong \cat{DNF}^{\sigma}(\Ax)
  $.

\subsection{When are $\Sigma$ and $\Pi$ functors?}\label{subsec:ce-fun}

  In this section we give a sufficient condition for $\fun{D}^{\sigma}_{\Ax}$ to
  be a functor.

\begin{defi}
  Let $\topo{X}$ be a Stone space with the underlying set $X$. We write
  $\fun{K}\topo{X}$ and $\fun{O}\topo{X}$ for the closed and open sets
  of $\topo{X}$ and define $\fun{D}^{\sigma}\topo{X}$ to be the space consisting of
  $W \subseteq \fun{P}X$ that satisfy
  $$
    e \in W \iff \exists c \in \fun{K}\topo{X}, d \in \fun{O}\topo{X} \text{ with }
             c \subseteq e \subseteq d \text{ and } [c, d] \cap \fun{clp}\topo{X} \subseteq W.
  $$
  The topology on $\fun{D}^{\sigma}\topo{X}$ is generated by the clopen subbase
  $$
    \lbox a = \{ W \mid a \in W \}, \qquad
    \ldiamond a = \{ W \mid X \setminus a \notin W \},
  $$
  where $a$ ranges over $\fun{clp}\topo{X}$.
\end{defi}

  Using the fact that elements of $\fun{D}^{\sigma}\topo{X}$ are determined
  uniquely by the clopens of $\topo{X}$ they contain, combined with assignments
  similar to the ones in the paragraph following Definition~\ref{def:sigma-descriptive},
  it is easy to see that
  $\fun{D}^{\sigma}\topo{X}$ is homeomorphic to $\fun{D}\topo{X}$.
  Therefore, $\fun{D}^{\sigma}\topo{X}$ is a Stone space.
  Moreover, $W \in \fun{D}^{\sigma}\topo{X}$ is an $\Ax$-subset
  (see~Definition \ref{def:ax-subset-stone}) iff $W \cap \fun{clp}\topo{X} \in \fun{D}\topo{X}$ is an $\Ax$-subset.
  Thus, we define:
  
\begin{defi}
  For a Stone space $\topo{X}$, let $\fun{D}^{\sigma}_{\Ax}\topo{X}$ be the
  subspace of $\fun{D}^{\sigma}\topo{X}$ whose elements are $\Ax$-subsets.
  For a continuous function $f : \topo{X} \to \topo{X}'$,
  we define
  $$
    \fun{D}^{\sigma}_{\Ax}f
      : \fun{D}^{\sigma}_{\Ax}\topo{X} \to \fun{D}^{\sigma}_{\Ax}\topo{X}'
      : W \mapsto \{ w' \subseteq \topo{X}' \mid f^{-1}(w') \in W \}.
  $$
\end{defi}

  Since $\fun{D}^{\sigma}_{\Ax}\topo{X}$ is homeomorphic to
  $\fun{D}_{\Ax}\topo{X}$, we have that $\fun{D}^{\sigma}_{\Ax}$
  sends a Stone space to a Stone space.
  Furthermore, this implies that $\fun{D}^{\sigma}_{\Ax}$ is
  naturally isomorphic to $\fun{D}_{\Ax}$
  \emph{whenever the former is well defined}.
  We prove that it is well defined when $\Ax$ implies $\axConv$.

\begin{thm}\label{thm:D-Dsigma}
  Let $\Ax$ be a set of finitary one-step axioms such that
  $\Ax$ implies \axConv.
  Then $\fun{D}^{\sigma}_{\Ax}$ defines an endofunctor on $\cat{Stone}$.
\end{thm}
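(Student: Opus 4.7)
The plan is to check well-definedness on objects (already noted via the homeomorphism $\fun{D}^{\sigma}_{\Ax}\topo{X} \cong \fun{D}_{\Ax}\topo{X}$), well-definedness on morphisms (the crux), and then continuity and functoriality. Fix a continuous $f : \topo{X} \to \topo{X}'$ and $W \in \fun{D}^{\sigma}_{\Ax}\topo{X}$, and set $W' := \fun{D}^{\sigma}_{\Ax}f(W) = \{\, w' \subseteq X' \mid f^{-1}(w') \in W \,\}$. The task is to show $W' \in \fun{D}^{\sigma}_{\Ax}\topo{X}'$, i.e.\ that $W' \cap \fun{clp}\topo{X}'$ is an $\Ax$-subset of $\fun{clp}\topo{X}'$ and that $W'$ satisfies $\sigma$-tightness. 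The $\Ax$-subset part reduces to the analogous statement used for $\fun{D}_{\Ax}$ in Theorem~\ref{thm:fun-dual-stone}, since $\theta^t$ only inspects clopen values. Continuity will then follow from $(\fun{D}^{\sigma}_{\Ax}f)^{-1}(\lbox a') = \lbox f^{-1}(a')$ and the analogous identity for $\ldiamond$, and functoriality is a routine verification on generators.

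For the forward direction of $\sigma$-tightness of $W'$, convexity is not needed: given $e' \in W'$, apply $\sigma$-tightness of $W$ to $f^{-1}(e')$ to get closed $c \subseteq \topo{X}$ and open $d \subseteq \topo{X}$ with $c \subseteq f^{-1}(e') \subseteq d$ and $[c,d] \cap \fun{clp}\topo{X} \subseteq W$. Put $c' := f(c)$ and $d' := X' \setminus f(X \setminus d)$; both are respectively closed and open in $\topo{X}'$ by compactness of $\topo{X}$ and continuity of $f$, and a short check gives $c' \subseteq e' \subseteq d'$ together with $c \subseteq f^{-1}(a') \subseteq d$ for every clopen $a' \in [c', d']$. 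Hence each such $f^{-1}(a')$ lies in $W$, so $a' \in W'$.

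The reverse direction is the main obstacle, and is where the hypothesis that $\Ax$ implies \axConv\ is essential. Suppose $c' \subseteq e' \subseteq d'$ are closed and open in $\topo{X}'$ with $[c', d'] \cap \fun{clp}\topo{X}' \subseteq W'$. Setting $c := f^{-1}(c')$, $d := f^{-1}(d')$, by $\sigma$-tightness of $W$ it suffices to show that every clopen $b \in [c, d]$ lies in $W$. First, instantiating the variables of \axConv\ by clopens shows that $W \cap \fun{clp}\topo{X}$ is convex, i.e.\ closed under taking clopens between any two of its members. For a clopen $b \in [c, d]$, note that $f(b)$ and $f(X \setminus b)$ are closed in $\topo{X}'$. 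From $c = f^{-1}(c') \subseteq b$ one gets $c' \cap f(X \setminus b) = \varnothing$, so the open set $(X' \setminus f(X \setminus b)) \cap d'$ contains $c'$; compactness of $c'$ plus zero-dimensionality of $\topo{X}'$ then produces a clopen $a_1' \in [c', d']$ with $f^{-1}(a_1') \subseteq b$. Dually, $c' \cup f(b)$ is a closed subset of the open set $d'$ (since $b \subseteq d$ forces $f(b) \subseteq d'$), yielding a clopen $a_2' \in [c' \cup f(b), d']$ with $b \subseteq f^{-1}(a_2')$. Both preimages lie in $W$ by hypothesis, and convexity now gives $b \in W$. Without \axConv\ there is no mechanism to force an arbitrary clopen $b$ sandwiched between $f^{-1}(c')$ and $f^{-1}(d')$ into $W$, since such a $b$ need not be the preimage of any clopen from $\topo{X}'$; this is precisely the delicate point the convexity argument resolves.
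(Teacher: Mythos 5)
Your proof is correct and follows essentially the same strategy as the paper's: the forward direction of $\sigma$-tightness via $c' = f[c]$ and $d' = X' \setminus f[X \setminus d]$ is identical, and the reverse direction likewise sandwiches the clopen $b \in [f^{-1}(c'), f^{-1}(d')]$ between preimages of clopens from $[c',d']$ and invokes convexity. The only (inessential) difference is how those sandwiching clopens are produced: the paper writes $c'$ as a directed intersection (and $d'$ as a directed union) of clopens and uses compactness of the Stone space, whereas you separate the disjoint closed sets $c'$ and $f[X\setminus b]$ (resp.\ cover $c' \cup f[b]$ inside $d'$) by a clopen; both are standard equivalent uses of compactness and zero-dimensionality.
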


\begin{proof}
  We have already seen that $\fun{D}^{\sigma}_{\Ax}$ is well defined on
  objects, and functoriality is straightforward because it is a subfunctor
  of $\fun{B}$.
  Therefore, we only have to prove that for every continuous function
  $f : \topo{X} \to \topo{X}'$, the assignment
  $\fun{D}^{\sigma}_{\Ax}f : \fun{D}^{\sigma}_{\Ax}\topo{X} \to \fun{D}^{\sigma}_{\Ax}\topo{X}'$
  is a well-defined continuous function.
  
  Continuity follows from the fact that
  $(\fun{D}^{\sigma}_{\Ax}f)^{-1}(\lbox a') = \lbox f^{-1}(a')$ for all
  $a' \in \fun{clp}\topo{X}'$.
  So all that is left is to prove that
  $\fun{D}^{\sigma}_{\Ax}f(W) \in \fun{D}^{\sigma}_{\Ax}\topo{X}'$ for all
  $W \in \fun{D}^{\sigma}_{\Ax}\topo{X}$.
  That is, we need to show that for all $e' \subseteq X'$, 
  \begin{equation}\label{eq:Dsigax-goal}
  \begin{split}
    e' \in \fun{D}^{\sigma}_{\Ax}f(W)
      \iff &\exists c' \in \fun{K}\topo{X}', d' \in \fun{O}\topo{X}' \\
      &\text{with } c' \subseteq e' \subseteq d'
      \text{ and } [c', d'] \cap \fun{clp}\topo{X}' \subseteq \fun{D}^{\sigma}_{\Ax}f(W).
  \end{split}
  \end{equation}
  
  First assume $e' \in \fun{D}^{\sigma}_{\Ax}f(W)$.
  Then $f^{-1}(e') \in W$ and so there exist $c \in \fun{K}\topo{X}$
  and $d \in \fun{O}\topo{X}$ such that $c \subseteq f^{-1}(e) \subseteq d$
  and $[c, d] \cap \fun{clp}\topo{X} \subseteq W$.
  Define $c' = f[c]$ and $d' = \topo{X}' \setminus f[\topo{X} \setminus d]$.
  Since $f$ is a continuous function between Stone spaces, it sends closed sets
  to closed sets, so $c' \in \fun{K}\topo{X}'$ and $d' \in \fun{O}\topo{X}'$.
  Furthermore, we claim that $c' \subseteq e' \subseteq d'$.
  The first inclusion is obvious. For the second, if
  $x' \notin d'$, then there is $x \in \topo{X} \setminus d$ such that
  $f(x) = x'$. But then $x \notin d$, so $x \notin f^{-1}(e')$, and hence $x' = f(x) \notin e'$.
  
  We claim that $c'$ and $d'$ witness that the right-hand side of
  \eqref{eq:Dsigax-goal} holds.
  Let $a' \in \fun{clp}\topo{X}'$ such that $c' \subseteq a' \subseteq d'$.
  Then $c \subseteq f^{-1}(a')$ by definition of $c'$.
  Furthermore, $f^{-1}(d') \subseteq d$.
  To see this, $x \notin d$ implies $x \in \topo{X} \setminus d$,
    so $f(x) \in f[\topo{X} \setminus d]$. Therefore,
    $f(x)\notin \topo{X}' \setminus f[\topo{X}\setminus d] = d'$,
    and hence $x \notin f^{-1}(d')$.
  Thus, $f^{-1}(a') \subseteq d$.
  By assumption this implies that $f^{-1}(a') \in W$. Consequently, 
  $a' \in \fun{D}^{\sigma}_{\Ax}f(W)$.
  
  For the converse, suppose $e' \subseteq X'$ is such that the right-hand side holds.
  Denote the relevant closed and open subsets of $\topo{X}'$ witnessing this
  by $c'$ and $d'$.
  We aim to show that $f^{-1}(e') \in W$.
  To prove this, it suffices to show that there exist a closed and open subsets
  $c$ and $d$ of $\topo{X}$ such that $c \subseteq f^{-1}(e') \subseteq d$
  and $[c, d] \cap \fun{clp}\topo{X} \subseteq W$.
  
  Take $c = f^{-1}(c')$ and $d = f^{-1}(d')$. By continuity of $f$ these are
  closed and open, respectively. Now let $a \in \fun{clp}\topo{X}$ be such that
  $c \subseteq a \subseteq d$.
  To prove that $a \in W$, we construct $b_1', b_2' \in \fun{clp}\topo{X}'$
  such that $b_1', b_2' \in \fun{D}^{\sigma}_{\Ax}f(W)$ and
  $f^{-1}(b_1') \subseteq a \subseteq f^{-1}(b_2')$.
  Convexity of $W$ then implies that $a \in W$.
  
  To construct $b_1'$, since $c'$ is closed, we have that
  $c' = \bigcap \{ k' \in \fun{clp}\topo{X}' \mid c' \subseteq k' \}$.
  Therefore,
  $$
    c = f^{-1}(c') = \bigcap \big\{ f^{-1}(k') \mid k' \in \fun{clp}\topo{X}' \text{ and } c' \subseteq k' \big\} \subseteq a.
  $$
  Since $a$ is clopen and the intersection is directed, we can find
  $k_1' \in \fun{clp}\topo{X}'$ such that $c \subseteq f^{-1}(k_1') \subseteq a$.
  Similarly, since $c' = \bigcap \{ k' \in \fun{clp}\topo{X}' \mid c' \subseteq k' \} \subseteq d'$ we can find
  $k_2' \in \fun{clp}\topo{X}'$ such that
  $c' \subseteq k_2' \subseteq d'$.
  Setting $b_1' = k_1' \cap k_2'$ gives an element in
  $\fun{clp}\topo{X}$ such that $f^{-1}(b_1') \subseteq a$.
  Moreover, by construction $c' \subseteq b_1' \subseteq d'$,
  and hence $b_1' \in \fun{D}^{\sigma}_{\Ax}f(W)$.
  
  Finally, to construct $b_2'$, since $d'$ is open, we have
  $d' = \bigcup \{ k' \in \fun{clp}\topo{X}' \mid k' \subseteq d' \}$.
  A similar argument to the above yields $b_2'$ satisfying all the required properties.
  This proves that $a \in W$, hence $f^{-1}(e') \in W$, completing the proof.
\end{proof}

  The definition of descriptive monotone frames from \cite{HanKup04}
  coincides with the definition of $\sigma$-descriptive neighborhood frames
  satisfying monotonicity. Let us write $\cat{HDMF}$ for the category
  of such descriptive monotone frames, as defined in Section 2.4.2 of {\it op.~\!cit}.
  The algebraic semantics of monotone modal logic is given by
  \emph{monotone Boolean algebra expansions} (BAMs),
  and we write $\cat{BAM}$ for the full subcategory of $\cat{NA}$ whose objects
  are BAMs \cite[Section 2.4.1]{HanKup04}.
  As a consequence of Theorem \ref{thm:D-Dsigma} we
  now obtain the duality from \cite[Theorem 2.11]{HanKup04}
  as an algebra/coalgebra duality.

\begin{cor}[Hansen-Kupke]
  $\cat{BAM} \equiv^{\op} \cat{HDMF}$.
\end{cor}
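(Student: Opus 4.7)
The plan is to instantiate the general machinery of Theorem \ref{thm:D-Dsigma} at $\Ax = \{\axM\}$. The crucial preliminary observation is that $\axM$ implies $\axConv$: from $\Box(v \wedge v') \to \Box v$ (a substitution instance of $\axM$) one immediately derives $\Box(v \wedge v') \wedge \Box(v \vee v'') \to \Box v$. Hence Theorem \ref{thm:D-Dsigma} guarantees that $\fun{D}^{\sigma}_{\axM}$ is a well-defined endofunctor on $\cat{Stone}$, and by the remark following the definition of $\fun{D}^{\sigma}_{\Ax}$ it is naturally isomorphic to $\fun{D}_{\axM}$. Consequently, $\cat{Coalg}(\fun{D}_{\axM}) \cong \cat{Coalg}(\fun{D}^{\sigma}_{\axM})$.

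Next I would chain the known equivalences. By the same routine argument as in Proposition \ref{prop:alg} (in the finitary setting, analogous to Theorem \ref{thm:alg-cana}), we have $\cat{Alg}(\fun{N}_{\axM}) \cong \cat{BAM}$; this was already recorded in Section \ref{subsec:problem}. Combining this with Corollary \ref{cor:fun-dual-stone} yields
\[
  \cat{BAM}
    \cong \cat{Alg}(\fun{N}_{\axM})
    \equiv^{\op} \cat{Coalg}(\fun{D}_{\axM})
    \cong \cat{Coalg}(\fun{D}^{\sigma}_{\axM}).
\]
It therefore suffices to identify $\cat{Coalg}(\fun{D}^{\sigma}_{\axM})$ with $\cat{HDMF}$.

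The identification on objects is essentially built into the definitions: a coalgebra $(\topo{X}, \gamma)$ for $\fun{D}^{\sigma}_{\axM}$ corresponds, as in Theorem \ref{thm:dnf-coalg}, to the tuple $(X, N, \fun{clp}\topo{X})$ with $N(x) = \gamma(x)$; the $\sigma$-tightness clause of Definition~\ref{def:sigma-descriptive} is exactly the condition cut out by $\fun{D}^{\sigma}\topo{X}$, and the requirement that $\gamma(x)$ be an $\axM$-subset encodes upward closure, so this matches the object-level definition used by Hansen and Kupke. The main obstacle, as in comparisons of this type, lies on the morphism side: one must unpack Hansen-Kupke's notion of morphism between descriptive monotone frames and check that it coincides, via $\gamma(x) = N(x)$, with general neighborhood frame morphisms whose dual naturality square $\fun{D}^{\sigma}_{\axM}f \circ \gamma = \gamma' \circ f$ holds. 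The key point is that $\sigma$-tightness reinstates sufficiently many non-admissible neighborhoods to make the forward-preservation half of \eqref{eq:nbhd-mor} automatic, so that Hansen-Kupke's condition reduces to preimage-clopenness plus the naturality square. This is a routine but necessary diagram chase, and once carried out the four isomorphisms compose to give $\cat{BAM} \equiv^{\op} \cat{HDMF}$.
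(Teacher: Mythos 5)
Your opening chain
\[
  \cat{BAM} \cong \cat{Alg}(\fun{N}_{\axM})
    \equiv^{\op} \cat{Coalg}(\fun{D}_{\axM})
    \cong \cat{Coalg}(\fun{D}^{\sigma}_{\axM})
\]
is exactly the paper's route, and your preliminary observation that \axM\ implies \axConv\ (so that Theorem~\ref{thm:D-Dsigma} applies) is correct and is indeed the hypothesis the paper is implicitly using. The gap is in your final step, where you claim the identification of $\cat{Coalg}(\fun{D}^{\sigma}_{\axM})$ with $\cat{HDMF}$ is ``essentially built into the definitions'' on objects and a ``routine diagram chase'' on morphisms. It is neither, and this is where all the actual work of the paper's proof lives. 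Hansen and Kupke's descriptive monotone frames are coalgebras for their functor $\fun{UpV}$, whose value at a point is an up-closed collection of \emph{closed} subsets of $\topo{X}$; an element of $\fun{D}^{\sigma}_{\axM}\topo{X}$ is instead a collection of \emph{arbitrary} subsets satisfying $\sigma$-tightness and upward closure. These are not the same object, so the object-level match you assert as definitional in fact requires a translation: the paper constructs the natural isomorphism $\nu_{\topo{X}} : \fun{D}^{\sigma}_{\axM}\topo{X} \to \fun{UpV}\topo{X}$ sending $W$ to its closed members, proves injectivity using that clopens are closed, and proves surjectivity by exhibiting the inverse $W \mapsto W^{\uparrow} = \{ e \mid \exists c \in W,\ c \subseteq e \}$ and checking $W^{\uparrow}$ lands in $\fun{D}^{\sigma}_{\axM}\topo{X}$.

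This single device also disposes of your ``routine but necessary diagram chase'' on morphisms, which you leave entirely unexecuted: rather than re-deriving Hansen--Kupke's morphism condition by hand, the paper cites their own result $\cat{HDMF} \cong \cat{Coalg}(\fun{UpV})$ (Theorem~3.12 of that reference) and then concludes from the functor isomorphism $\fun{D}^{\sigma}_{\axM} \cong \fun{UpV}$ that the coalgebra categories are isomorphic, with no further morphism-level verification needed. So your proposal is structurally on the right track but is missing the one genuinely non-trivial ingredient --- the comparison with $\fun{UpV}$ --- and replaces it with an object-level claim that, as literally stated, is false (the neighbourhood collections do not coincide; they correspond under $\nu$).
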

\begin{proof}
  One can prove that $\cat{BAM} \cong \cat{Alg}(\fun{N}_{\axM})$ in the same way
  as in Proposition \ref{prop:alg}\eqref{it:prop:alg-N}.
  Moreover, $\cat{HDMF} \cong \cat{Coalg}(\fun{UpV})$ \cite[Theorem 3.12]{HanKup04}, where
  $\fun{UpV}$ is the endofunctor on $\cat{Stone}$ defined in
  Definition 3.9 of {\it op.~\!cit.}
  There is a natural isomorphism $\nu : \fun{D}^{\sigma}_{\axM} \to \fun{UpV}$
  given on components by
  $$
    \nu_{\topo{X}} : \fun{D}^{\sigma}_{\axM}\topo{X} \to \fun{UpV}\topo{X} : W \mapsto \{ c \in W \mid c \text{ is closed in } \topo{X} \}.
  $$
  This is injective because if $W, V \in \fun{D}^{\sigma}_{\axM}$ are distinct,
  then there exists $a \in \fun{clp}\topo{X}$ such that $a \in W$ and $a \notin V$,
  or $a \notin W$ and $a \in V$. Since clopen sets are in particular closed,
  this implies that $\nu_{\topo{X}}(W) \neq \nu_{\topo{X}}(V)$, so $\nu_{\topo{X}}$ is injective.
  Moreover, it is surjective. To see this, observe that for all $W \in \fun{UpV}\topo{X}$
  the set $W^{\uparrow} = \{ e \subseteq \topo{X} \mid \exists c \in W \text{ such that } c \subseteq e \}$
  is in $\fun{D}^{\sigma}_{\axM}\topo{X}$ and satisfies $\nu_{\topo{X}}(W^{\uparrow}) = W$.
  So $\nu$ is a bijective continuous function, hence a homeomorphism.

  Combining Theorems~\ref{thm:fun-dual-stone}
  and~\ref{thm:D-Dsigma} yields that $\fun{N}_{\axM}$ is dual
  to $\fun{UpV}$. Together with known isomorphisms this yields
  $$
    \cat{BAM}
      \cong \cat{Alg}(\fun{N}_{\axM})
      \equiv^{\op} \cat{Coalg}(\fun{UpV})
      \cong \cat{HDMF},
  $$
  which proves the desired duality.
\end{proof}

\begin{cor}
  Let $\Ax$ be a collection of finitary one-step axioms that implies \axConv.
  Then
  $$
    \cat{Alg}(\fun{N}_{\Ax}) \equiv^{\op} \cat{Coalg}(\fun{D}^{\sigma}_{\Ax}).
  $$
\end{cor}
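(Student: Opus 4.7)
The plan is to reduce this corollary to the already-established duality
$\cat{Alg}(\fun{N}_{\Ax}) \equiv^{\op} \cat{Coalg}(\fun{D}_{\Ax})$
of Corollary~\ref{cor:fun-dual-stone} by showing that $\fun{D}^{\sigma}_{\Ax}$ and $\fun{D}_{\Ax}$ give rise to isomorphic categories of coalgebras. First I would invoke Theorem~\ref{thm:D-Dsigma}, which, given the hypothesis that $\Ax$ implies \axConv, guarantees that $\fun{D}^{\sigma}_{\Ax}$ is a well-defined endofunctor on $\cat{Stone}$. This is the only place where the convexity hypothesis enters: without it, the functorial action on morphisms may fail to land inside $\Ax$-subsets of $\fun{D}^{\sigma}_{\Ax}\topo{X}'$.

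Next I would produce a natural isomorphism $\eta : \fun{D}^{\sigma}_{\Ax} \to \fun{D}_{\Ax}$. The paper already records, just after the definition of $\fun{D}^{\sigma}\topo{X}$, that elements of $\fun{D}^{\sigma}\topo{X}$ are uniquely determined by the clopens they contain, and that $W \in \fun{D}^{\sigma}\topo{X}$ is an $\Ax$-subset precisely when $W \cap \fun{clp}\topo{X}$ is an $\Ax$-subset. I would turn these observations into a component
\[
  \eta_{\topo{X}} : \fun{D}^{\sigma}_{\Ax}\topo{X} \to \fun{D}_{\Ax}\topo{X}
    : W \mapsto W \cap \fun{clp}\topo{X},
\]
verify that each $\eta_{\topo{X}}$ is a homeomorphism (it is a continuous bijection between Stone spaces, since $\eta_{\topo{X}}^{-1}(\lbox a \cap \fun{D}_{\Ax}\topo{X}) = \lbox a \cap \fun{D}^{\sigma}_{\Ax}\topo{X}$ for every clopen $a$), and then check naturality by comparing the clopen preimages under a continuous map $f:\topo{X}\to\topo{X}'$: for $W \in \fun{D}^{\sigma}_{\Ax}\topo{X}$ and $a' \in \fun{clp}\topo{X}'$ we have $a' \in \eta_{\topo{X}'}(\fun{D}^{\sigma}_{\Ax}f(W))$ iff $f^{-1}(a') \in W$ iff $a' \in \fun{D}_{\Ax}f(\eta_{\topo{X}}(W))$.

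Finally I would use the standard fact that a natural isomorphism $\eta : \fun{T} \to \fun{T}'$ between endofunctors on a category induces an isomorphism of coalgebra categories $\cat{Coalg}(\fun{T}) \cong \cat{Coalg}(\fun{T}')$ (acting as $(\topo{X},\gamma) \mapsto (\topo{X}, \eta_{\topo{X}} \circ \gamma)$). Composing with Corollary~\ref{cor:fun-dual-stone} yields the chain
\[
  \cat{Alg}(\fun{N}_{\Ax})
    \equiv^{\op} \cat{Coalg}(\fun{D}_{\Ax})
    \cong \cat{Coalg}(\fun{D}^{\sigma}_{\Ax}),
\]
which is the desired dual equivalence. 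The main obstacle, already discharged in Theorem~\ref{thm:D-Dsigma}, is showing that $\fun{D}^{\sigma}_{\Ax}f$ preserves $\Ax$-subsets in the $\sigma$-sense; the remaining steps are formal, amounting to checking that the clopen-restriction map $W \mapsto W \cap \fun{clp}\topo{X}$ is natural and to applying the transport of coalgebra structure along a natural isomorphism.
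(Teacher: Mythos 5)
Your proposal is correct and follows exactly the paper's own argument: the paper's proof is a one-liner combining Corollary~\ref{cor:fun-dual-stone}, Theorem~\ref{thm:D-Dsigma}, and the natural isomorphism $\fun{D}_{\Ax} \cong \fun{D}^{\sigma}_{\Ax}$ (asserted in the text just before Theorem~\ref{thm:D-Dsigma}). You merely make explicit the clopen-restriction isomorphism and the standard transport of coalgebra structure, which the paper leaves implicit.
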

\begin{proof}
  Combine Corollary~\ref{cor:fun-dual-stone}, Theorem~\ref{thm:D-Dsigma},
  and the fact that $\fun{D}_{\Ax}$ is naturally isomorphic to $\fun{D}^{\sigma}_{\Ax}$
  to obtain 
  \[
  \cat{Alg}(\fun{N}_{\Ax}) \equiv^{\op} \cat{Coalg}(\fun{D}_{\Ax}) \cong \cat{Coalg}(\fun{D}^{\sigma}_{\Ax}). \qedhere
  \]
\end{proof}

\begin{cor}
  Let $\Ax$ be a collection of finitary one-step axioms that implies \axConv.
  If the restriction $\Sigma_{\Ax}$ of $\Sigma$ to $\cat{Coalg}(\fun{D}_{\Ax})$
  lands in $\cat{Coalg}(\fun{B}_{\Ax})$,
  then $\Sigma_{\Ax}$ is a functor.
\end{cor}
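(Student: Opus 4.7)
The plan is to factor $\Sigma_{\Ax}$ through the equivalence $\cat{Coalg}(\fun{D}_{\Ax}) \cong \cat{Coalg}(\fun{D}^{\sigma}_{\Ax})$ provided by Theorem~\ref{thm:D-Dsigma}, and then apply a forgetful functor that is trivially functorial. Since $\Ax$ implies $\axConv$, Theorem~\ref{thm:D-Dsigma} ensures that $\fun{D}^{\sigma}_{\Ax}$ is an endofunctor on $\cat{Stone}$ naturally isomorphic to $\fun{D}_{\Ax}$, which induces an isomorphism of coalgebra categories
\[
  \Phi : \cat{Coalg}(\fun{D}_{\Ax}) \cong \cat{Coalg}(\fun{D}^{\sigma}_{\Ax}).
\]
Unfolding the definition of $\fun{D}^{\sigma}_{\Ax}$, a coalgebra for this functor is precisely a $\sigma$-descriptive frame validating $\Ax$, and a $\fun{D}^{\sigma}_{\Ax}$-coalgebra morphism is a continuous function $f$ satisfying $e' \in \gamma'(f(x))$ iff $f^{-1}(e') \in \gamma(x)$ for \emph{all} $e' \subseteq X'$, i.e., a neighborhood morphism between the full underlying neighborhood frames.

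Next I introduce the forgetful assignment $\fun{U}_{\Ax}$ sending a $\fun{D}^{\sigma}_{\Ax}$-coalgebra $(\topo{X}, \gamma)$ to the neighborhood frame $(X, \gamma)$. On morphisms, $\fun{U}_{\Ax}$ acts as the identity: any $\fun{D}^{\sigma}_{\Ax}$-coalgebra morphism is, by the previous observation, already a neighborhood morphism between the underlying neighborhood frames. Thus $\fun{U}_{\Ax}$ is automatically a functor to $\cat{NF}$ once it is well defined on objects; and if we show that $\Sigma_{\Ax}$ agrees with $\fun{U}_{\Ax} \circ \Phi$ on objects, then the hypothesis of the corollary guarantees that $\fun{U}_{\Ax}$ lands in $\cat{Coalg}(\fun{B}_{\Ax})$, making the composite a functor into $\cat{Coalg}(\fun{B}_{\Ax})$.

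The remaining task is therefore to verify that $\Sigma_{\Ax} = \fun{U}_{\Ax} \circ \Phi$ both on objects and on morphisms. On objects, unravelling $\Sigma = \fun{at} \circ \sigma \circ \fun{clp}$ and using that the canonical extension of $\fun{clp}\topo{X}$ is $\wp X$, both sides send a descriptive frame $(X, N, \fun{clp}\topo{X})$ to the neighborhood frame $(X, N^{\sigma})$. On morphisms, both $\Sigma_{\Ax}(f)$ and $\fun{U}_{\Ax}(\Phi(f))$ act as the underlying set-theoretic map $f$, since $\fun{at}$ applied to the complete homomorphism between powerset CABAs induced by $f$ returns $f$ itself.

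The main obstacle is the bookkeeping in the last verification: pinning down $\Phi$ concretely enough to match $\Sigma_{\Ax}$ with $\fun{U}_{\Ax} \circ \Phi$ on morphisms. The substantive content---namely, functoriality of $\fun{D}^{\sigma}_{\Ax}$ under the convexity hypothesis, which is exactly what lets the $\sigma$-extension of a morphism land in the right place---has already been carried out in Theorem~\ref{thm:D-Dsigma}; the remainder is a routine unpacking of the definitions of $\Sigma$, $\sigma$-extension, and coalgebra morphisms.
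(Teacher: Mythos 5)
Your proposal is correct and follows essentially the same route as the paper: both factor $\Sigma_{\Ax}$ as the isomorphism $\cat{Coalg}(\fun{D}_{\Ax}) \cong \cat{Coalg}(\fun{D}^{\sigma}_{\Ax})$ (available by Theorem~\ref{thm:D-Dsigma} under the convexity hypothesis) followed by the forgetful functor $\fun{U}_{\Ax}$, which is automatically functorial because $\fun{D}^{\sigma}_{\Ax}$-coalgebra morphisms are already neighborhood morphisms of the full underlying frames. Your extra verification that $\Sigma_{\Ax} = \fun{U}_{\Ax}\circ\Phi$ on objects and morphisms is a welcome explicit check of what the paper leaves implicit.
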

\begin{proof}
  If $\Sigma_{\Ax}$ is well defined, then so is $\fun{U}_{\Ax}$, and
  hence $\fun{U}_{\Ax}$ is a functor (as we pointed out after Remark \ref{rem:pi-descr}).
  Since $\Sigma_{\Ax}$ can be obtained as the composition of functors
  $$
    \begin{tikzcd}
      \cat{Coalg}(\fun{D}_{\Ax})
            \arrow[r, "\cong"]
            \arrow[rr, bend right=20, "\Sigma_{\Ax}" swap]
        & \cat{Coalg}(\fun{D}_{\Ax}^\sigma)
            \arrow[r, "\fun{U}_{\Ax}"]
        & \cat{Coalg}(\fun{B}_{\Ax})
    \end{tikzcd}
  $$
  it is a functor as well.
\end{proof}

  We can now obtain relatively easily the commuting diagrams relating
  the J\'{o}nsson-Tarski type and Thomason type dualities for
  a large class of modal logics.
  These are analogues of the diagrams for basic normal modal logic depicted
  in~\eqref{eq:square-normal}.
  We formulate this as a general statement, and then instantiate it to
  basic monotone modal logic.
  Observe that this still requires a preservation result reminiscent
  of a Sahlqvist theorem, proving that validity of axioms on a $\sigma$-descriptive neighborhood frame implies validity of the axioms on the underlying neighborhood frame (see, e.g.,~\cite{SamVac89} or \cite[Section 5.6]{BRV01}). We leave the search for such theorems
  to future research.

\begin{thm}\label{thm:overview}
  Let $\Ax$ be a collection of finitary one-step axioms that implies \axConv.
  Suppose that for every Stone space $\topo{X}$,
  every $\Ax$-neighborhood $W$ of $\fun{D}\topo{X}$ is also an
  $\Ax$-neighborhood of $\fun{B}X$, where $X$ is the set underlying $\topo{X}$.
  Then the following diagram commutes
  $$
    \begin{tikzcd}
      \cat{Alg}(\fun{N}_{\Ax})
            \arrow[r, -, "\equiv^{\op}"]
            \arrow[d, "\sigma" swap]
        & \cat{Coalg}(\fun{D}_{\Ax})
            \arrow[d, "\Sigma_{\Ax}"]
            \arrow[r, -, "\cong"]
        & \cat{Coalg}(\fun{D}^{\sigma}_{\Ax})
            \arrow[dl, bend left=20, "\fun{U}_{\Ax}"] \\
      \cat{Alg}(\fun{L}_{\Ax})
            \arrow[r, -, "\equiv^{\op}"]
        & \cat{Coalg}(\fun{B}_{\Ax})
        &
    \end{tikzcd}
  $$
\end{thm}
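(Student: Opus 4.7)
The plan is to verify in turn the three commutative pieces of the diagram: the top row of horizontal arrows, the right triangle, and the left square. The top row is already established: the equivalence $\cat{Alg}(\fun{N}_{\Ax}) \equiv^{\op} \cat{Coalg}(\fun{D}_{\Ax})$ is Corollary~\ref{cor:fun-dual-stone}, and since $\Ax$ implies \axConv, Theorem~\ref{thm:D-Dsigma} gives $\fun{D}^{\sigma}_{\Ax}$ as an endofunctor naturally isomorphic to $\fun{D}_{\Ax}$, lifting to the displayed isomorphism of coalgebra categories.

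For the right triangle, I would first observe that $\fun{U}_{\Ax}$ is well-defined on objects: a $\fun{D}^{\sigma}_{\Ax}$-coalgebra corresponds to a $\sigma$-descriptive neighborhood frame $(X, N, A)$ whose $W = N(x)$, restricted to clopens, is an $\Ax$-subset of $\fun{D}\topo{X}$, so by the preservation hypothesis $N(x)$ is an $\Ax$-subset of $\fun{B}X$, placing $(X, N)$ in $\cat{Coalg}(\fun{B}_{\Ax})$. Since morphisms of $\sigma$-descriptive frames are automatically neighborhood morphisms of the underlying frames, $\fun{U}_{\Ax}$ is a functor. Commutativity of the triangle is then immediate by construction: the top-right isomorphism sends $(X, N, A)$ to $(X, N^{\sigma}, A)$, and $\fun{U}_{\Ax}$ drops $A$ to yield $(X, N^{\sigma}) = \Sigma_{\Ax}(X, N, A)$.

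For the left square, object-level commutativity (up to natural isomorphism) is exactly the content of the proposition in Section~5.1 exhibiting $\Sigma = \fun{at} \circ \sigma \circ \fun{clp}$ on objects, combined with the Stone and Tarski dualities implementing the two horizontal equivalences via $\fun{uf}/\fun{clp}$ and $\fun{at}/\wp$. The main obstacle is that $\sigma$ must actually be a functor on the subcategory $\cat{Alg}(\fun{N}_{\Ax})$; as emphasized at the start of Section~\ref{sec:canonical}, $\sigma$ fails to be functorial on all of $\cat{NA}$, but the convexity hypothesis is designed precisely to remove this obstruction. I would prove functoriality directly: given a homomorphism $h : (B, \dbox) \to (B', \dbox')$ between $\Ax$-algebras, one checks that $h^{\sigma} \circ \dbox^{\sigma} = (\dbox')^{\sigma} \circ h^{\sigma}$ by showing that convexity forces the closed/open sandwiches $[c,d]$ used in defining $\dbox^{\sigma}$ to be preserved under the complete homomorphism $h^{\sigma}$, so the suprema and infima in the definitions match up. Once $\sigma$ is established as a functor, naturality of the duality isomorphisms in Corollaries~\ref{cor:fun-dual} and~\ref{cor:fun-dual-stone} upgrades the object-level commutativity to commutativity of the full diagram.
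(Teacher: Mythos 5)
Your decomposition into the top row, the right triangle, and the left square is the right one, and your handling of the first two pieces matches what the paper intends: the top row is Corollary~\ref{cor:fun-dual-stone} together with Theorem~\ref{thm:D-Dsigma} and the natural isomorphism $\fun{D}_{\Ax} \cong \fun{D}^{\sigma}_{\Ax}$, and the right triangle is exactly the r\^{o}le of the preservation hypothesis (it makes $\fun{U}_{\Ax}$ well defined on objects, after which functoriality and commutativity of the triangle are automatic, as in the corollary stating that $\Sigma_{\Ax}$ is a functor). The paper gives no written proof of this theorem; it is meant to be assembled from precisely these ingredients.

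Where you diverge is the left square, and there your argument has a gap. The direct verification of $h^{\sigma} \circ \dbox^{\sigma} = (\dbox')^{\sigma} \circ h^{\sigma}$ is asserted at exactly the point where it is hard: since $h^{\sigma}$ is a complete homomorphism sending closed elements to closed elements and open to open, it automatically ``preserves the sandwiches,'' so that is not where convexity enters. The real issue is that applying $h^{\sigma}$ yields $\bigwedge\{\dbox' h(b) \mid b \in B,\ c \le b \le d\}$, whereas $(\dbox')^{\sigma}(h^{\sigma}x)$ involves $\bigwedge\{\dbox' b' \mid b' \in B',\ h^{\sigma}c \le b' \le h^{\sigma}d\}$, and $\{h(b) \mid c \le b \le d\}$ is in general a proper subset of $[h^{\sigma}c, h^{\sigma}d] \cap B'$. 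Bridging that discrepancy is precisely what convexity is for, and it requires the interpolation-by-compactness argument that the paper carries out (dually, on the topological side) in the proof of Theorem~\ref{thm:D-Dsigma}, where clopens $b_1', b_2'$ are constructed with $f^{-1}(b_1') \subseteq a \subseteq f^{-1}(b_2')$; your sketch skips this step entirely. Moreover, the direct proof is redundant: having already invoked Theorem~\ref{thm:D-Dsigma}, you get $\Sigma_{\Ax}$ as a functor for free, and the left vertical arrow is then the transport of $\Sigma_{\Ax}$ across the two horizontal dual equivalences. What remains is only to check that this transported functor agrees with the concrete $\sigma$-extension --- on objects this is the unlabelled proposition in Section~\ref{sec:canonical} relating $\sigma$ and $\Sigma$, and on morphisms it is the standard fact that the canonical extension of a Boolean homomorphism $h$ is its double dual $\wp(\fun{uf}(h))$. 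Your closing sentence gestures at this transport, but as written it presupposes the functoriality of $\sigma$ that you were trying to establish rather than deriving it from the rest of the diagram.
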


\begin{exa}
  As we have seen, examples of such $\Ax$ are $\Ax = \{ \axConv \}$
  and $\Ax = \{ \axM \}$.
  The latter yields the following commuting diagrams, where all
  edges are functors:
  $$
    \begin{tikzcd}[column sep=1.5em]
      \cat{Alg}(\fun{N}_{\axM})
            \arrow[r, -, "\equiv^{\op}"]
            \arrow[d, "\sigma" swap]
        & [.5em]
          \cat{Coalg}(\fun{D}_{\axM})
            \arrow[r, -, "\cong"]
            \arrow[d, "\Sigma_{\axM}"]
        & \cat{Coalg}(\fun{D}^{\sigma}_{\axM})
            \arrow[ld, bend left=20, "\fun{U}_{\axM}"]
        & [-.5em] 
          \cat{BAM}
            \arrow[r, -, "\equiv^{\op}"]
            \arrow[d, "\sigma" swap]
        & \cat{Coalg}(\fun{D}_{\axM})
            \arrow[r, -, "\cong"]
            \arrow[d, "\Sigma_{\axM}"]
        & \cat{HDMF}
            \arrow[dl, bend left=20, "\fun{U}"] \\
      \cat{Alg}(\fun{L}_{\axM})
            \arrow[r, -, "\equiv^{\op}"]
        & \cat{Coalg}(\fun{B}_{\axM})
        &
        & \cat{CABAM}
            \arrow[r, -, "\equiv^{\op}"]
        & \cat{MF}
        &
    \end{tikzcd}
  $$
\end{exa}

\section{Conclusions}

  We have given a general coalgebraic approach to Thomason type dualities for
  neighborhood frames and J\'{o}nsson-Tarski type dualities for neighborhood
  algebras. Furthermore, we have investigated the relationship between the two
  types of dualities via the theory of canonical extensions.
  We list several potential avenues for future research.
  
   \vspace{2mm}
   
  \begin{description}[itemsep=2mm]
    \item[Infinitary modal logic] 
          The Thomason type dualities from Sections \ref{sec:thomason}
          and \ref{sec:harvest} provide dualities for algebraic and geometric
          semantics for \emph{infinitary} modal logic. While some interesting
          investigations have been conducted by Baltag \cite{Baltag98PhD,Baltag98},
          obtaining a more general coalgebraic approach towards infinitary modal
          logic (also using the results of this  paper) remains open. 

    \item[Endofunctors as left adjoints]
          In \cite{BezCarMor20} the functor $\fun{H}$ is obtained as the
          left adjoint of the forgetful functor from $\cat{CABA}$ to
          $\cat{CSL}$---the category of complete meet-semilattices.
          In a similar way we can obtain the functor whose coalgebras are
          monotone neighborhood frames as the left adjoint of the forgetful
          functor $\cat{CABA} \to \cat{Pos}$,
          and the functor whose coalgebras are filter frames arises as the
          left adjoint of $\cat{CABA} \to \cat{SL}$, where $\cat{SL}$ is
          the category of (not necessarily complete) meet-semilattices.
          An interesting direction for future work is to investigate the
          connection between presentations of classes of algebras via adjoints
          of forgetful functors and via one-step axioms of infinitary logic.
 
    \item[When are $\Sigma$ and $\Pi$ functors? A different approach]
          In Section \ref{subsec:ce-fun} we proved that in presence of
          the convexity axiom, $\Sigma$ defines a functor.
          Another approach towards the same goal is by modifying the morphisms
          between (descriptive) neighborhood frames.
          If we define a $\sigma$-morphism to be a function
          between neighborhood frames that satisfies only the left-to-right
          implication in \eqref{eq:nbhd-mor},
          then replacing neighborhood morphisms with $\sigma$-morphisms ensures
          that $\Sigma$ is a functor without adding any additional axioms.
          Considering such morphisms is natural for a number of reasons.
          For example, in the special case of topological spaces they simply
          correspond to continuous maps. They also generalize the stable
          morphisms studied in \cite{BBI16, BBI18} to the setting of (descriptive)
          neighborhood frames. We plan to investigate this topic further in a
          sequel to this paper. 
  \end{description}

\section*{Acknowledgements}
  We are grateful to the referees for careful reading and
  useful comments that have improved the presentation of the paper.

\bibliographystyle{alphaurl}
\bibliography{n-frm-biblio.bib}

\end{document}